\newtheorem{theorem}{Theorem}[section]
\newtheorem{definition}{Definition}[section]
\newtheorem{proposition}{Proposition}[section]
\newtheorem{lemma}{Lemma}[section]
\newtheorem{assump}{Assumption}
\newtheorem{remark}{Remark}[section]
\numberwithin{equation}{section}
\renewcommand{\P}{\mathbb{P}}
\newcommand{\R}{\mathbb{R}}
\newcommand{\E}{\mathbb{E}}
\newcommand{\cE}{\mathcal{E}}
\newcommand{\F}{\mathcal{F}}
\newcommand{\cH}{\mathcal{H}}
\newcommand{\fp}{\mathfrak{p}}
\newcommand{\cG}{\mathcal{G}}
\newcommand{\cP}{\mathcal{P}}
\newcommand{\eps}{\varepsilon}
\newcommand{\ind}{\mathbbm{1}}
\newcommand{\nada}[1]{}
\definecolor{gb}{rgb}{0, 0.2, 0.8}
\title{Mortality and Healthcare: a Stochastic Control Analysis under Epstein-Zin Preferences}
\author{Joshua Aurand\thanks{
		University of Colorado, Department of Applied Mathematics, Boulder, CO 80309-0526, USA, email: \texttt{joshua.aurand@colorado.edu}.} \and Yu-Jui Huang\thanks{
		University of Colorado, Department of Applied Mathematics, Boulder, CO 80309-0526, USA, email: \texttt{yujui.huang@colorado.edu}. Partially supported by National Science Foundation (DMS-1715439) and the University of Colorado (11003573).}}
\begin{document}
	\maketitle
	
\begin{abstract}
This paper studies optimal consumption, investment, and healthcare spending under Epstein-Zin preferences. Given consumption and healthcare spending plans, Epstein-Zin utilities are defined over an agent's random lifetime, partially controllable by the agent as healthcare reduces mortality growth. To the best of our knowledge, this is the first time Epstein-Zin utilities are formulated on a controllable random horizon, via an infinite-horizon backward stochastic differential equation with superlinear growth. A new comparison result is established for the uniqueness of associated utility value processes. In a Black-Scholes market, the stochastic control problem is solved through the related Hamilton-Jacobi-Bellman (HJB) equation. The verification argument features a delicate containment of the growth of the controlled morality process, which is unique to our framework, relying on a combination of probabilistic arguments and analysis of the HJB equation. In contrast to prior work under time-separable utilities, Epstein-Zin preferences facilitate calibration. The model-generated mortality closely approximates actual mortality data in the US and UK; moreover, the efficacy of healthcare can be calibrated  and compared between the two countries. 
\end{abstract}
	
\textbf{MSC (2010):} 
91G10,   	
93E20. 
\smallskip

\textbf{JEL:}
G11, 	
I12	
\smallskip

\textbf{Keywords:} Consumption-investment problem, Healthcare, Mortality, Gompertz' law, Epstein-Zin utilities, Random horizons, Backward stochastic differential equations.

\section{Introduction}\label{sec:intro}
Mortality, the probability that someone alive today dies next year, exhibits an approximate exponential growth with age, as observed by Gompertz \cite{Gompertz25} in 1825. 
Despite the steady decline of mortality at all age groups {\it across} different generations, the exponential growth of mortality {\it within} each generation has remained remarkably stable, which is called the Gompertz law. Figure~\ref{fig:cal_US} displays this clearly: in the US, mortality of the cohort born in 1900 and that of the cohort born in 1940 grew exponentially at a similar rate; the latter is essentially shifted down from the former.

\begin{figure}[h]
\centering
\includegraphics[width=.65\linewidth]{{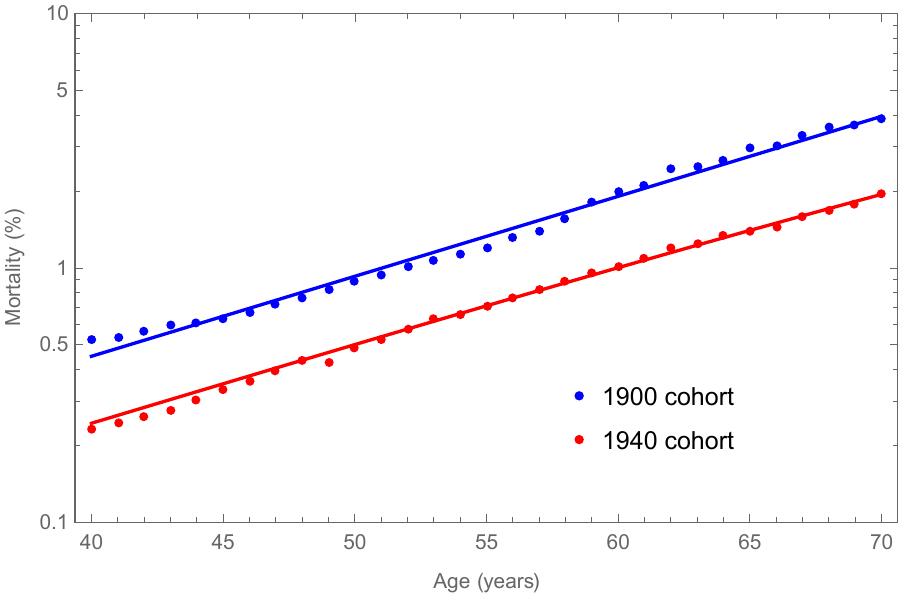}}
\caption{\small Mortality rates (log scale) at adults' ages for the cohorts born in 1900 and 1940 in the US. The dots are actual data (Berkeley Human Mortality Database) and the lines are model-implied mortality curves.}
\label{fig:cal_US}
\end{figure}

At the intuitive level, this ``shift down'' 
of mortality across generations can be ascribed to continuous improvement of healthcare and accumulation of wealth. 
Understanding precisely how this ``shift down'' materializes 
demands careful modeling in which wealth evolution, healthcare choices, and the resulting mortality are all {\it endogenous}. Standard models of consumption and investment do not seem to serve the purpose: the majority, e.g. \cite{Yaari1965}, \cite{Richard1975}, \cite{rosen1988value}, and \cite{ShepardZeckhauser1984}, consider no more than exogenous mortality, leaving no room for healthcare.\footnote{As an exception, the literature on health capital, initiated by \cite{Grossman1972}, considers endogenous healthcare. Despite its development towards more realistic models, e.g. \cite{EhrlichChuma1990}, \cite{Ehrlich2000}, \cite{Yogo2016}, \cite{HugonnierPelgrinSt-Amour2012}, \cite{hall2007value}, the Gompertz law remains largely absent. 
} 

Recently, Guasoni and Huang \cite{Huang19} directly modeled the effect of healthcare on mortality: healthcare reduces Gompertz' natural growth rate of mortality, through an {\it efficacy function} that characterizes the effect of healthcare spending in a society. Healthcare, as a result, indirectly increases utility from consumption accumulated over a longer lifetime. 
Under the constant relative risk aversion (CRRA) utility function $U(x) = \frac{x^{1-\gamma}}{1-\gamma}$, $0<\gamma<1$, an optimal strategy of consumption, investment, and healthcare spending is derived in \cite{Huang19}, where the constraint $0<\gamma<1$ is justified by interpreting $1/\gamma$ as an agent's {\it elasticity of intertemporal substitution} (EIS). Specifically, to  model mortality endogenously, we need to be cautious of potential preference for death over life. To avoid this, \cite{Huang19} assumes that an agent can leave a fraction $\zeta\in (0,1]$, not necessarily all, of his wealth at death to beneficiaries,  reflecting the effect of inheritance and estate taxes. It is shown in \cite{Huang19} that the optimization problem is ill-posed for $\gamma>1$. Indeed, with $\gamma>1$, or EIS less than one, the income effect of future loss of wealth at death is so substantial that the agent reduces current consumption to zero, leading to the ill-posedness; see below \cite[Proposition 3.2]{Huang19} for details.

Despite the progress in \cite{Huang19}, 
the artificial relation that EIS is the reciprocal of {\it relative risk aversion}, forced by CRRA utility functions, significantly restricts its applications. 
Although a preliminary calibration was carried out in \cite[Section 5]{Huang19}, it was not based on the full-fledged model in \cite{Huang19}, but a simplified version without any risky asset. Indeed, once a risky asset is considered, it is unclear whether $\gamma$ should be calibrated to relative risk aversion or EIS. More crucially, empirical studies largely reject relative risk aversion and  EIS being reciprocals to each other: it is widely accepted that EIS is larger than one (see e.g. \cite{Bansal04}, \cite{Bansal07}, \cite{Bhamra10}, and \cite{Benzoni11}), while numerous estimates of relative risk aversion are also larger than one (see e.g. \cite{Vissing03}, \cite{Bansal04}, and \cite{Hansen07}).  

In this paper, we investigate optimal consumption, investment, and healthcare spending under preferences of Epstein-Zin type, which disentangle relative risk aversion (denoted by $0<\gamma\neq 1$) and EIS (denoted by $\psi>0$). In particular, we impose throughout the paper  
\begin{assump}\label{specification}
$\psi >1$ and $\gamma>{1}/{\psi}$. 
\end{assump}
\noindent This specification implies a preference for early resolution of uncertainty (as explained in \cite{Skiadas98}), and conforms to empirical estimations mentioned above. 

Our Epstein-Zin utility process has several distinctive features. First, it is defined on a random horizon $\tau$, the death time of an agent. Prior studies on Epstein-Zin utilities focus on a fixed-time horizon; see e.g. \cite{DuffieLions}, \cite{Schroder96}, \cite{Kraft13},  \cite{Seifried16}, \cite{Kraft17}, and \cite{Xing}. To the best our knowledge, random-horizon Epstein-Zin utilities are developed for the first time in Aurand and Huang \cite{AH19}, where the horizon is assumed to be a stopping time adapted to the market filtration. Our studies complement \cite{AH19}, by allowing for a stopping time (i.e. the death time) that need not depend on the financial market. Second, the random horizon $\tau$ is {\it controllable}: one slows the growth of mortality via healthcare spending, which in turn changes the distribution of $\tau$. Note that a controllable random horizon was considered in a few prior studies, e.g. \cite{Karatzas00} and \cite{DL10}, but all under time-separable utilities. Third, to formulate our Epstein-Zin utilities, we need not only a given consumption stream $c$ (as in the literature), but also a specified healthcare spending process $h$. Given the pair $(c,h)$, the Epstein-Zin utility is defined as the right-continuous process $\widetilde V^{c,h}$ that satisfies a random-horizon dynamics (i.e. \eqref{Vtilde_Utility} below), with a jump at time $\tau$. Thanks to techniques of filtration expansion, we decompose $\widetilde V^{c,h}$ as a function of $\tau$ and a process $V^{c,h}$ that solves an infinite-horizon backward stochastic differential equation (BSDE) under {\it solely} the market filtration; see Proposition~\ref{prop:decompose wtV}. That is, the randomness from death and from the market can be dealt with separately. By deriving a comparison result for this infinite-horizon BSDE (Proposition~\ref{prop:comparison}), we are able to uniquely determine the Epstein-Zin utility $\widetilde V^{c,h}$ for any $k$-admissible strategy $(c,h)$ (Definition~\ref{k-admissible}); see Theorem~\ref{thm:Existence and Uniqueness}. 

In a Black-Scholes financial market, we maximize the time-0 Epstein-Zin utility $\widetilde V^{c,h}_0$ over {\it permissible} strategies $(c,\pi,h)$ of consumption, investment, and healthcare spending (Definition~\ref{cP}). First, we derive the associated Hamilton-Jacobi-Bellman (HJB) equation, from which a candidate optimal strategy $(c^*,\pi^*,h^*)$ is deduced. Taking advantage of a scaling property of the HJB equation, we reduce it to a nonlinear ordinary differential equation (ODE), for which a unique classical solution exists on strength of the Perron method construction in \cite{Huang19}. This, together with a general verification theorem (Theorem~\ref{Verification}), yields the optimality of $(c^*,\pi^*,h^*)$; see Theorem~\ref{Thm:AH}. 

Compared with classical Epstein-Zin utility maximization, the additional controlled mortality process $M^h$ in our case adds nontrivial complexity. 
In deriving the comparison result Proposition~\ref{prop:comparison}, standard Gronwall's inequality cannot be applied due to the inclusion of $M^h$. As shown in Appendix~\ref{subsec:proof of comparison}, a transformation of processes, as well as the use of both forward and backward Gronwall's inequalities, are required to circumvent this issue. On the other hand, in carrying out verification arguments, 
we need to contain the growth of $M^h$ to ensure that the Epstein-Zin utility is well-defined. This is done through a combination of probabilistic arguments and analysis of the aforementioned nonlinear ODE; see Appendix~\ref{subsec:proof of Thm:AH} for details. 

Our model is calibrated to mortality data in the US and UK. 
Under the simplifying assumption that the cohort born in 1900 had no healthcare and the cohort born in 1940 had full access to healthcare, we generate an endogenous mortality curve for the 1940 cohort. Figure \ref{fig:cal_US} shows that the model-implied mortality (red line) closely reproduces actual data in the US (red dots). Our model performs well also for the UK data; see Figure~\ref{fig:calibration}. We also compute the optimal healthcare spending across different ages (Figure~\ref{fig: healthcarespending}) and calibrate the efficacy of healthcare in these two countries (Figure \ref{fig:CalibratedEfficiency}).  



The rest of the paper is organized as follows. Section~\ref{sec:EZ} establishes Epstein-Zin utilities over one's  random lifetime, with healthcare spending incorporated. Section~\ref{sec:PF} introduces the problem of optimal consumption, investment, and healthcare spending under Epstein-Zin preferences, and derives the related HJB equation and a general verification theorem. Section~\ref{sec:Results} characterizes optimal consumption, investment, and healthcare spending in three different settings of aging and access to healthcare. Section~\ref{sec:calibration} calibrates our model to mortality data in the US and UK. Most proofs are collected in Appendix~\ref{sec:proofs}.


\section{Epstein-Zin Preferences with Healthcare Spending}\label{sec:EZ}
Let $(\Omega,\F, \P)$ be a probability space equipped with a filtration $\mathbb F= (\F_t)_{t\ge 0}$ that satisfies the usual conditions. 
Consider another probability space $(\Omega',\F', \P')$ supporting a random variable $Z$ that has an exponential law 
\begin{equation}\label{exp law}
\P'(Z>z)= e^{-z},\quad z\ge0. 
\end{equation}
We denote by $(\bar\Omega,\bar \F, \bar \P)$ the product probability space $(\Omega\times\Omega', \F\times\F', \P\times\P')$.  The expectations taken under $\P$, $\P'$, and $\bar \P$ will be denoted by $\E$, $\E'$, and $\bar{\E}$, respectively. 

Consider an agent who obtains utility from consumption, partially determines his lifespan through healthcare spending, and has bequest motives to leave his wealth at death to beneficiaries. 
Specifically, we assume that the mortality rate process $M$ of the agent evolves as 
\begin{equation}\label{Mortality}
dM_{t} = (\beta-g(h_{t}))M_{t}^{}dt,\quad M_{0} = m>0,
\end{equation}
where $h=(h_t)_{t\ge 0}$, a nonnegative $\mathbb F$-progressively measurable process, represents the proportion of wealth spent on healthcare at each time $t$, while $g:\R_+\to\R_+$ is the {\it efficacy function} that prescribes how much the natural growth rate of mortality $\beta>0$ is reduced by healthcare spending $h_t$.   
For any $\bar\omega = (\omega,\omega')\in\bar\Omega$, the random lifetime of the agent is formulated as
\begin{equation}\label{tau}
\tau(\bar\omega) := \inf\left\{t\ge 0 : \int_{0}^{t}M_{s}^{h}(\omega)ds \ge {Z}(\omega') \right\}.
\end{equation}
The information available to the agent is then defined as $\mathbb G = (\cG_t)_{t\ge 0}$ with
\begin{equation}\label{cG}
\cG_{t} := \F_{t}\vee\cH_t,\quad \hbox{where}\quad \cH_t := \sigma\left(\ind_{\{\tau\le u\}}, u\in[0,t]\right). 
\end{equation}
That is, at any time $t$, the agent knows the information contained in $\F_t$ and whether he is still alive (i.e. whether $\tau>t$ holds); he has no further information of $\tau$, as the random variable $Z$ is inaccessible to him. Finally, we assume that the agent can leave a fraction $\zeta\in (0,1]$, not necessarily all, of his wealth at death to beneficiaries, reflecting the effect of inheritance and estate taxes.

\begin{remark}
As a modeling simplification mainly for the purpose of tractability, the controlled mortality \eqref{Mortality} (borrowed from \cite{Huang19}) assumes that healthcare expenses relative to wealth, rather than in absolute terms, affect mortality growth. 
\end{remark}

Now, let us define a non-standard Epstein-Zin utility process that incorporates healthcare spending. First, 
recall the Epstein-Zin aggregator $f:\R_+\times\R\to \R$ given by
\begin{equation}\label{EZ_aggregator}
\begin{split}
f(c,v)&:=\delta\frac{(1-\gamma)v}{1-\frac{1}{\psi}}\left(\bigg(\frac{c}{((1-\gamma)v)^{\frac{1}{1-\gamma}}}\bigg)^{1-\frac{1}{\psi}}-1\right)\\
&=\delta\frac{c^{1-\frac{1}{\psi}}}{1-\frac{1}{\psi}}\big((1-\gamma)v\big)^{1-\frac{1}{\theta}}-\delta\theta v,\qquad\text{with}\quad \theta := \frac{1-\gamma}{1-\frac{1}{\psi}},
\end{split}
\end{equation}
where $\gamma$ and $\psi$ represent the agent's {relative risk aversion} and EIS, respectively, as stated in Section~\ref{sec:intro}. Given a consumption stream $c=(c_t)_{t\ge 0}$, assumed to be nonnegative $\mathbb F$-progressively measurable, and a healthcare spending process $h=(h_t)_{t\ge 0}$ introduced below \eqref{Mortality}, we define the {\it Epstein-Zin utility on the random horizon $\tau$} to be a $\mathbb G$-adapted semimartingale 
$(\widetilde{V}_{t}^{c,h})_{t\ge 0}$ satisfying 
\begin{equation}\label{Vtilde_Utility}
\widetilde{V}_{t}^{c,h} = \bar{\E}_t\left[\int_{t\wedge\tau}^{T\wedge\tau}f(c_s,\widetilde{V}_{s}^{c,h})ds + \zeta^{1-\gamma}\widetilde{V}_{\tau-}^{c,h}\ind_{\{\tau\le T\}} +  \widetilde{V}_{T}^{c,h}\ind_{\{\tau>T\}} \right],\text{ for all }0\le t\le T<\infty,
\end{equation}
where we use the notation $\bar \E_{t}\left[\cdot\right]=\bar\E\left[\cdot|\cG_{t}\right]$. 
In \eqref{Vtilde_Utility}, we assert that the loss of wealth at death results in a decreased bequest utility, by a factor of $\zeta^{1-\gamma}$. This assertion will be made clear and justified in Section~\ref{sec:Results}, where a financial model is in place; see Remark~\ref{rem:confirms} particularly. 

Before solving \eqref{Vtilde_Utility} for $(\widetilde V^{c,h}_t)_{t\ge 0}$, we introduce a general definition of infinite-horizon BSDEs.

\begin{definition}\label{def: BSDE}
Let $V$ be an $\mathbb F$-progressively measurable process satisfying $\E[\sup_{s\in[0,t]}|V_{s}|]<\infty$ for all $t\ge 0$. For any 
$G:\Omega\times\R_{+}\times\R\to \R$, we say $V$ is a solution to the infinite-horizon BSDE  
\begin{align}\label{BSDE}
	d V_{t}= -G(\omega,t,V_t) dt + d\mathscr{M}_{t},
\end{align}
if the following conditions hold: (i) $\big(G(\cdot,t,V_t(\cdot))\big)_{t\ge 0}$ is $\mathbb F$-progressively measurable, and (ii) for any $T>0$ there exists an $\mathbb{F}$-martingale $(\mathscr{M}_t)_{t\in[0,T]}$ such that \eqref{BSDE} holds for $0\le t\le T$. 
\end{definition} 

\begin{remark}
Without a terminal condition, \eqref{BSDE} can have infinitely many solutions. Indeed, as long as $G$ admits proper monotonicity, there are solutions to \eqref{BSDE} that satisfy ``$\lim_{t\to\infty}V_{t} = \xi$ for $\F$-measurable random variable $\xi$'' or ``$\lim_{t\to\infty}\E\left[e^{\rho t}V_{t}\right]\rightarrow 0$ for $\rho>0$''; see \cite{Briand03} and \cite{Fan15}. 
We will address this non-uniqueness issue by enforcing appropriate ``terminal behavior''; see Remark~\ref{rem:address}.
\end{remark}

The next result shows that the $\mathbb G$-adapted $\widetilde V$ in \eqref{Vtilde_Utility} can be expressed as a function of $\tau$ and an $\mathbb F$-adapted process $V$ that satisfies an infinite-horizon BSDE. 

\begin{proposition}\label{prop:decompose wtV}
Let $c, h$ be nonnegative $\mathbb F$-progressively measurable.  Then, $\widetilde V$ is a $\mathbb G$-adapted semimartingale, with $\bar \E[\sup_{s\in[0,t]}|\widetilde V_{s}|]<\infty$ for all $t\ge 0$, that satisfies \eqref{Vtilde_Utility} if and only if 
\begin{equation}\label{wtV decompose}
\widetilde V_t = V_t \ind_{\{t<\tau\}} + \zeta^{1-\gamma}V_{\tau-}\ind_{\{t\ge\tau\}}\qquad \forall t\ge 0,
\end{equation}
where $V$ is an $\mathbb F$-adapted semimartingale, with $\E[\sup_{s\in[0,t]}|V_{s}|]<\infty$ for all $t\ge 0$, that satisfies the infinite-horizon BSDE 
\begin{equation}\label{BSDE_V}
	dV_{t}^{} = -F(c_{t},M_{t}^{h},V_{t}^{})ds + d\mathscr{M}_{t},
\end{equation}
with $F:\R_+\times\R_+\times\R\to\R$ defined by
\begin{equation}\label{F func}
F(c,m,v):= f(c,v) - (1-\zeta^{1-\gamma})m v .
\end{equation}
\end{proposition}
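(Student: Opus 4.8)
The plan is to use the progressive-enlargement (``hazard process'') structure to reduce the $\mathbb G$-dynamics \eqref{Vtilde_Utility} to an $\mathbb F$-dynamics, from which \eqref{BSDE_V} follows; this parallels, in spirit, the reduction carried out in \cite{AH19}. Set $\Gamma_t:=\int_0^t M^h_s\,ds$, which is $\mathbb F$-adapted, continuous and increasing with $\Gamma_0=0$, and note $M^h_t\le m e^{\beta t}$ (from \eqref{Mortality}, as $g\ge0$), so $\Gamma$ is finite on compacts. Since $Z$ has law \eqref{exp law} and is independent of $\mathbb F$ under $\bar\P$, one has $\bar\P(\tau>t\mid\F_t)=e^{-\Gamma_t}$, which is continuous and strictly positive, and $\tau$ admits the $\mathbb F$-conditional ``density'' $M^h_s e^{-\Gamma_s}\,ds$. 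I would first record the standard key-lemma identities for $\bar\E[\,\cdot\mid\cG_t]$ on $\{\tau>t\}$: for $s\ge t$ and $\F_s$-measurable integrable $Y$, $\bar\E[Y\ind_{\{\tau>s\}}\mid\cG_t]=\ind_{\{\tau>t\}}e^{\Gamma_t}\E[e^{-\Gamma_s}Y\mid\F_t]$; the analogue with $Y\,\ind_{\{\tau>s\}}$ replaced by $\int_t^T a_s\ind_{\{\tau>s\}}\,ds$ for an $\mathbb F$-progressive $a$; and, for $\mathbb F$-predictable $Y$, $\bar\E[Y_\tau\ind_{\{t<\tau\le T\}}\mid\cG_t]=\ind_{\{\tau>t\}}e^{\Gamma_t}\E[\int_t^T Y_s M^h_s e^{-\Gamma_s}\,ds\mid\F_t]$.

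Next I would extract the $\mathbb F$-process $V$. By standard results on progressive enlargement there is an $\mathbb F$-adapted c\`adl\`ag process $V$ with $V_t\ind_{\{t<\tau\}}=\widetilde V_t\ind_{\{t<\tau\}}$ for every $t\ge0$, and the bound $\E[\sup_{s\le t}|V_s|]<\infty$ is inherited from the hypothesis on $\widetilde V$. Evaluating \eqref{Vtilde_Utility} on $\{\tau\le t\}$: the $ds$-integral and the term $\widetilde V_T\ind_{\{\tau>T\}}$ vanish, $\ind_{\{\tau\le T\}}=1$, and $\widetilde V_{\tau-}$ is $\cG_t$-measurable, so $\widetilde V_t=\zeta^{1-\gamma}\widetilde V_{\tau-}$ there. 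Since $\tau>0$ $\bar\P$-a.s.\ (because $\Gamma_0=0<Z$) and $\widetilde V$ coincides with $V$ on $[0,\tau)$, we get $\widetilde V_{\tau-}=V_{\tau-}$; together with $\widetilde V_t=V_t$ on $\{t<\tau\}$ this is exactly \eqref{wtV decompose}.

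Now I would substitute \eqref{wtV decompose} into \eqref{Vtilde_Utility} restricted to $\{\tau>t\}$: the $ds$-integral becomes $\int_t^T f(c_s,V_s)\ind_{\{\tau>s\}}\,ds$, the bequest term becomes $\zeta^{1-\gamma}V_{\tau-}\ind_{\{t<\tau\le T\}}$ (with $V_{s-}=V_s$ $ds$-a.e.), and the survival term $V_T\ind_{\{\tau>T\}}$. Applying the three identities, and using that two $\F_t$-measurable random variables agreeing on $\{\tau>t\}$ agree $\bar\P$-a.s.\ (as $\bar\P(\tau>t\mid\F_t)=e^{-\Gamma_t}>0$), I obtain, for all $0\le t\le T$,
\begin{equation*}
V_t=e^{\Gamma_t}\,\E\!\left[\int_t^T e^{-\Gamma_s}\big(f(c_s,V_s)+\zeta^{1-\gamma}M^h_s V_s\big)\,ds+e^{-\Gamma_T}V_T\ \Big|\ \F_t\right].
\end{equation*}
Equivalently $N_t:=e^{-\Gamma_t}V_t+\int_0^t e^{-\Gamma_s}\big(f(c_s,V_s)+\zeta^{1-\gamma}M^h_s V_s\big)\,ds$ satisfies $N_t=\E[N_T\mid\F_t]$ for all $T\ge t$, hence is a true $\mathbb F$-martingale (each $N_T$ is integrable: $|e^{-\Gamma_T}V_T|\le|V_T|\in L^1$, the $M^h V$-part is dominated by a constant times $\sup_{s\le T}|V_s|\in L^1$, and $\E[\int_0^T e^{-\Gamma_s}|f(c_s,V_s)|\,ds]<\infty$ by the integrability implicit in \eqref{Vtilde_Utility}). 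Writing $V_t=e^{\Gamma_t}(e^{-\Gamma_t}V_t)$ and integrating by parts ($\Gamma$ continuous, of finite variation, $d\Gamma_t=M^h_t\,dt$), the finite-variation part collapses to $-\big(f(c_t,V_t)-(1-\zeta^{1-\gamma})M^h_tV_t\big)\,dt=-F(c_t,M^h_t,V_t)\,dt$, so $dV_t=-F(c_t,M^h_t,V_t)\,dt+d\mathscr M_t$ with $\mathscr M_t:=\int_0^t e^{\Gamma_s}\,dN_s$. The process $\mathscr M$ is a local $\mathbb F$-martingale and, on $[0,T]$, equals $V_t-V_0+\int_0^t F(c_s,M^h_s,V_s)\,ds$, which is dominated by an integrable random variable; hence $\mathscr M$ is a true martingale on each $[0,T]$, as Definition~\ref{def: BSDE} requires.

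I expect the main obstacle to be the first reduction step: making rigorous that the $\mathbb F$-reduction $V$ of the $\mathbb G$-semimartingale $\widetilde V$ on $[0,\tau)$ can be taken to be an $\mathbb F$-\emph{semimartingale} with c\`adl\`ag paths and the stated integrability, rather than just an $\mathbb F$-adapted process. A secondary technical point is justifying the Fubini exchanges hidden in the key-lemma identities and the local-to-true martingale upgrade for $\mathscr M$; both reduce to controlling $\int_0^T e^{-\Gamma_s}\big(|f(c_s,V_s)|+M^h_s|V_s|\big)\,ds$ in $L^1$, where the superlinear growth of $f$ in its second argument (highlighted in the introduction) is the feature demanding care.
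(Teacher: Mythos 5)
Your proposal is correct and follows essentially the same route as the paper's proof: identify the $\mathbb F$-reduction $V$ of $\widetilde V$ on $[0,\tau)$, use the conditional law of $\tau$ given the market filtration (your ``key lemma'' identities are exactly what the paper derives by hand via \eqref{tau>s} and \eqref{tau density}) to arrive at the $\mathbb F$-conditional equation \eqref{V eqn}, and then discount by $e^{-\int_0^t M^h_s\,ds}$ and apply It\^o's formula to read off BSDE \eqref{BSDE_V}, with the martingale part a true martingale on compacts since $M^h_t\le me^{\beta t}$. The only cosmetic differences are that you invoke standard progressive-enlargement results and the strict positivity of $\bar\P(\tau>t\mid\F_t)$ where the paper constructs $V$ and the conditional-expectation identities explicitly on the product space, and that you obtain the post-$\tau$ piece of \eqref{wtV decompose} by evaluating \eqref{Vtilde_Utility} on $\{\tau\le t\}$ rather than by splitting the bequest term's conditional expectation.
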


\begin{proof}
See Section~\ref{subsec:proof of prop:decompose wtV}.
\end{proof}

\begin{remark}
Proposition~\ref{prop:decompose wtV} actually holds more generally beyond the specific driver $f$ in \eqref{EZ_aggregator} and the boundary condition ``$\widetilde V_T = \zeta^{1-\gamma} \widetilde V_{\tau-}$'' for $T\ge \tau$ encoded in \eqref{Vtilde_Utility}. Specifically, in \eqref{Vtilde_Utility}, if we allow for a general Borel driver $f$ and replace $\zeta^{1-\gamma} \widetilde V_{\tau-}$ therein by $H(\widetilde V_{\tau-})$ for a general continuous function $H$ that grows at most linearly, the arguments in the proof of Proposition~\ref{prop:decompose wtV} (see Section~\ref{subsec:proof of prop:decompose wtV}) still hold, leading to an upgraded version of Proposition~\ref{prop:decompose wtV} with $\zeta^{1-\gamma}V_{\tau-}$ in \eqref{wtV decompose} and $\zeta^{1-\gamma}v$ in \eqref{F func} replaced by $H(V_{\tau-})$ and $H(v)$, respectively. 
\end{remark}

In view of Proposition~\ref{prop:decompose wtV}, to uniquely determine the Epstein-Zin utility process $\widetilde V$, we need to find a suitable class of stochastic processes among which there exists a unique solution to \eqref{BSDE_V}. To this end, we start with imposing appropriate integrability and transversality conditions.  


\begin{definition}\label{cE}
For any $k\in\R$, define $\Lambda:= \delta\theta + (1-\theta)k$. Then, for any nonnegative $\mathbb F$-progressively measurable $h$, we denote by $\cE^h_k$ the set of all $\mathbb F$-adapted semimartingales $Y$ that satisfy the following integrability and transversality conditions:
\begin{align}\label{TransversalityCondition}
	\E\bigg[\sup\limits_{s\in[0,t]}\left|Y_{s}\right|\bigg]<\infty\ \ \forall t>0\quad\hbox{and}\quad\lim\limits_{t\rightarrow\infty}e^{-\Lambda t}\E\left[e^{-\gamma(\psi-1)\frac{1-\zeta^{1-\gamma}}{1-\gamma}\int_{0}^{t}M_{s}^{h}ds}|Y_{t}|\right] = 0.
	\end{align}
\end{definition}

\begin{remark}
Condition \eqref{TransversalityCondition} is similar to \cite[(2.3)]{Melnyk17}, but the controlled mortality $M^h$ in our case complicates the transversality condition: unlike \cite[(2.3)]{Melnyk17}, the exponential term no longer contains a constant rate, but a stochastic one involving $M^h$. This adds nontrivial complexity to deriving a comparison result (Proposition~\ref{prop:comparison}) and the use of verification arguments (Theorem~\ref{Thm:AH}).  
\end{remark}

\begin{remark}\label{rem:negative Lambda}
The constant $\Lambda:= \delta\theta + (1-\theta)k$ in \eqref{TransversalityCondition} can be negative, even when $k>0$ (as will be assumed in Section~\ref{sec:Results}). In such a case, \eqref{TransversalityCondition} stipulates that $M^h$ must increase fast enough to neutralize the growth of $e^{-\Lambda t}$, such that the transversality condition can be satisfied. 
\end{remark}

We now introduce the appropriate collection of strategies $(c,h)$ we will focus on. 

\begin{definition}\label{k-admissible}
	Let $c, h$ be nonnegative $\mathbb F$-progressively measurable. For any $k\in\R$, we say $(c,h)$ is $k$-admissible if there exists $V\in \cE^h_k$ satisfying \eqref{BSDE_V} and 
	\begin{equation}\label{comparison_condition}
	V_{s}\le \delta^{\theta}\left(k+(\psi-1)\frac{1-\zeta^{1-\gamma}}{1-\gamma} M^h_{s}\right)^{-\theta}\frac{c_{s}^{1-\gamma}}{1-\gamma},\quad \forall s\ge 0.
	\end{equation}
\end{definition}

\begin{remark}\label{rem:address}
Condition \eqref{comparison_condition} is the key to a comparison result for \eqref{BSDE_V}, as shown in Proposition~\ref{prop:comparison} below.
In a sense, \eqref{TransversalityCondition}-\eqref{comparison_condition} is the enforced ``\textit{terminal behavior}", under which a solution to \eqref{BSDE} can be uniquely identified. 
Technically, \eqref{comparison_condition} is similar to typical conditions imposed for infinite-horizon BSDEs, such as \cite[(H1')]{Briand03} and the one in \cite[Theorem 5.1]{Fan15}: all of them require the solution to be bounded from above by a tractable process. 
Moreover, for classical Epstein-Zin utilities (without healthcare), a similar condition was imposed in \cite[(2.5)]{Melnyk17}. 
In fact, Definition~\ref{k-admissible} is in line with \cite[Definition 2.1]{Melnyk17}, but adapted to include the controlled mortality $M^h$. 
\end{remark}

A comparison result for BSDE \eqref{BSDE_V} can now be established.

\begin{proposition}\label{prop:comparison}
	Let $k\in\R$ and $c, h$ be nonnegative $\mathbb F$-progressively measurable processes. 
	Suppose that $V^1\in\cE_{k}^{h}$ is a solution to \eqref{BSDE_V} and $V^2\in\cE_{k}^{h}$ is a solution to \eqref{BSDE}. 
		If $V^{1}$ satisfies \eqref{comparison_condition} and $F(c_{t},M_{t},V_{t}^2) \le G(t,V_{t}^2)\ d\P\times dt$-a.e., then $V_{t}^{1}\le V_{t}^{2}$ for $t\ge 0$ $\P$-a.s.
\end{proposition}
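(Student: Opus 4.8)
The plan is to prove the comparison result by studying the dynamics of the difference $\Delta_t := V^1_t - V^2_t$ and showing it is bounded above by a quantity that tends to zero, leveraging the transversality conditions in $\cE^h_k$. The main structural issue, flagged already in the paper's own remarks, is that the ``rate'' appearing in the relevant exponential weight is not constant but involves $\int_0^t M^h_s\,ds$, so a naive Gronwall argument fails. First I would write $\mathrm{d}\Delta_t = -\big(F(c_t,M^h_t,V^1_t) - G(t,V^2_t)\big)\mathrm{d}t + \mathrm{d}(\mathscr M^1_t - \mathscr M^2_t)$ and split the driver difference as $F(c_t,M^h_t,V^1_t) - G(t,V^2_t) = \big(F(c_t,M^h_t,V^1_t) - F(c_t,M^h_t,V^2_t)\big) + \big(F(c_t,M^h_t,V^2_t) - G(t,V^2_t)\big)$, where the second bracket is $\le 0$ by hypothesis. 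So it suffices to control the first bracket, which by \eqref{F func} equals $f(c_t,V^1_t) - f(c_t,V^2_t) - (1-\zeta^{1-\gamma})M^h_t(V^1_t - V^2_t)$.

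The key is a one-sided Lipschitz / monotonicity estimate on $v\mapsto f(c,v)$ valid on the region where \eqref{comparison_condition} holds. Using the second expression for $f$ in \eqref{EZ_aggregator}, $f(c,v) = \delta\frac{c^{1-1/\psi}}{1-1/\psi}\big((1-\gamma)v\big)^{1-1/\theta} - \delta\theta v$; the term $-\delta\theta v$ contributes a clean linear part, and the concave/convex power term $\big((1-\gamma)v\big)^{1-1/\theta}$ must be handled by a mean-value argument. The bound \eqref{comparison_condition} on $V^1_s$ is precisely what makes the derivative of this power term controllable from above: evaluating $\partial_v f$ at $v = V^1_s$ and using \eqref{comparison_condition} should yield something like $f(c_s,V^1_s) - f(c_s,V^2_s) \le \big(-\delta\theta + (\text{something})\cdot\big(k + (\psi-1)\frac{1-\zeta^{1-\gamma}}{1-\gamma}M^h_s\big)\big)\,\Delta_s$ on $\{\Delta_s > 0\}$, so that combined with the $-(1-\zeta^{1-\gamma})M^h_t\Delta_t$ term one gets $\mathrm{d}\Delta_t \le \big(\Lambda + \gamma(\psi-1)\frac{1-\zeta^{1-\gamma}}{1-\gamma}M^h_t\big)\Delta_t^+\,\mathrm{d}t + \mathrm{d}(\text{martingale})$, exactly matching the weights in \eqref{TransversalityCondition}. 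This algebra — getting the constants to line up with $\Lambda = \delta\theta + (1-\theta)k$ and the stochastic exponent — is where I'd expect to spend the most care; the specification \eqref{specification} ($\psi>1$, $\gamma>1/\psi$, so $\theta<0$) is presumably used here to fix signs.

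With that differential inequality in hand, I would introduce the (strictly positive, finite-variation) weight $\Gamma_t := \exp\!\big(-\Lambda t - \gamma(\psi-1)\frac{1-\zeta^{1-\gamma}}{1-\gamma}\int_0^t M^h_s\,\mathrm{d}s\big)$ and apply Itô/integration by parts to $\Gamma_t\Delta_t^+$ (using Tanaka's formula for $\Delta^+$ or, more simply, the fact that on $\{\Delta>0\}$ the estimate above controls $\mathrm{d}(\Gamma_t\Delta_t)$ with a nonpositive drift). One obtains that $\Gamma_t\Delta_t^+$ is dominated by a supermartingale (after localizing to handle the martingale part — here the integrability $\E[\sup_{s\le t}|V^i_s|]<\infty$ lets me pass to the limit), hence $\E[\Gamma_t\Delta_t^+]$ is nonincreasing in $t$, or at least bounded by its value as computed via the terminal behaviour. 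Then taking $t\to\infty$: the transversality condition \eqref{TransversalityCondition} for both $V^1$ and $V^2$ forces $e^{-\Lambda t}\E[e^{-\gamma(\psi-1)\frac{1-\zeta^{1-\gamma}}{1-\gamma}\int_0^t M^h_s\,\mathrm{d}s}|V^i_t|]\to 0$, hence $\E[\Gamma_t\Delta_t^+]\to 0$, which combined with monotonicity gives $\E[\Gamma_t\Delta_t^+]=0$ for all $t\ge 0$. Since $\Gamma_t>0$ a.s., this yields $\Delta_t^+=0$, i.e. $V^1_t\le V^2_t$ for every $t\ge 0$ $\P$-a.s.

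The main obstacle, as anticipated by the authors' remark, is the non-constant exponential rate: one cannot simply multiply by $e^{-\Lambda t}$ and apply classical Gronwall because the drift coefficient multiplying $\Delta_t^+$ is the random, time-varying $\Lambda + \gamma(\psi-1)\frac{1-\zeta^{1-\gamma}}{1-\gamma}M^h_t$. The fix is to absorb the stochastic part of the rate into the integrating factor $\Gamma_t$ rather than into a Gronwall constant; the paper's mention of ``forward and backward Gronwall'' suggests that after this reduction one still needs a two-sided Gronwall-type argument (possibly because the sign of $\Lambda$, hence the monotonicity direction, is not fixed — cf. Remark~\ref{rem:negative Lambda}), so I would be prepared to split into cases according to $\Lambda\ge 0$ versus $\Lambda<0$ and run the estimate forward in the first case and backward-in-time (comparing to the terminal slice) in the second.
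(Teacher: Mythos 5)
Your overall architecture — the driver splitting, the one-sided Lipschitz estimate obtained from \eqref{comparison_condition} with the exact rate $\Lambda+\frac{\gamma(\psi-1)}{1-\gamma}(1-\zeta^{1-\gamma})M^h_t$, and the use of the transversality part of \eqref{TransversalityCondition} as $T\to\infty$ — matches the paper's proof (this estimate is precisely the paper's Lemma~\ref{Lemma1}). But there is a genuine error in the concluding martingale argument: the direction is flipped. On $\{\Delta_t>0\}$ the estimate gives $F(c_t,M^h_t,V^1_t)-G(t,V^2_t)\le -\Gamma(\Lambda,M^h_t)\Delta_t$ with $\Gamma(\Lambda,m)=\Lambda+\frac{\gamma(\psi-1)}{1-\gamma}(1-\zeta^{1-\gamma})m$, so the drift of $\Delta_t$ is \emph{bounded below} by $\Gamma(\Lambda,M^h_t)\Delta_t$, and hence the drift of your weighted process $\Gamma_t\Delta_t$ is \emph{nonnegative} there: $\Gamma_t\Delta_t^+$ is a local \emph{sub}martingale, not dominated by a supermartingale. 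This matters because your final inference is invalid as stated: if $\E[\Gamma_t\Delta_t^+]$ were nonincreasing and tended to $0$, you could only conclude that its limit is zero, not that it vanishes at finite times. The argument only closes in the submartingale direction: $\E[\Gamma_{t_0}\Delta_{t_0}^+]\le \E[\Gamma_T\Delta_T^+]\le \E\big[\Gamma_T(|V^1_T|+|V^2_T|)\big]\to 0$ by \eqref{TransversalityCondition}, whence $\Delta_{t_0}^+=0$ a.s. (Your hedge ``bounded by its value at the terminal slice'' is exactly this submartingale bound, but it contradicts the nonincreasing claim you actually use.)

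Once corrected, your route is a legitimate variant of the paper's and in one respect cleaner: by absorbing the full rate, including the constant $\Lambda$, into the integrating factor, you need no Gronwall step at all and no case split on the sign of $\Lambda$. The paper instead discounts only by the stochastic part $\Gamma(0,M^h_s)$, works up to the stopping time $\theta=\inf\{s\ge t_0: V^1_s\le V^2_s\}$ (thereby avoiding Tanaka and its jump/local-time bookkeeping for $\Delta^+$, which you would still have to justify for the general martingales $\mathscr M^i$), takes expectations to get the deterministic inequality $\Theta_t\le -\int_t^T\Lambda\Theta_s\,ds+\Theta_T$, and then applies forward Gronwall for $\Lambda>0$ and a backward Gronwall inequality for $\Lambda<0$ — so your guess about why both Gronwall directions appear is right for the paper's scheme, but in your own scheme the issue disappears. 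You should also make explicit why the stochastic integrals $\int\Gamma_s\,d\mathscr M^i_s$ are true (sub)martingale increments on $[t_0,T]$: the weight satisfies $0\le\Gamma_s\le e^{|\Lambda|T}$ there since the mortality term in the exponent is nonnegative, which is how the paper justifies the analogous step.
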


\begin{proof}
See Section~\ref{subsec:proof of comparison}.
\end{proof}

The next result is a direct consequence of Propositions~\ref{prop:decompose wtV} and \ref{prop:comparison}. 

\begin{theorem}\label{thm:Existence and Uniqueness}
Fix $k\in\R$. For any $k$-adimissible $(c,h)$, there exists a unique solution $V^{c,h}\in\cE^h_k$ to \eqref{BSDE_V} that satisfies \eqref{comparison_condition}. Hence, the Epstein-Zin utility $\widetilde{V}^{c,h}$ can be uniquely determined via \eqref{wtV decompose}.
\end{theorem}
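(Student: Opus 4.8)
The plan is to obtain the theorem as a straightforward corollary of Propositions~\ref{prop:decompose wtV} and \ref{prop:comparison}; the only real work is lining up hypotheses. First, for \emph{existence} I would just unwind Definition~\ref{k-admissible}: by definition, saying that $(c,h)$ is $k$-admissible already furnishes some $V\in\cE^h_k$ that solves \eqref{BSDE_V} and satisfies the pointwise bound \eqref{comparison_condition}. Fix one such process and call it $V^{c,h}$; nothing further is needed for existence.

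Next, for \emph{uniqueness} I would take two candidates $V^1,V^2\in\cE^h_k$, both solving \eqref{BSDE_V} and both satisfying \eqref{comparison_condition}, and apply Proposition~\ref{prop:comparison} twice. Setting $G(\omega,t,v):=F(c_t(\omega),M^h_t(\omega),v)$ makes the general BSDE \eqref{BSDE} coincide with \eqref{BSDE_V}, so $V^2$ is automatically a solution of \eqref{BSDE} for this $G$; since $V^1$ satisfies \eqref{comparison_condition} and $F(c_t,M^h_t,V^2_t)=G(t,V^2_t)$ holds with equality, Proposition~\ref{prop:comparison} gives $V^1_t\le V^2_t$ for all $t\ge 0$, $\P$-a.s. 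Exchanging the roles of $V^1$ and $V^2$ (now $V^1$ solves \eqref{BSDE}, $V^2$ satisfies \eqref{comparison_condition}, and again $F(c_t,M^h_t,V^1_t)=G(t,V^1_t)$) yields the reverse inequality, whence $V^1=V^2$ up to indistinguishability. Thus $V^{c,h}$ is the unique element of $\cE^h_k$ that solves \eqref{BSDE_V} and obeys \eqref{comparison_condition}.

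Finally, I would pass back to the Epstein-Zin utility: define $\widetilde V^{c,h}$ by \eqref{wtV decompose} with $V=V^{c,h}$. This is $\mathbb G$-adapted with $\bar\E[\sup_{s\in[0,t]}|\widetilde V^{c,h}_s|]<\infty$ for every $t$, and, reversing the computation in the proof of Proposition~\ref{prop:decompose wtV}, it satisfies \eqref{Vtilde_Utility}. Uniqueness in the relevant class then follows: any $\mathbb G$-adapted semimartingale meeting the integrability hypothesis of Proposition~\ref{prop:decompose wtV}, satisfying \eqref{Vtilde_Utility}, and whose associated $\mathbb F$-process (provided by that proposition) lies in $\cE^h_k$ and obeys \eqref{comparison_condition}, must decompose through an $\mathbb F$-process equal to $V^{c,h}$ by the uniqueness above, hence equals $\widetilde V^{c,h}$.

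I expect the only delicate point to be this last bit of bookkeeping — verifying that the $\mathbb F$-process extracted by Proposition~\ref{prop:decompose wtV} really does inherit both membership in $\cE^h_k$ and the bound \eqref{comparison_condition} (so that Proposition~\ref{prop:comparison} is applicable in each ordering), and that conversely substituting $V^{c,h}$ into \eqref{wtV decompose} reproduces the defining relation \eqref{Vtilde_Utility}. There is no genuine analytic obstacle: all the substance sits inside the comparison result, and the present theorem merely records its consequence alongside the decomposition of Proposition~\ref{prop:decompose wtV}.
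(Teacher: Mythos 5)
Your proposal is correct and matches the paper, which presents this theorem as a direct consequence of Propositions~\ref{prop:decompose wtV} and \ref{prop:comparison}: existence comes straight from Definition~\ref{k-admissible}, and uniqueness from applying the comparison result twice with $G(\omega,t,v)=F(c_t(\omega),M^h_t(\omega),v)$, after which \eqref{wtV decompose} pins down $\widetilde V^{c,h}$.
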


\begin{remark}
Results in this section in fact hold true more generally, for certain specifications of $(\psi, \gamma)$ that do not fulfill Assumption~\ref{specification}. For instance, for the cases ``$\psi\in(0,1)$, $\gamma>\frac{1}{\psi}$'' and ``$\psi, \gamma\in(0,1)$'',  Theorem~\ref{thm:Existence and Uniqueness} can be similarly established by suitably adjusting \eqref{TransversalityCondition}-\eqref{comparison_condition}. This will allow the main result of this paper, Theorem~\ref{Thm:AH} below, to be generalized to these cases. 
\end{remark}

\section{Problem Formulation}\label{sec:PF}
Let $B=(B_t)_{t\ge 0}$ be an $\mathbb F$-adapted standard Brownian motion. Consider a financial market with a riskfree rate $r>0$ and a risky asset $S_{t}$ given by
\begin{align}\label{S}
dS_{t} & = (\mu+r)S_{t}dt + \sigma S_{t}dB_{t},
\end{align}
where $\mu\in\R$ and $\sigma>0$ are given constants. Given initial wealth $x>0$, at each time $t\ge0$, an agent chooses a consumption rate $c_{t}$, invests a fraction $\pi_t$ of his wealth on the risky asset, and spends another fraction $h_{t}$ on healthcare. The resulting dynamics of the wealth process $X$ is
\begin{align}\label{Wealth}
dX_{t} &= X_{t}\left(r+\mu\pi_{t}-h_{t}\right)dt -c_{t} dt + X_{t}\sigma\pi_{t}dB_{t},\quad X_0 =x.
\end{align}

\begin{definition}\label{H_k}
For all $k\in\R$, let $\cH_k$ be the set of strategies $(c,\pi,h)$ such that $(c,h)$ is $k$-admissible (Definition~\ref{k-admissible}), $\pi$ is $\mathbb F$-progressively measurable, and a unique solution $X^{c,\pi,h}$ to \eqref{Wealth} exists.
\end{definition}

The agent aims at maximizing his lifetime Epstein-Zin utility $\widetilde V_0^{c,h}$ by choosing $(c,\pi,h)$ in a suitable collection of strategies $\cP$, i.e.
\begin{align}\label{problem}
\sup\limits_{(c,\pi,h)\in\cP}\widetilde{V}_{0}^{c,h} &=\sup\limits_{(c,\pi,h)\in\cP}V_{0}^{c,h},
\end{align}
where the equality follows from \eqref{wtV decompose}. In this section, we only require $\cP$ to satisfy
\begin{equation}\label{cP condition}
\cP\subseteq \cH_k
\quad \hbox{for some}\ k\in\R.
\end{equation}
Our focus is to 
establish a versatile verification theorem under merely 
\eqref{cP condition}. 
A more precise definition of $\cP$, depending on specification of $\beta$, $\gamma$, and $\zeta$, will be introduced in Definition~\ref{cP}.


\subsection{A General Verification Theorem}\label{DPP}
Under the current Markovian setting (i.e. \eqref{S} and \eqref{Wealth}), we take
\begin{equation}\label{problem'}
v(x,m) : = \sup\limits_{(c,\pi,h)\in\cP}V_{0}^{c,h},
\end{equation}
i.e. the optimal value should be a function of the current wealth and mortality. The relation \eqref{V eqn}, derived from \eqref{Vtilde_Utility}, suggests the following dynamic programming principle: With the shorthand notation $\fp = (c,\pi,h)$ and $\fp_s = (c_s,\pi_s,h_s)$ for $s\ge0$, for any $T>0$, 
\begin{align}\label{ValueFn}
&v(x,m) =\nonumber\\ &\sup\limits_{\fp\in\cP}\E\left[\int_{0}^{T}e^{-\int_{0}^{s}M_{r}^{h}dr}\left(f(c_{s},v(X_{s}^{\fp},M_{s}^{h})) + \zeta^{1-\gamma}M_{s}^{h}v(X_{s}^{\fp},M_{s}^{h})\right)ds + e^{-\int_{0}^{T}M_{s}ds}v(X_{T}^{\fp},M_{T}^{h}) \right].
\end{align}
By applying It\^{o}'s formula to $e^{-\int_{0}^{t}M_{s}^{h}ds}v(X_{t}^{\fp},M_{t}^{h})$, assuming enough regularity of $v$, we get
\begin{align*}
&e^{-\int_{0}^{T}M_{s}^{h}ds}v(X_{T}^{\fp},M_{T}^{h}) - v(x,m)\\
&= \int_{0}^{T}\left(L^{\fp_s}[v](X_{t}^{\fp},M_{t}^{h}) - M_{t}^{h}v(X_{t}^{\fp},M_{t}^{h})\right)dt + \int_{0}^{T}e^{-\int_{0}^{t}M_{s}^{h}ds}\sigma\pi X_{t}^{\fp}v_{x}(X_{t}^{\fp},M_{t}^{h})dB_{t},
\end{align*}
where the operator $L^{a,b,d}[\cdot]$ is defined by 
\begin{equation}\label{L}
L^{a,b,d}[\kappa](x,m):= \left((r+\mu b-d)x-a\right) \kappa_{x}(x,m)+(\beta-g(d))m \kappa_{m}(x,m) + \frac{1}{2}\sigma^{2}b^{2}x^{2}\kappa_{xx}(x,m),
\end{equation}
for any $\kappa\in C^{2,1}(\R_+\times\R_+)$. We can then rewrite \eqref{ValueFn} as
\begin{equation*}
0 = \sup\limits_{\fp\in\cP}\E\left[\int_{0}^{T}e^{-\int_{0}^{s}M_{t}^{h}dt}\left(f(c_{s},v(X_{s}^{\fp},M_{s}^{h})) + (\zeta^{1-\gamma}-1)M_{s}^{h} v(X_{s}^{\fp},M_{s}^{h})+L^{\fp_s}[v](X_{s}^{\fp},M_{s}^{h})\right)ds\right].
\end{equation*}
The HJB equation associated with $v(x,m)$ is then
\begin{align}\label{HJB}
0 &= \sup\limits_{c\in\R_{+}}\left\{f(c,w(x,m)) - cw_{x}(x,m)\right\} +\sup\limits_{h\in\R_{+}}\left\{-g(h)mw_{m}(x,m) -hxw_{x}(x,m)\right\}\nonumber\\
&\hspace{0.2in}+ \sup\limits_{\pi\in\R}\left\{\mu\pi xw_{x}(x,m) + \frac{1}{2}\sigma^{2}\pi^{2}x^{2}w_{xx}(x,m)\right\}\\
&\hspace{0.2in}+ rxw_{x}(x,m) +\beta mw_{m}(x,m) + (\zeta^{1-\gamma}-1)m w(x,m),\quad \forall (x,m)\in\R_{+}^2.\nonumber
\end{align}
Equivalently,  this can be written in the more compact form
\begin{equation}\label{HJB'}
\sup\limits_{c, h\in\R_+, \pi\in\R}\left\{L^{c,\pi,h}[w](x,m) + f(c,w(x,m))\right\} + (\zeta^{1-\gamma}-1)m w(x,m)= 0,\quad \forall (x,m)\in\R_{+}^2.
\end{equation}


\begin{theorem}\label{Verification}
Let $w\in C^{2,1}(\R_{+}\times\R_+)$ be a solution to \eqref{HJB} and $\cP$ satisfy \eqref{cP condition}. Suppose for any $(c,\pi,h)\in\cP$, the process $w(X_{t}^{c,\pi,h},M_{t}^{h})$, $t\ge 0$, belongs to $\cE^h_k$ (with $k\in\R$ specified by \eqref{cP condition}) and
\begin{equation}\label{veri condition}
\E\bigg[\sup_{s\in[0,t]} \pi_s X^{c,\pi,h}_s w_x(X_{s}^{c,\pi,h},M_{s}^{h})\bigg]<\infty,\quad \forall t>0.
\end{equation}
 Then, the following holds. 
\begin{itemize}
\item [(i)] $w(x,m)\ge v(x,m)$ on $\R_{+}\times \R_+$.
\item [(ii)] Suppose further that there exist Borel measurable functions $\bar c, \bar\pi, \bar h:\R^2_+\to\R$ such that $\bar c(x, m)$, $\bar\pi(x,m)$, and $\bar h(x, m)$ are maximizers of
\begin{align}
&\sup\limits_{c\in\R_{+}}\left\{f(c,w(x,m)) - cw_{x}(x,m)\right\},\quad \sup\limits_{\pi\in\R}\left\{\mu\pi xw_{x}(x,m) + \frac{1}{2}\sigma^{2}\pi^{2}x^{2}w_{xx}(x,m)\right\},\label{sup1, 2}\\
&\hspace{1.2in} \sup\limits_{h\in\R_{+}}\left\{-g(h)mw_{m}(x,m) -hxw_{x}(x,m)\right\},\label{sup3}
\end{align}
respectively, for all $(x,m)\in\R^2_+$. If $(c^*,\pi^*,h^*)$ defined by
\begin{equation}\label{optimal strategies}
c^*_t:= \bar c(X_t,M_t),\quad \pi^*_t:= \bar \pi(X_t,M_t),\quad h^*_t:=\bar h(X_t,M_t),\qquad t\ge 0,
\end{equation}
belongs to $\cP$ and $W^*_t:=w(X_{t}^{c^*,\pi^*,h^*},M_{t}^{h^*})$ satisfies \eqref{comparison_condition} (with $V$, $c$, $h$ replaced by $W^*$, $c^*$, $h^*$), then $(c^*,\pi^*,h^*)$ optimizes \eqref{problem'} and $w(x,m)=v(x,m)$ on $\R_{+}\times \R_+$.
\end{itemize}
\end{theorem}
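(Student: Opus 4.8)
The idea is to read the classical HJB solution $w$, evaluated along controlled trajectories, as a (super)solution of the infinite‑horizon BSDE \eqref{BSDE_V}, and then let the comparison result Proposition~\ref{prop:comparison} and the uniqueness result Theorem~\ref{thm:Existence and Uniqueness} do the rest.

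\emph{Part (i).} Fix $(c,\pi,h)\in\cP$ and set $W_t:=w(X_t^{c,\pi,h},M_t^h)$. Since the operator $L^{a,b,d}$ in \eqref{L} is exactly the infinitesimal generator of $(X,M)$ under the controls $(a,b,d)=(c,\pi,h)$, It\^o's formula (applicable because $w\in C^{2,1}$) gives $dW_t = L^{\fp_t}[w](X_t,M_t)\,dt + dN_t$, where $N_t := \sigma\int_0^t \pi_s X_s^{c,\pi,h}\,w_x(X_s^{c,\pi,h},M_s^h)\,dB_s$. The first point to settle is that $N$ is a genuine $\mathbb F$‑martingale on each $[0,T]$: from $\langle N\rangle_T^{1/2}\le \sigma\sqrt{T}\,\sup_{s\le T}\bigl|\pi_s X_s^{c,\pi,h} w_x(X_s^{c,\pi,h},M_s^h)\bigr|$, the Burkholder--Davis--Gundy inequality together with \eqref{veri condition} yields $\E[\sup_{t\le T}|N_t|]<\infty$, so $N$ is uniformly integrable on $[0,T]$. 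Hence, with $G(\omega,t,y):=-L^{\fp_t(\omega)}[w](X_t(\omega),M_t(\omega))$, the process $W$ solves \eqref{BSDE} in the sense of Definition~\ref{def: BSDE} (the hypothesis $W\in\cE_k^h$ supplies the required integrability). The compact HJB form \eqref{HJB'} then gives, $d\P\times dt$‑a.e., $F(c_t,M_t^h,W_t)= f(c_t,W_t)-(1-\zeta^{1-\gamma})M_t^h W_t \le -L^{\fp_t}[w](X_t,M_t)=G(t,W_t)$. As $(c,\pi,h)\in\cP\subseteq\cH_k$, the pair $(c,h)$ is $k$‑admissible, so Theorem~\ref{thm:Existence and Uniqueness} provides $V^{c,h}\in\cE_k^h$ that solves \eqref{BSDE_V} and obeys \eqref{comparison_condition}; applying Proposition~\ref{prop:comparison} with $V^1=V^{c,h}$ and $V^2=W$ gives $V^{c,h}_t\le W_t$ for all $t\ge 0$, in particular $V^{c,h}_0\le W_0=w(x,m)$. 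Taking the supremum over $\cP$ proves $w\ge v$.

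\emph{Part (ii).} Let $(c^*,\pi^*,h^*)\in\cP$ be the feedback strategy \eqref{optimal strategies} and $W^*_t:=w(X_t^{c^*,\pi^*,h^*},M_t^{h^*})$. Because $\bar c,\bar\pi,\bar h$ are maximizers of \eqref{sup1, 2}--\eqref{sup3}, the inequality in \eqref{HJB'} becomes an equality along the optimal trajectory, so the computation of Part~(i) now reads $dW^*_t=-F(c^*_t,M_t^{h^*},W^*_t)\,dt+dN^*_t$, with $N^*$ a true martingale on each $[0,T]$ by the same BDG argument (using \eqref{veri condition} for $(c^*,\pi^*,h^*)\in\cP$). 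Thus $W^*$ solves \eqref{BSDE_V}, lies in $\cE_k^h$, and --- by assumption --- satisfies \eqref{comparison_condition}. Since $(c^*,h^*)$ is $k$‑admissible, the uniqueness assertion of Theorem~\ref{thm:Existence and Uniqueness} forces $V^{c^*,h^*}=W^*$, so $V^{c^*,h^*}_0=w(x,m)$. Combining this with Part~(i) and the trivial bound $V^{c^*,h^*}_0\le v(x,m)$ (valid since $(c^*,\pi^*,h^*)\in\cP$) yields $w(x,m)=V^{c^*,h^*}_0\le v(x,m)\le w(x,m)$; equality holds throughout, so $w=v$ on $\R_+\times\R_+$ and $(c^*,\pi^*,h^*)$ optimizes \eqref{problem'}.

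\emph{Main obstacle.} The only genuinely delicate step is promoting the local martingales $N$ (resp. $N^*$) to true martingales on finite horizons: condition \eqref{veri condition} is an $L^1$‑bound on the running supremum of $\pi X w_x$ rather than the customary $L^2$‑in‑time integrability of the stochastic integrand, so one must route through Burkholder--Davis--Gundy as above; this is precisely what Definition~\ref{def: BSDE} and Proposition~\ref{prop:comparison} require. Everything else is bookkeeping --- matching the It\^o drift of $w(X,M)$ to the Epstein--Zin generator $F$ in \eqref{F func} via \eqref{HJB'} --- with all the hard analytic work, in particular the Gronwall‑type estimates involving the controlled mortality $M^h$, already absorbed into Proposition~\ref{prop:comparison} and Theorem~\ref{thm:Existence and Uniqueness}.
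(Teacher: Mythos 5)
Your proposal is correct and follows essentially the same route as the paper's proof: Itô's formula applied to $w(X^{c,\pi,h},M^h)$, the martingale property of the stochastic integral from \eqref{veri condition}, the HJB inequality \eqref{HJB'} yielding $F(c_t,M^h_t,W_t)\le G(t,W_t)$, and then Proposition~\ref{prop:comparison} together with Theorem~\ref{thm:Existence and Uniqueness} to conclude both parts. Your BDG-based justification that the stochastic integral is a true martingale is a slightly more detailed version of the step the paper attributes directly to \eqref{veri condition}, but the argument is the same in substance.
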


\begin{proof}
(i) Fix $(x,m)\in\R^2_+$. Consider an arbitrary $\fp=(c,\pi,h)\in\cP$. For any $T\ge0$ and $t\in[0,T]$, by applying It\^{o}'s formula to $w(X^{\fp}_s,M^h_s)$, we get 
	\begin{align*}
	 w(X_{T}^{\fp},M_{T}^{h}) = w(X_{t}^{\fp},M_{t}^{h})+\int_{t}^{T}L^{\fp_{s}}[w](X_{s}^{\fp},M_{s}^{h})ds +\int_{t}^{T}\sigma\pi_{s}X_{s}^{\fp}w_{x}(X_{s}^{\fp},M_{s}^{h})dB_{s}, 
	\end{align*}
	where the operator $L^{a,b,d}[\cdot]$ is defined in \eqref{L}. Thanks to \eqref{veri condition}, $u\mapsto \int_{t}^{u}\sigma\pi_{s}X_{s}^{\fp}w_{x}(X_{s}^{\fp},M_{s}^{h})dB_{s}$ is a true martingale. Hence, the above equality shows that $W_s:= w(X_{s}^{\fp},M_{s}^{h})$ is a solution to BSDE \eqref{BSDE}, with $G(\omega, s, v) := -L^{\fp_{s}(\omega)}[w](X_{s}^{\fp}(\omega),M_{s}^{h}(\omega))$. On the other hand, \eqref{cP condition} implies that $(c,h)$ is $k$-admissible, so that there exists a unique solution $V^{c,h}\in\cE^h_k$ to \eqref{BSDE_V} that satisfies \eqref{comparison_condition} (Theorem~\ref{thm:Existence and Uniqueness}). Since $w$ is a solution to \eqref{HJB}, and equivalently to \eqref{HJB'}, we have
	\begin{equation}\label{verif inequality}
	F(c_s,M^h_s, W_s) = f(c_{s},W_{s}) + (\zeta^{1-\gamma}-1)M_{s}^{h}W_{s} \le -L^{\fp_{s}}[w](X_{s}^{\fp},M_{s}^{h}).
	\end{equation}
	 We then conclude from Proposition~\ref{prop:comparison} that $W_{t}\ge V_{t}^{c,h}$ for all $t\ge0$. In particular, $w(x,m)=W_0\ge V_0^{c,h}$. By the arbitrariness of $(c,\pi,h)\in\cP$, $w(x,m)\ge \sup_{(c,\pi,h)\in\cP}V_0^{c,h}=v(x,m)$, as desired.

(ii) Fix $(x,m)\in\R^2_+$. If $(c^*,\pi^*,h^*)\in\cP$, we can repeat the arguments in part (a), obtaining \eqref{verif inequality} with the inequality replaced by equality. This shows that $W^*_t=w(X^{c^*,\pi^*,h^*}_t,M^{h^*}_t)\in \cE^{h^*}_k$ is a solution to \eqref{BSDE_V}. Also, \eqref{cP condition} implies that $(c^*,h^*)$ is $k$-admissible, so that there is a unique solution $V^{c^*,h^*}\in\cE^{h^*}_k$ to \eqref{BSDE_V} satisfying \eqref{comparison_condition} (Theorem~\ref{thm:Existence and Uniqueness}). As $W^*$ also satisfies \eqref{comparison_condition}, we have $W^*_t= V^{c^*,h^*}_t$ for all $t\ge 0$; particularly, $w(x,m)= W^*_0 = V^{c^*,h^*}_0$. With $w(x,m)\ge \sup_{(c,\pi,h)\in\cP}V_0^{c,h}=v(x,m)$ in part (a), we conclude $w(x,m)=v(x,m)$ and $(c^*,\pi^*,h^*)\in\cP$ is an optimal control. 
\end{proof}


\subsection{Reduction to an Ordinary Differential Equation}
If we assume heuristically that $w_{xx}<0$, $w_{m}<0$, $g$ is differentiable, and the inverse of $g'$ is well-defined, then the optimizers stated in Theorem~\ref{Verification} (ii) can be uniquely determined as  
\begin{equation}\label{candidates}
\begin{split}
\bar c(x,m) &= \delta^{\psi}\frac{\left[(1-\gamma)w(x,m)\right]^{\psi(1-\frac{1}{\theta})}}{w_{x}(x,m)^{\psi}},\quad \bar\pi(x,m) = - \frac{\mu}{\sigma^{2}}\frac{w_{x}(x,m)}{xw_{xx}(x,m)},\\
 &\hspace{0.5in}\bar h(x,m) = (g')^{-1}\left(-\frac{xw_{x}(x,m)}{mw_{m}(x,m)}\right).
\end{split}
\end{equation}
Plugging these into \eqref{HJB} yields
\begin{align}\label{HJB_S}
0 &= \frac{\delta^{\psi}}{\psi-1}\frac{\left[(1-\gamma)v(x,m)\right]^{\psi(1-\frac{1}{\theta})}}{v_{x}(x,m)^{\psi-1}} - \delta\theta v(x,m)- \frac{1}{2}\left(\frac{\mu}{\sigma}\right)^{2}\frac{v_{x}(x,m)^{2}}{v_{xx}(x,m)} + rxv_{x}(x,m) +\beta mv_{m}(x,m)\nonumber\\
&\hspace{0.2in} + (\zeta^{1-\gamma}-1)mv(x,m)- mv_{m}(x,m)\sup\limits_{h\in\R_{+}}\left\{g(h) +\frac{hxv_{x}(x,m)}{mv_{m}(x,m)}\right\}.
\end{align}
Using the ansatz $w(x,m) = \delta^{\theta}\frac{x^{1-\gamma}}{1-\gamma}u(m)^{-\frac{\theta}{\psi}}$, the above equation reduces to 
\begin{align}\label{ODE}
0 &= u(m)^{2}-\tilde{c}_{0}(m) u(m)-\beta mu'(m) + mu'(m)\sup\limits_{h\in\R_{+}}\left\{g(h) -(\psi-1)\frac{u(m)}{ mu'(m)}h\right\},\quad m>0,
\end{align}
where 
\begin{align}\label{c_0}
\tilde{c}_{0}(m) &:= \psi\delta + (1-\psi)\bigg(\frac{(\zeta^{1-\gamma}-1)m}{1-\gamma}+r+\frac{1}{2\gamma}\left(\frac{\mu}{\sigma}\right)^{2}\bigg).
\end{align}
Moreover, the maximizers in \eqref{candidates} now become
\begin{equation}\label{bar's}
\bar c(x,m) = x u(m),\quad \bar \pi \equiv \frac{\mu}{\gamma\sigma^{2}}, \quad \bar h(m) = (g')^{-1}\left((\psi-1)\frac{u(m)}{mu'(m)}\right).
\end{equation}
These maximizers indeed characterize optimal consumption, investment, and healthcare spending, as will be shown in the next section. 


\section{The Main Results}\label{sec:Results}
Let us now formulate the set $\mathcal P$ of  {\it permissible} strategies $(c,\pi,h)$ in the optimization problem \eqref{problem}. First, take $k\in\R$ in Definition \ref{cE} to be 
\begin{equation}\label{k^*}
k^*:= \delta\psi + (1-\psi)\bigg(r+\frac{1}{2\gamma}\left(\frac{\mu}{\sigma}\right)^{2}\bigg),
\end{equation}
so that $\Lambda\in\R$ in Definition \ref{cE} becomes
\begin{align}\label{Lambda^*}
\Lambda^* := \delta\theta + (1-\theta)k^* =\delta\gamma\psi + (1-\gamma\psi)\left(r+\frac{1}{2\gamma}\left(\frac{\mu}{\sigma}\right)^{2}\right).
\end{align} 

\begin{definition}
Let $\cP_1$ the set of strategies $(c,\pi,h)$ such that $(c,\pi,h)\in \cH_{k^*}$, $(X^{c,\pi,h})^{1-\gamma}$ satisfies \eqref{TransversalityCondition} (with $\Lambda\in\R$ therein taken to be $\Lambda^*$) as well as
$\E\big[\sup_{s\in[0,t]} \pi_s(X^{c,\pi,h}_s)^{1-\gamma}\big]<\infty$ for $t\ge 0.$

Let $\cP_2$ be defined as $\cP_1$, except that the second part of \eqref{TransversalityCondition} is replaced by
\begin{equation}\label{Permissible}
\lim_{t\rightarrow\infty}e^{-\Lambda^* t}\E\left[e^{-\eta\gamma(\psi-1)\frac{1-\zeta^{1-\gamma}}{1-\gamma}\int_{0}^{t}M_{s}^{h}ds}(X_{t}^{c,\pi,h})^{1-\gamma}\right] = 0,\quad \hbox{for some $\eta\in(1-\frac{1}{\gamma},1)$}. 
\end{equation}
\end{definition}

\begin{definition}\label{cP}
The set of {\it permissible} strategies $(c,\pi,h)$, denoted by $\mathcal P$, is defined as follows.
\begin{itemize}
\item [(i)] For the case $\beta=0$ and $g\equiv 0$ (i.e. with neither aging nor healthcare), $\cP:=\cP_1$;
\item [(ii)] For the case $\beta>0$ (i.e. with aging), 
\[
\cP:=
\begin{cases}
\cP_1,\quad &\hbox{if}\ \gamma\in\big(\frac{1}{\psi},1\big)\ \hbox{or}\ \zeta= 1,\\
\cP_2,\quad &\hbox{if}\ \gamma> 1\ \hbox{and}\ \zeta\in(0,1),
\end{cases}
\]
\end{itemize}
\end{definition}

\begin{remark}
When there is aging ($\beta>0$), for the case $\gamma> 1$ and $\zeta\in(0,1)$, we need $(X^{c,\pi,h})^{1-\gamma}$ to satisfy the slightly stronger condition \eqref{Permissible} (than the transversality condition in \eqref{TransversalityCondition}), so that the general verification Theorem~\ref{Verification} can be applied; see Appendix~\ref{subsec:proof of Thm:AH} for details. 
\end{remark}

The rest of the section presents main results in three different settings of aging and access to healthcare, in order of complexity. 

\subsection{Neither Aging nor Healthcare}\label{beta=0, g=0}
When the natural growth rate of mortality is zero ($\beta=0$) and healthcare is unavailable ($g\equiv 0$), the mortality process is  constant, i.e. $M_{t}\equiv m$. Consequently, in the HJB equation \eqref{HJB}, all derivatives in $m$ should vanish; also, as $v(x,m)$ is nondecreasing in $x$ by definition, the second supremum in \eqref{HJB} should be zero. Corresponding to this largely simplified HJB equation, \eqref{ODE} reduces to 
\[
0 = u(m)^{2}-\tilde{c}_{0}(m) u(m), 
\]
which directly implies $u(m)=\tilde c_0(m)$. 
The problem \eqref{problem'} can then be solved explicitly. 

\begin{proposition}\label{prop:NoAging}
Assume $\beta=0$ and $g\equiv 0$. For any $m\ge 0$, if $\tilde{c}_{0}(m)>0$ in \eqref{c_0}, then 
\[
v(x,m) = \delta^{\theta}\frac{x^{1-\gamma}}{1-\gamma}\tilde{c}_{0}(m)^{-\frac{\theta}{\psi}}\quad \hbox{for $x> 0$.} 
\]
Furthermore, $c_{t}^{*}:= \tilde{c}_{0}(m)X_t$, $\pi^{*}_{t}:= \frac{\mu}{\gamma\sigma^{2}}$, and $h^*_t:= 0$, for $t\ge 0$, form an optimal control for \eqref{problem'}.
\end{proposition}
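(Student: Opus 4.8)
The plan is to invoke the general verification Theorem~\ref{Verification} with the explicit candidate $w(x,m):=\delta^{\theta}\frac{x^{1-\gamma}}{1-\gamma}\tilde c_{0}(m)^{-\theta/\psi}$, which is well defined and of class $C^{2,1}$ in $x$ on $\R_+$ whenever $\tilde c_0(m)>0$, with $w_x>0$ and $w_{xx}<0$ there. Since $\beta=0$ and $g\equiv 0$ force $M^h_t\equiv m$, for any fixed $m$ with $\tilde c_0(m)>0$ the mortality state never moves, so only $w(\cdot,m)$ and its $x$-derivatives enter the verification argument; in particular the $m$-derivative term $\beta m w_m$ in \eqref{HJB} vanishes and the healthcare supremum $\sup_{h\ge0}\{-g(h)m w_m-hxw_x\}=\sup_{h\ge0}\{-hxw_x\}$ equals $0$ at $h=0$ because $w_x>0$. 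Thus it suffices to check that $w$ solves the remaining (reduced) HJB equation; after substituting the ansatz $w=\delta^{\theta}\frac{x^{1-\gamma}}{1-\gamma}u(m)^{-\theta/\psi}$ and carrying out the reduction that produced \eqref{ODE}, this amounts to the algebraic identity $u(m)^2=\tilde c_0(m)u(m)$, which holds with $u(m)=\tilde c_0(m)>0$.

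Next I would set $\cP=\cP_1$ as in Definition~\ref{cP}(i) and verify the structural hypotheses of Theorem~\ref{Verification} with $k=k^*$, so that $\Lambda=\Lambda^*$ by \eqref{Lambda^*}. Because $M^h\equiv m$, for every $(c,\pi,h)\in\cP_1$ one has $w(X^{c,\pi,h}_t,M^h_t)=\delta^{\theta}\tilde c_0(m)^{-\theta/\psi}\frac{(X^{c,\pi,h}_t)^{1-\gamma}}{1-\gamma}$, a fixed positive multiple of $(X^{c,\pi,h}_t)^{1-\gamma}$, and $x\,w_x(x,m)$ is the same positive multiple of $x^{1-\gamma}$; hence the requirement ``$w(X^{c,\pi,h}_\cdot,M^h_\cdot)\in\cE^h_{k^*}$'' and condition \eqref{veri condition} are precisely the integrability and transversality conditions on $(X^{c,\pi,h})^{1-\gamma}$ and on $\pi (X^{c,\pi,h})^{1-\gamma}$ that are built into the definition of $\cP_1$, while \eqref{cP condition} is just the inclusion $\cP_1\subseteq\cH_{k^*}$. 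Theorem~\ref{Verification}(i) then gives $w(x,m)\ge v(x,m)$.

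For the reverse inequality and optimality I would read off the maximizers from \eqref{bar's}: $\bar c(x,m)=x\,u(m)=x\tilde c_0(m)$, $\bar\pi=\mu/(\gamma\sigma^2)$, and, since $g\equiv 0$, $\bar h\equiv 0$; these are exactly $c^*,\pi^*,h^*$ in the statement. Plugging them into \eqref{Wealth} yields the linear SDE $dX_t=X_t(r+\mu\pi^*-\tilde c_0(m))dt+\sigma\pi^*X_t\,dB_t$, whose unique strong solution is a geometric Brownian motion staying in $\R_+$. It then remains to check that $(c^*,\pi^*,h^*)\in\cP_1$ and that $W^*_t:=w(X^{c^*,\pi^*,h^*}_t,m)$ satisfies \eqref{comparison_condition}. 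The latter holds with equality: using $\theta(1-\tfrac1\psi)=1-\gamma$ and $k^*+(\psi-1)\frac{1-\zeta^{1-\gamma}}{1-\gamma}m=\tilde c_0(m)$, the right-hand side of \eqref{comparison_condition} with $c_s=c^*_s=\tilde c_0(m)X_s$, $M^h_s=m$, $k=k^*$ equals $\delta^{\theta}\tilde c_0(m)^{-\theta/\psi}\frac{(X^{c^*}_s)^{1-\gamma}}{1-\gamma}=W^*_s$. For membership in $\cP_1$ the integrability parts follow from standard moment bounds for geometric Brownian motion (including $\E[\sup_{s\in[0,t]}|X^{c^*}_s|^{1-\gamma}]<\infty$, which for $\gamma>1$ uses finiteness of all negative moments of the running minimum of a GBM, i.e.\ exponential moments of the supremum of a Brownian motion with drift), and the transversality condition reduces, via the explicit log-normal moment $\E[(X^{c^*}_t)^{1-\gamma}]=x^{1-\gamma}e^{\rho t}$, to showing $\rho-\Lambda^*-\gamma(\psi-1)\frac{1-\zeta^{1-\gamma}}{1-\gamma}m<0$; the algebra collapses this exponent exactly to $-\tilde c_0(m)<0$, so the limit in \eqref{TransversalityCondition} (and in \eqref{Permissible}) is $0$. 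Theorem~\ref{Verification}(ii) then yields $w=v$, i.e.\ $v(x,m)=\delta^{\theta}\frac{x^{1-\gamma}}{1-\gamma}\tilde c_0(m)^{-\theta/\psi}$, together with the optimality of $(c^*,\pi^*,h^*)$.

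The main obstacle is the final bookkeeping that places the candidate in $\cP_1$: the clean cancellation identifying the exponent of $e^{-\Lambda^* t}\E[e^{-\gamma(\psi-1)\frac{1-\zeta^{1-\gamma}}{1-\gamma}\int_0^t M^h_s\,ds}(X^{c^*}_t)^{1-\gamma}]$ with $-\tilde c_0(m)$ (reconciling $k^*$, $\Lambda^*$, $\tilde c_0$ and the GBM log-moment), plus the negative-moment estimate for the running minimum of a GBM when $\gamma>1$. Everything else is a direct application of Theorem~\ref{Verification} once the reduced HJB identity $u=\tilde c_0$ is recorded.
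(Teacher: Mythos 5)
Your proposal is correct and follows essentially the same route as the paper's proof: the same candidate $w$, the same maximizers from \eqref{bar's}, the same exponent cancellation identifying the transversality rate with $-\tilde c_0(m)$, and the same observation that \eqref{comparison_condition} holds with equality for $W^*$ via $k^*+(\psi-1)\frac{1-\zeta^{1-\gamma}}{1-\gamma}m=\tilde c_0(m)$; the paper bounds $\E\big[\sup_{s\in[0,t]}(X^*_s)^{1-\gamma}\big]$ by the Burkh\"older--Davis--Gundy inequality rather than by running-extremum moments of the geometric Brownian motion, an immaterial difference. The one detail to spell out is that membership in $\cP_1$ also requires $(c^*,h^*)$ to be $k^*$-admissible, which is closed exactly as in the paper by noting that $W^*$ itself solves BSDE \eqref{BSDE_V} via It\^o's formula, since $w$ attains equality in the HJB at $(\bar c,\bar\pi,\bar h)$ and the stochastic integral is a true martingale thanks to \eqref{dodo}.
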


\begin{proof}
See Section~\ref{subsec:proof of prop:NoAging}. 
\end{proof}

Proposition~\ref{prop:NoAging} shows that without aging and healthcare, optimal investment follows classical Merton's proportion, while the optimal consumption rate is the constant $\tilde{c}_{0}(m)$, dictated by the fixed mortality $m$. By \eqref{c_0}, for the case $\zeta=1$, $\tilde c_0(m)\equiv \psi\delta + (1-\psi)\big(r+\frac{1}{2\gamma}\left(\frac{\mu}{\sigma}\right)^{2}\big) $ no longer depends on $m$. Indeed, with no loss of wealth (and thus utility) at death, dying sooner or later 
does not make a difference to one who maximizes lifetime utility plus bequest utility.

As $\frac{\zeta^{1-\gamma}-1}{1-\gamma}<0$ for $\zeta<1$ and $0<\gamma\neq 1$, we observe from \eqref{c_0} that a larger mortality rate $m$ induces a larger consumption rate due to EIS $\psi>1$. 
This can be explained by the usual income and substitution effects in response to negative wealth shocks.
A larger mortality rate makes the loss of wealth at death more pressing and imminent. This reduces the total income generated by saving up to the death time, leading to the income effect that reduces consumption in the current period. On the other hand, as saving is now less effective in generating future income, the opportunity cost of consumption in the current period decreases. This brings about the substitution effect that increases current consumption. As is known in the literature, when EIS $\psi>1$, the substitution effect prevails, encouraging the agent to consume more. 


\subsection{Aging without Healthcare}\label{beta>0, g=0}
When the natural growth of mortality is positive ($\beta>0$) but healthcare is unavailable ($g\equiv 0$), mortality grows exponentially, i.e. $M_{t}=me^{\beta t}$. 
As $g\equiv 0$ and $v(x,m)$ is nondecreasing in $x$ by definition, the second supremum in \eqref{HJB} vanishes. It follows that \eqref{ODE} reduces to 
\begin{equation}\label{ODE'}
0 = u(m)^{2}-\tilde{c}_{0}(m) u(m)-\beta mu'(m),\quad m>0.   
\end{equation}
This type of differential equations can be solved explicitly.

\begin{lemma}\label{ANH_ODE}
	Fix $\ell>0$, and define the function $u_\ell:\R_+\to\R_+$ by
	\begin{equation}\label{u_q}
	u_{\ell}(m) := \left(\frac{1}{\ell}\int_{0}^{\infty}e^{\frac{\psi-1}{\ell(1-\gamma)}(\zeta^{1-\gamma}-1)my}(y+1)^{-\left(1+\frac{k^*}{\ell}\right)}dy\right)^{-1}.
	\end{equation}
	If $k^*>0$ in \eqref{k^*}, then $u_\ell$ is the unique solution to the ordinary differential equation
	\begin{equation}\label{ANHp_ODE}
	0= u^{2}(m)-\tilde{c}_{0}(m)u(m)-\ell m u'(m),\quad\forall m >0,
	\end{equation}
	such that $\lim_{\ell\rightarrow0}u_{\ell}(m) = \tilde{c}_{0}(m)$. Moreover, $u_\ell$ satisfies
		\[
		u_{\ell}(0) = \tilde{c}_{0}(0) = k^*>0,\quad \lim_{m\rightarrow\infty}\left[u_{\ell}(m) - (\tilde{c}_{0}(m) + \ell)\right] = 0,
		\]
		\begin{equation}\label{ANHbound}
		\tilde{c}_{0}(m)< u_{\ell}(m)<\tilde{c}_{0}(m) + \ell,\quad \forall m>0.
		\end{equation}
\end{lemma}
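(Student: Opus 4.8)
The plan is to verify directly that the function $u_q$ defined by \eqref{u_q} solves the ODE \eqref{ANHp_ODE}, then establish the stated limits and bounds, and finally argue uniqueness. The starting point is the observation that \eqref{ANHp_ODE} is a Bernoulli-type first-order ODE: writing $\phi(m) := 1/u_q(m)$, the nonlinear equation $u^2 - \tilde c_0 u - qmu' = 0$ transforms into the \emph{linear} ODE $qm\phi'(m) = 1 - \tilde c_0(m)\phi(m)$, i.e. $\phi'(m) + \frac{\tilde c_0(m)}{qm}\phi(m) = \frac{1}{qm}$. Recalling from \eqref{c_0} that $\tilde c_0(m) = k^* + (1-\psi)\frac{\zeta^{1-\gamma}-1}{1-\gamma} m$, the integrating-factor computation is explicit: the integrating factor is $m^{k^*/q}\exp\!\big(\frac{1-\psi}{q}\frac{\zeta^{1-\gamma}-1}{1-\gamma} m\big)$ (note $1-\psi<0$ and $\zeta^{1-\gamma}-1$ has sign opposite to $1-\gamma$, so the exponent is $\le 0$ and the integral below converges). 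Solving and selecting the constant of integration so that $\phi$ stays finite and positive on $(0,\infty)$, one arrives — after the substitution $m\mapsto my$ or $s = m\cdot(\text{something})$ inside the integral — at exactly the representation $\phi(m) = \frac1q\int_0^\infty e^{\frac{\psi-1}{q(1-\gamma)}(\zeta^{1-\gamma}-1)my}(y+1)^{-(1+k^*/q)}\,dy$. The hypothesis $k^*>0$ is what makes this integral converge at $y=\infty$ (the factor $(y+1)^{-(1+k^*/q)}$ is integrable) and ensures the formula is well-defined; this is the role of that assumption.

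Next I would read off the boundary behavior. Setting $m=0$ gives $\phi(0) = \frac1q\int_0^\infty (y+1)^{-(1+k^*/q)}\,dy = \frac1q\cdot\frac{q}{k^*} = \frac1{k^*}$, hence $u_q(0) = k^* = \tilde c_0(0)$. For $m\to\infty$: the exponential factor forces the integral's mass to concentrate near $y=0$, and a Watson's-lemma / Laplace-type expansion of $\phi(m)$ as $m\to\infty$ yields $\phi(m) = \frac{1}{\tilde c_0(m)} - \frac{q}{\tilde c_0(m)^2} + o(\tilde c_0(m)^{-2})$ (equivalently, one can differentiate $\phi$ and feed it back into the linear ODE, or integrate by parts once in the defining integral), which upon inverting gives $u_q(m) = \tilde c_0(m) + q + o(1)$, i.e. $\lim_{m\to\infty}[u_q(m) - (\tilde c_0(m)+q)] = 0$. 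For $q\to 0$, the same Laplace-concentration argument applied with the small parameter $q$ (the integrand, after rescaling, concentrates) gives $\phi(m)\to 1/\tilde c_0(m)$, hence $u_q(m)\to\tilde c_0(m)$; one must use $\tilde c_0(m)>0$ here, which holds since $k^*>0$ and $\frac{\zeta^{1-\gamma}-1}{1-\gamma}<0$ combined with $1-\psi<0$ makes the $m$-term of $\tilde c_0$ nonnegative. Uniqueness among solutions with $\lim_{q\to 0}u_q(m) = \tilde c_0(m)$ follows because the linearized equation for $\phi$ is a first-order linear ODE on $(0,\infty)$ whose general solution differs by a multiple of the homogeneous solution $m^{-k^*/q}\exp(\cdots)$, which blows up at $m=0^+$ (since $k^*>0$); the normalization pins down that multiple to zero.

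For the two-sided bound \eqref{ANHbound}, the cleanest route is to avoid the explicit formula and argue by comparison / a maximum-principle argument directly on \eqref{ANHp_ODE}. Rewrite the ODE as $qmu'(m) = u(m)\big(u(m) - \tilde c_0(m)\big)$. To show $u_q(m) > \tilde c_0(m)$ for $m>0$: at $m=0$ the two agree, and one checks the sign of $u_q' - \tilde c_0'$ near $0$, or argues that if $u_q(m_0) \le \tilde c_0(m_0)$ for some $m_0>0$ then (tracking the sign of $u_q'$ where $u_q = \tilde c_0$, using that $\tilde c_0$ is affine with known slope) one derives a contradiction with $u_q(0) = \tilde c_0(0)$ and $u_q>0$. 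Symmetrically, to show $u_q(m) < \tilde c_0(m) + q$: suppose $u_q(m_1) \ge \tilde c_0(m_1) + q$ at some first such $m_1$; then at $m_1$, $qm_1 u_q'(m_1) = u_q(m_1)(u_q(m_1) - \tilde c_0(m_1)) \ge u_q(m_1)\cdot q > q\tilde c_0(m_1)$... I would compare $u_q'(m_1)$ with $(\tilde c_0 + q)'(m_1) = \tilde c_0'(m_1) = (1-\psi)\frac{\zeta^{1-\gamma}-1}{1-\gamma}$ and derive that $u_q$ cannot have crossed above $\tilde c_0 + q$, using the known constant sign of $\tilde c_0'$ and the asymptotic $u_q(m) - (\tilde c_0(m)+q)\to 0$ from below at infinity. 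The main obstacle I anticipate is precisely this bound argument: getting the comparison-at-a-first-crossing-point reasoning fully rigorous requires care about whether the relevant difference can touch zero tangentially, and one may need to invoke the asymptotics at both $m\to 0$ and $m\to\infty$ together with monotonicity of the ODE's right-hand side in $u$ to close it; the explicit-formula route is a safe fallback but requires the Laplace-expansion estimates to be made quantitative.
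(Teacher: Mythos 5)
Your route for deriving the formula coincides with the paper's: reduce \eqref{ANHp_ODE} to a linear equation for $\phi=1/u$, solve it explicitly, and kill the free constant via the requirement $\lim_{q\to 0}u_q=\tilde c_0$; the paper imports this general solution from (A.8) of \cite{Huang19} and then obtains all remaining assertions by citing \cite[Lemma A.1]{Huang19} with the constants there replaced by $k^*$, $q$, and $-\frac{\psi-1}{1-\gamma}(\zeta^{1-\gamma}-1)$. However, your reduction contains a sign error that invalidates your uniqueness step: from $0=u^2-\tilde c_0 u-qmu'$ one gets $qm\phi'=\tilde c_0\phi-1$, not $qm\phi'=1-\tilde c_0\phi$, so the homogeneous solution is $m^{+k^*/q}\,e^{\frac{\psi-1}{q(1-\gamma)}(\zeta^{1-\gamma}-1)m}$, which \emph{vanishes} as $m\to0^+$ (and, for $\zeta<1$, grows at infinity); it is not $m^{-k^*/q}(\cdots)$ and does not blow up at the origin. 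Consequently neither ``finite and positive on $(0,\infty)$'' nor the value at $m=0^+$ pins down the constant: adding any positive multiple of the homogeneous solution yields another positive, finite solution of \eqref{ANHp_ODE} with the same limit $k^*$ at the origin. The constant must be excluded through the $q\to0$ requirement itself (the paper's criterion) or through the $m\to\infty$ asymptotics, and your proposal supplies no such argument.

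The second gap is \eqref{ANHbound}, which you leave as an admittedly incomplete sketch; this two-sided bound, together with the limit at infinity, is precisely the output the paper uses later (it feeds into \eqref{u^* bounds} and the verification arguments), and the paper obtains it wholesale from \cite[Lemma A.1]{Huang19}. Your first-crossing idea can be closed, but it must actually be carried out: for $\zeta<1$, at any $m_0>0$ with $u_q(m_0)=\tilde c_0(m_0)$ the ODE forces $u_q'(m_0)=0<\tilde c_0'(m_0)$, so $u_q$ falls strictly below $\tilde c_0$; then $qmu_q'=u_q(u_q-\tilde c_0)<0$ (positivity of $u_q$ is immediate from \eqref{u_q}) keeps $u_q$ decreasing and trapped below $\tilde c_0$ for all larger $m$, contradicting $u_q-(\tilde c_0+q)\to0$, which in turn must first be established rigorously from the Watson-lemma expansion that your sketch only asserts; the upper bound requires an analogous trapping argument. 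Finally, note that for $\zeta=1$ one has $u_q\equiv k^*=\tilde c_0$, so the strict inequalities in \eqref{ANHbound} and the limit $u_q-(\tilde c_0+q)\to0$ only hold when $\zeta<1$; a self-contained proof (unlike the citation route) has to flag this degenerate case, which the paper's later arguments indeed treat separately.
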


\begin{proof}
Similarly to (A.8) in \cite{Huang19}, \eqref{ANHp_ODE} admits the general solution
	\[
	u(m) = \ell e^{\frac{\psi}{\theta \ell}(\zeta^{1-\gamma}-1)m}\left(C\beta m^{\frac{k}{\beta}}+ \int_{1}^{\infty}e^{\frac{\psi}{\theta \ell}(\zeta^{1-\gamma}-1)mv}v^{-(1+\frac{k}{\ell})}dv\right)^{-1},\quad \hbox{with}\ C\in\R. 
	\]
	To ensure $\lim_{\ell\rightarrow0}u(m) = \tilde{c}_{0}(m)$, we need $C=0$, which identifies the corresponding solution as
	\begin{align*}
	u_{\ell}(m) &= \ell e^{\frac{\psi}{\theta \ell}(\zeta^{1-\gamma}-1)m}\left( \int_{1}^{\infty}e^{\frac{\psi}{\theta \ell}(\zeta^{1-\gamma}-1)mv}v^{-(1+\frac{k}{\ell})}dv\right)^{-1}.
	\end{align*}
	A straightforward change of variable then gives the formula \eqref{u_q}. Now, replacing the positive constants $\frac{\delta + (\gamma-1)r}{\gamma}$, $\beta$, and $\frac{1-\zeta^{1-\gamma}}{\gamma}$ in \cite[Lemma A.1]{Huang19} by $k^*$, $\ell$, and $-\frac{\psi-1}{1-\gamma}(\zeta^{1-\gamma}-1)$ in our setting, we immediately obtain the remaining assertions.  
\end{proof}


\begin{proposition}\label{prop:Aging}
Assume $\beta>0$ and $g\equiv 0$. If $k^*>0$ in \eqref{k^*}, then 
\[
v(x,m) = \delta^{\theta}\frac{x^{1-\gamma}}{1-\gamma}u_\beta(m)^{-\frac{\theta}{\psi}},\qquad (x,m)\in \R_+^2, 
\]
where $u_{\beta}:\R_+\to\R_+$ is defined as in \eqref{u_q}, with $\ell=\beta$. Furthermore, $c^*_t:= u_\beta(me^{\beta t})X_t$, $\pi^*_t := \frac{\mu}{\gamma\sigma^{2}}$, and $h^*_t:=0$, for $t\ge 0$, form an optimal control for \eqref{problem'}.
\end{proposition}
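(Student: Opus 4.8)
The plan is to apply the general verification theorem (Theorem~\ref{Verification}) to the candidate value function $w(x,m):=\delta^{\theta}\frac{x^{1-\gamma}}{1-\gamma}u_{\beta}(m)^{-\theta/\psi}$, where $u_{\beta}$ is the function of Lemma~\ref{ANH_ODE} with $q=\beta$. First I would note that $u_{\beta}$ is smooth and strictly positive (by Lemma~\ref{ANH_ODE}, $u_{\beta}(m)\ge\tilde c_{0}(m)\ge k^{*}>0$), so $w\in C^{2,1}(\R_{+}\times\R_{+})$ with $w_{x}>0$, $w_{xx}<0$ and $(1-\gamma)w>0$. Because $g\equiv0$, the healthcare supremum $\sup_{h\ge0}\{-hxw_{x}\}$ in \eqref{HJB} is attained at $h=0$ with value $0$, while the consumption and investment suprema are attained at the explicit maximizers in \eqref{candidates}; substituting these maximizers and $w$ into \eqref{HJB} reproduces the reduction of \eqref{HJB} to \eqref{ODE} carried out in Section~\ref{sec:PF}, which for $g\equiv0$ becomes \eqref{ODE'}. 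Since $u_{\beta}$ solves \eqref{ODE'} by Lemma~\ref{ANH_ODE}, $w$ is a $C^{2,1}$ solution of \eqref{HJB}.

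Second I would check the remaining hypotheses of Theorem~\ref{Verification} for every $(c,\pi,h)\in\cP$, where $\cP$ is $\cP_{1}$ or $\cP_{2}$ according to Definition~\ref{cP}(ii). The crucial simplification is that $g\equiv0$ makes $M^{h}_{t}=me^{\beta t}$ deterministic, so $u_{\beta}(M^{h}_{t})^{-\theta/\psi}$ is a continuous deterministic function of $t$; hence $\E[\sup_{s\le t}|w(X^{c,\pi,h}_{s},M^{h}_{s})|]<\infty$ and condition \eqref{veri condition} follow from the requirements on $(X^{c,\pi,h})^{1-\gamma}$ built into the definition of $\cP_{1}$. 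For the transversality of $W_{t}:=w(X^{c,\pi,h}_{t},M^{h}_{t})$, I would split on the sign of $1-\gamma$: if $\gamma<1$ then $-\theta/\psi<0$ and $u_{\beta}\ge k^{*}$, so $u_{\beta}(M^{h}_{t})^{-\theta/\psi}$ is bounded and the transversality for $W$ is immediate from that of $(X^{c,\pi,h})^{1-\gamma}$; if $\gamma>1$ and $\zeta<1$, then $u_{\beta}(M^{h}_{t})^{-\theta/\psi}$ grows only polynomially in $M^{h}_{t}$ while $\int_{0}^{t}M^{h}_{s}\,ds=\tfrac1\beta(M^{h}_{t}-m)$ is linear in $M^{h}_{t}$, so $u_{\beta}(M^{h}_{t})^{-\theta/\psi}\le C_{\varepsilon}e^{\varepsilon\beta\int_{0}^{t}M^{h}_{s}ds}$ for every $\varepsilon>0$, and picking $\varepsilon$ small enough that $(1-\eta)\gamma(\psi-1)\tfrac{1-\zeta^{1-\gamma}}{1-\gamma}\ge\varepsilon\beta$ (possible since the exponent $\eta$ of $\cP_{2}$ satisfies $\eta<1$) lets the transversality for $W$ follow from \eqref{Permissible}; the case $\zeta=1$ is trivial since then $u_{\beta}\equiv k^{*}$. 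Theorem~\ref{Verification}(i) then gives $w\ge v$ on $\R^{2}_{+}$.

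Third I would treat the candidate optimizers from \eqref{bar's} with $u=u_{\beta}$ and (since $g\equiv0$) $\bar h\equiv0$: namely $c^{*}_{t}=X_{t}u_{\beta}(me^{\beta t})$, $\pi^{*}\equiv\mu/(\gamma\sigma^{2})$, $h^{*}\equiv0$. Under these feedback controls \eqref{Wealth} is a linear SDE with deterministic coefficients, hence has a unique strictly positive strong solution, which is lognormal with $\E[(X^{c^{*},\pi^{*},h^{*}}_{t})^{1-\gamma}]$ given in closed form. To verify $(c^{*},\pi^{*},h^{*})\in\cP$: by It\^o's formula and the fact that $(\bar c,\bar\pi,\bar h)$ attain the suprema in \eqref{HJB'}, $W^{*}_{t}:=w(X^{c^{*},\pi^{*},h^{*}}_{t},M^{h^{*}}_{t})$ solves \eqref{BSDE_V}; its $\sup$-integrability is immediate, and its transversality follows from a direct computation: using the closed-form moment, $\phi(t):=e^{-\Lambda^{*}t}\E\bigl[e^{-\gamma(\psi-1)\frac{1-\zeta^{1-\gamma}}{1-\gamma}\int_{0}^{t}M^{h^{*}}_{s}ds}|W^{*}_{t}|\bigr]$ is deterministic, and after inserting the ODE identity $\beta m\,u_{\beta}'(m)=u_{\beta}(m)(u_{\beta}(m)-\tilde c_{0}(m))$ from \eqref{ANHp_ODE}, the relation $\tilde c_{0}(m)=k^{*}+(\psi-1)\tfrac{1-\zeta^{1-\gamma}}{1-\gamma}m$, and the identity $\tfrac\theta\psi+1-\gamma=\theta$, its logarithmic derivative simplifies (with $m_{t}:=me^{\beta t}$) to
$$\frac{\phi'(t)}{\phi(t)}=-k^{*}-\theta\bigl(u_{\beta}(m_{t})-\tilde c_{0}(m_{t})\bigr)-(\psi-1)\frac{1-\zeta^{1-\gamma}}{1-\gamma}\,m_{t},$$
in which the middle term is bounded in modulus by $|\theta|\beta$ via \eqref{ANHbound} and the last term has a nonnegative coefficient, strictly positive when $\zeta<1$; as $m_{t}\to\infty$ this forces $\phi(t)\to0$. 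The same kind of computation gives the transversality of $(X^{c^{*},\pi^{*},h^{*}})^{1-\gamma}$ demanded by $\cP_{1}$ (resp.\ \eqref{Permissible} demanded by $\cP_{2}$). Finally, substituting $c^{*}_{s}=X^{c^{*}}_{s}u_{\beta}(M^{h^{*}}_{s})$ and using $\tfrac\theta\psi+1-\gamma=\theta$ once more, condition \eqref{comparison_condition} for $W^{*}$ reduces to $\tfrac{1}{1-\gamma}u_{\beta}(M^{h^{*}}_{s})^{-\theta}\le\tfrac{1}{1-\gamma}\tilde c_{0}(M^{h^{*}}_{s})^{-\theta}$, which — whether $\theta>0$ ($\gamma<1$) or $\theta<0$ ($\gamma>1$) — is equivalent to $u_{\beta}\ge\tilde c_{0}$, true by Lemma~\ref{ANH_ODE}. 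Thus $(c^{*},\pi^{*},h^{*})\in\cP$ and $W^{*}$ satisfies \eqref{comparison_condition}, so Theorem~\ref{Verification}(ii) gives $v=w$ on $\R^{2}_{+}$ and the optimality of $(c^{*},\pi^{*},h^{*})$.

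I expect the main obstacle to be the transversality bookkeeping when $\gamma>1$, where the factor $u_{\beta}(M^{h}_{t})^{-\theta/\psi}$ appearing in $w$ is itself growing; taming it is precisely why $\cP_{2}$ carries the slightly stronger condition \eqref{Permissible} with its margin $\eta<1$, and the decisive inputs are the ODE identity $\beta m\,u_{\beta}'=u_{\beta}(u_{\beta}-\tilde c_{0})$ together with the sandwich $\tilde c_{0}<u_{\beta}<\tilde c_{0}+\beta$ from Lemma~\ref{ANH_ODE}, which jointly render the logarithmic derivative of the relevant exponential explicit and eventually negative (indeed $\to-\infty$, as $M^{h}_{t}=me^{\beta t}\to\infty$). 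The HJB/ansatz reduction and the algebraic identities among $\theta,\psi,\gamma$ are routine.
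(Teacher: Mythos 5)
Your proposal is correct and follows essentially the same route as the paper: verification via Theorem~\ref{Verification} with the candidate $w(x,m)=\delta^{\theta}\frac{x^{1-\gamma}}{1-\gamma}u_{\beta}(m)^{-\theta/\psi}$, a case split on $\gamma$ and $\zeta$, the sandwich $\tilde c_{0}<u_{\beta}<\tilde c_{0}+\beta$, and reduction of \eqref{comparison_condition} to $u_{\beta}\ge\tilde c_{0}$. The only (legitimate) variation is bookkeeping: where the paper handles the $\gamma>1$, $\zeta<1$ transversality via the incomplete-gamma asymptotics \eqref{lim>1'} and the constants $s^{*},K$, you exploit the deterministic $M^{h}_{t}=me^{\beta t}$ to run a direct logarithmic-derivative computation with \eqref{ANHbound} (and a polynomial-versus-exponential bound tied to the margin $\eta<1$ in \eqref{Permissible}), which yields the same conclusions.
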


\begin{proof}
See Section~\ref{subsec:proof of prop:Aging}. 
\end{proof}

Observe from \eqref{c_0} and \eqref{k^*} that 
\begin{equation}\label{c_0=k^*+...}
\tilde c_0(m)= k^* + (\psi-1)\frac{(1-\zeta^{1-\gamma})m}{1-\gamma}.
\end{equation}
As $\psi>1$ and $\frac{1-\zeta^{1-\gamma}}{1-\gamma}>0$ for all $0<\gamma\neq 1$,  the condition $k^*>0$ ensures $\tilde c_0(m)>0$ for all $m>0$. This, together with $u_\beta>\tilde c_0$ (\eqref{ANHbound} with $\ell=\beta$), shows that $k^*>0$ in Proposition~\ref{prop:Aging} is essentially a well-posedness condition, ensuring that the optimal consumption rate $u_\beta(me^{\beta t})$ is strictly positive for all $t\ge 0$. Moreover, with $\ell=\beta$, \eqref{ANHbound} stipulates that aging enlarges consumption rate, but the increase does not exceed the growth of aging $\beta>0$; 
note that the increase in consumption results from the same substitution effect as discussed below Proposition~\ref{prop:NoAging}. 


\subsection{Aging and Healthcare}\label{beta>0, g>0}
For the general case where the natural growth of mortality is positive ($\beta>0$) and healthcare is available ($g\not\equiv 0$), we need to deal with the equation \eqref{ODE} in its full complexity. 

\begin{assump}\label{assump:AH}
	Let $g:\R_{+}\to\R_{+}$ be twice differentiable with $g(0)=0$, $g'(h)>0$ and $g''(h)<0$ for $h>0$, and satisfies the Inada condition
	\begin{equation}
	g'(0+) = \infty\quad \text{and}\quad g'(\infty) = 0, 
	\end{equation}
	as well as
	\begin{equation}\label{g<beta}
	g\left(I\left(\psi-1\right)\right)<\beta\quad \text{with}\quad I:= (g')^{-1}.
	\end{equation}
\end{assump}
Condition \eqref{g<beta} was first introduced in \cite{Huang19}. Its purpose will be made clear after the optimal healthcare spending strategy $h^*$ is introduced in Theorem~\ref{Thm:AH}; see Remark~\ref{rem:why g<beta}. 

\begin{lemma}\label{ODE_Solution}
Suppose Assumption \ref{assump:AH} holds. If $k^*>0$ in \eqref{k^*}, there exists a unique nonnegative, strictly increasing, strictly concave, classical solution $u^*:\R_+\to\R_+$ to \eqref{ODE}. 
	Furthermore, define 
	\[
	\underline{\beta}:= \beta - \sup\limits_{h\ge0}\left\{g(h)-(\psi-1)h\right\}\in(0,\beta). 
	\]
	Then, $\lim_{m\rightarrow\infty}\left[u^*(m) - (\tilde{c}_{0}(m) + \underline\beta)\right] = 0$ and
	\begin{equation}\label{u^* bounds}
	u_{\underline{\beta}}(m)\le u^*(m)\le \min\{u_{\beta}(m),\tilde{c}_{0}(m)+\underline{\beta}\}\quad \forall m>0.
	\end{equation}
\end{lemma}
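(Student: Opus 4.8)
The plan is to reduce the full equation \eqref{ODE} to the more tractable form \eqref{ANHp_ODE} by controlling the extra healthcare supremum term, and then invoke Lemma~\ref{ANH_ODE} together with a Perron-type construction as in \cite{Huang19}. First I would rewrite the supremum in \eqref{ODE}: for a candidate solution $u$ that is positive, strictly increasing, and concave (so that $mu'(m)>0$), define $\varphi(p):=\sup_{h\ge 0}\{g(h)-(\psi-1)ph\}$ for $p>0$. Since $g$ satisfies the Inada conditions in Assumption~\ref{assump:AH}, this supremum is attained at $h=I((\psi-1)p)$ with $I=(g')^{-1}$, and $\varphi$ is a well-defined, nonincreasing, convex function with $\varphi(p)\to g(0)=0$ as $p\to\infty$ and $\varphi(p)\uparrow\sup_{h\ge 0}\{g(h)-(\psi-1)h\}\cdot$(something) — more precisely, evaluating at $p=u(m)/(mu'(m))$, equation \eqref{ODE} becomes
\begin{equation*}
0 = u(m)^2 - \tilde c_0(m)u(m) - \beta m u'(m) + m u'(m)\,\varphi\!\left(\frac{u(m)}{mu'(m)}\right),\qquad m>0.
\end{equation*}
The key observation is that $0\le \varphi(p) \le \sup_{h\ge0}\{g(h)-(\psi-1)h\} = \beta-\underline\beta$ for the relevant range of $p$ (the lower bound $\varphi\ge 0$ from taking $h=0$; the upper bound because dropping the linear $p$-dependence only increases the sup, using $p>0$). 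Hence the "effective aging rate" $\beta - \varphi(\cdot)$ lies in $[\underline\beta,\beta]$, which already suggests the sandwich \eqref{u^* bounds}.

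Next I would set up the existence/uniqueness via the Perron method exactly as in \cite{Huang19}. The sub- and supersolutions are the explicit functions $u_{\underline\beta}$ and $u_\beta$ from Lemma~\ref{ANH_ODE}: I would verify that $u_{\underline\beta}$ is a subsolution and $u_\beta$ (equivalently $\min\{u_\beta,\tilde c_0+\underline\beta\}$) a supersolution of the operator defined by the right-hand side above, using that $u_q$ solves \eqref{ANHp_ODE} with parameter $q$ and monotonicity of that equation's solutions in $q$ (larger $q$ gives a larger solution, per the bounds \eqref{ANHbound}). Because $\varphi\in[0,\beta-\underline\beta]$, plugging $u_{\underline\beta}$ into the full operator and using that it exactly solves the $q=\underline\beta$ equation shows the residual has the right sign; symmetrically for $u_\beta$. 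The Perron construction then produces a solution $u^*$ trapped between them; the stated regularity (nonnegative, strictly increasing, strictly concave, classical) follows from the same bootstrap/ODE-regularity arguments as in \cite{Huang19} once the boundary behavior $u^*(0)=k^*$ is pinned down from the sandwich and $u_{\underline\beta}(0)=u_\beta(0)=\tilde c_0(0)=k^*$.

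For the asymptotics, I would show $\lim_{m\to\infty}[u^*(m)-(\tilde c_0(m)+\underline\beta)]=0$ by arguing that as $m\to\infty$ the argument $p=u^*(m)/(mu'(m))$ of $\varphi$ tends to $\infty$ (since $u^*\sim\tilde c_0(m)\sim$ linear in $m$ while $u^{*\prime}$ stays bounded because $u^*$ is concave and comparable to $\tilde c_0$), so $\varphi(p)\to 0$, i.e. the effective aging rate converges to $\beta$... wait — no: one wants the effective rate to converge to $\underline\beta$, so in fact the relevant limit is that $h^*(m)\to I(\psi-1)$, i.e. $p\to 1$, meaning $mu^{*\prime}(m)\to u^*(m)$; I would instead combine the upper bound $u^*\le \tilde c_0+\underline\beta$ with a lower bound obtained by comparison against $u_{\underline\beta}$ and its known asymptotics $u_{\underline\beta}(m)-(\tilde c_0(m)+\underline\beta)\to 0$ (Lemma~\ref{ANH_ODE}, with $q=\underline\beta$), squeezing $u^*$ between two quantities both converging to $\tilde c_0(m)+\underline\beta$. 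The main obstacle I anticipate is precisely this interplay between the sandwich bounds and the nonlinear, $u$- and $u'$-dependent argument of $\varphi$: unlike \cite{Huang19}, the "aging rate" is not constant but depends on the solution itself through $h^* = I\big((\psi-1)u^*(m)/(mu^{*\prime}(m))\big)$, so one must show a priori that this feedback keeps $\beta - \varphi$ within $[\underline\beta,\beta]$ along the actual solution (not just along candidates) and that it does not destroy monotonicity/concavity — this is where a careful differentiation of \eqref{ODE} and a maximum-principle argument for $u^{*\prime}$ and $u^{*\prime\prime}$, together with the comparison bounds, will be needed, mirroring but extending the corresponding step in \cite{Huang19}.
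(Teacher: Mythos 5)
Your route is essentially the paper's own: the paper proves this lemma in one line, by substituting the constants $\frac{1-\gamma}{\gamma}\mapsto\psi-1$, $\frac{\delta+(1-\gamma)r}{\gamma}\mapsto k^*$, $\frac{1-\zeta^{1-\gamma}}{\gamma}\mapsto-\frac{\psi-1}{1-\gamma}(\zeta^{1-\gamma}-1)$ into the Perron construction of \cite[Appendix A.3, Theorems 3.1 and 3.2]{Huang19}, whose barriers are exactly your $u_{\underline\beta}$ and $\min\{u_{\beta},\tilde c_0+\underline\beta\}$. Your squeeze for the asymptotics, $u_{\underline\beta}\le u^*\le \tilde c_0(m)+\underline\beta$ together with Lemma~\ref{ANH_ODE} applied with $q=\underline\beta$, is likewise how the limit is obtained there, so in substance you are re-deriving the cited result rather than taking a new path.

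One step, as written, would fail. You justify $\varphi(p):=\sup_{h\ge0}\{g(h)-(\psi-1)ph\}\le\beta-\underline\beta$ by ``dropping the linear $p$-dependence,'' but dropping that term yields $\sup_{h\ge0}g(h)$, which is $+\infty$ for the power-type efficacies used in Section~\ref{sec:calibration} and in any case is not $\beta-\underline\beta$. The correct bound is $\varphi(p)\le\varphi(1)=\beta-\underline\beta$, which uses that $\varphi$ is nonincreasing and therefore requires $p=\frac{u(m)}{mu'(m)}\ge1$. You do flag this feedback issue at the end, but the fix is much cheaper than the maximum-principle machinery you anticipate: the inequality is needed along the explicit barrier $u_{\underline\beta}$ (to get the sign of the residual in the sub/supersolution check) and along nonnegative concave candidates with $u(0)=k^*>0$, and for any such function concavity gives $u(m)-mu'(m)\ge u(0)\ge0$; this elasticity bound is precisely \cite[Lemma A.2]{Huang19}, already invoked in Remark~\ref{rem:why g<beta}. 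Note also that $\underline\beta\in(0,\beta)$ is exactly where Assumption~\ref{assump:AH} enters: the supremum defining $\underline\beta$ is attained at $h=I(\psi-1)$, so $\beta-\underline\beta\le g(I(\psi-1))<\beta$ by \eqref{g<beta}, while the Inada condition makes it positive. With these repairs your plan coincides with the construction the paper cites.
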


\begin{proof}
By replacing positive constants $\frac{1-\gamma}{\gamma}$, $\frac{\delta + (1-\gamma)r}{\gamma}$, and $\frac{1-\zeta^{1-\gamma}}{\gamma}$ in \cite[Appendix A.3]{Huang19} (particularly Theorems 3.1 and 3.2) by $\psi-1$, $k^*$, and $ -\frac{\psi-1}{1-\gamma}(\zeta^{1-\gamma}-1)$ in our setting, we get the desired results.
\end{proof}

\begin{remark}
The tractable lower and upper bounds for $u^*$ in \eqref{u^* bounds} will play a crucial role in verification arguments in the proof of Theorem~\ref{Thm:AH} below, as well as calibration in Section~\ref{sec:calibration}. 
\end{remark}

\begin{theorem}\label{Thm:AH}
Suppose Assumption \ref{assump:AH} holds.  If $k^*>0$ in \eqref{k^*}, then 
\begin{equation}\label{v's form}
v(x,m) = \delta^{\theta}\frac{x^{1-\gamma}}{1-\gamma}u^*(m)^{-\frac{\theta}{\psi}},\quad (x,m)\in\R^2_+, 
\end{equation}
where $u^*:\R_{+}\to\R_{+}$ is the unique nonnegative, strictly increasing, strictly concave, classical solution to \eqref{ODE}. Furthermore, $(c^*,\pi^*,h^*)$ defined by
	\[
	c^{*}_{t} := u^*(M_{t})X_t,\quad  \pi_{t}^{*}:= \frac{\mu}{\gamma\sigma^{2}},\quad h^{*}_{t}:= (g')^{-1}\left((\psi-1)\frac{u^{*}(M_{t})}{M_{t}(u^*)'(M_{t})}\right),\qquad t\ge 0
	\]
	is an optimal control for \eqref{problem'}. 
\end{theorem}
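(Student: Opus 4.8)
The plan is to verify the hypotheses of the general verification Theorem~\ref{Verification} with $w(x,m)$ given by the closed form \eqref{v's form}, using the classical solution $u^*$ from Lemma~\ref{ODE_Solution}. First I would check that $w$ is a genuine $C^{2,1}$ solution of the HJB equation \eqref{HJB}: differentiability in $m$ follows from that of $u^*$ (Lemma~\ref{ODE_Solution}), the required sign conditions $w_{xx}<0$, $w_m<0$ hold because $0<\gamma\neq 1$ makes $x\mapsto x^{1-\gamma}/(1-\gamma)$ concave (up to sign) and $u^*$ is strictly increasing so $u^*(m)^{-\theta/\psi}$ is monotone in $m$ in the right direction, and then the algebra reducing \eqref{HJB} to the ODE \eqref{ODE} (carried out heuristically in Section~\ref{DPP}) becomes rigorous once $u^*$ solves \eqref{ODE}. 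The maximizers in \eqref{candidates}/\eqref{bar's} are exactly the Borel measurable functions $\bar c,\bar\pi,\bar h$ required in Theorem~\ref{Verification}(ii), with $\bar\pi\equiv\mu/(\gamma\sigma^2)$ constant and $\bar c(x,m)=xu^*(m)$ linear in wealth; the Inada conditions in Assumption~\ref{assump:AH} guarantee $(g')^{-1}$ is well-defined on the relevant range so $\bar h$ makes sense.

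Next I would confirm that the candidate strategy $(c^*,\pi^*,h^*)$ lies in $\cP$ (as specified in Definition~\ref{cP}) and that $W^*_t=w(X^{c^*,\pi^*,h^*}_t,M^{h^*}_t)$ satisfies the pointwise bound \eqref{comparison_condition}. Since $\bar c(x,m)=xu^*(m)$ and $\pi^*$ is constant, the wealth equation \eqref{Wealth} becomes linear, so $X^{c^*,\pi^*,h^*}$ is an explicit exponential of an It\^o integral; this gives a unique solution (so $(c^*,\pi^*,h^*)\in\cH_{k^*}$) and also an explicit handle on $(X^{c^*,\pi^*,h^*})^{1-\gamma}$. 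The bound \eqref{comparison_condition} with $V,c,h$ replaced by $W^*,c^*,h^*$ reduces, after substituting \eqref{v's form} and $c^*_t=u^*(M_t)X_t$, to the algebraic inequality
\begin{equation*}
\delta^\theta u^*(m)^{-\theta/\psi}\le \delta^\theta\Bigl(k^*+(\psi-1)\tfrac{1-\zeta^{1-\gamma}}{1-\gamma}m\Bigr)^{-\theta}u^*(m)^{1-\gamma},
\end{equation*}
i.e.\ (using $\tilde c_0(m)=k^*+(\psi-1)\tfrac{1-\zeta^{1-\gamma}}{1-\gamma}m$ from \eqref{c_0=k^*+...} and $\theta/\psi=\theta-(1-\gamma)$) to $u^*(m)\ge \tilde c_0(m)$ up to the exponent bookkeeping, which is precisely the lower bound built into \eqref{ANHbound}/\eqref{u^* bounds}. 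So this step is essentially handed to us by Lemma~\ref{ODE_Solution}.

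The real work, and what I expect to be the main obstacle, is verifying the two integrability requirements in Theorem~\ref{Verification}: that $W^*_t\in\cE^{h^*}_{k^*}$ — in particular the transversality condition, the second part of \eqref{TransversalityCondition} (or the stronger \eqref{Permissible} when $\gamma>1$ and $\zeta\in(0,1)$) — and that $\E[\sup_{s\le t}\pi^*_s X^{c^*,\pi^*,h^*}_s w_x(\cdot)]<\infty$. The difficulty, flagged in the introduction and in the remark after Definition~\ref{cE}, is that the exponential weight in \eqref{TransversalityCondition} contains the \emph{stochastic} clock $\int_0^t M^{h^*}_s\,ds$ rather than a constant rate, so one cannot just read off exponents. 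The strategy is: (i) use the explicit linear-wealth representation to reduce the transversality limit to controlling $e^{-\Lambda^* t}\E[e^{-\text{const}\int_0^t M^{h^*}_s ds}(X_t)^{1-\gamma}]$; (ii) bound the controlled mortality $M^{h^*}$ from below and above using the ODE bounds — since $\underline\beta\le$ effective growth $\le\beta$ by \eqref{u^* bounds} and $\underline\beta>0$, the clock $\int_0^t M^{h^*}_s ds$ grows at least like $\tfrac{m}{\underline\beta}(e^{\underline\beta t}-1)$, which for $\gamma>1$ forces the exponential weight to decay and kills the $e^{-\Lambda^* t}$ factor regardless of its sign (this is the mechanism alluded to in Remark~\ref{rem:negative Lambda}); (iii) for $\gamma\in(1/\psi,1)$ or $\zeta=1$ the sign of the relevant exponent flips and one instead needs the $e^{-\Lambda^* t}$ factor itself to dominate, which $k^*>0$ arranges via $\Lambda^*>0$. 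The $\sup$-integrability \eqref{veri condition} follows from the same exponential estimates together with a Doob/BDG bound on the It\^o integral in the wealth representation. I would relegate these estimates to Appendix~\ref{subsec:proof of Thm:AH}, which is where the paper says the details live; the clean statement is that Theorem~\ref{Verification}(ii) then applies and yields both the identification $w=v$ and the optimality of $(c^*,\pi^*,h^*)$.
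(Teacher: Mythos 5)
Your overall route is the paper's: plug $w(x,m)=\delta^{\theta}\frac{x^{1-\gamma}}{1-\gamma}u^*(m)^{-\theta/\psi}$ into the general verification Theorem~\ref{Verification}, use Lemma~\ref{ODE_Solution} for existence and the bounds \eqref{u^* bounds}, and reduce \eqref{comparison_condition} to $u^*\ge\tilde c_0$ via \eqref{c_0=k^*+...}. Those parts are fine. But the hardest content of the proof is exactly the part you wave at, and your heuristic for it is not correct as stated. First, Theorem~\ref{Verification} requires $w(X^{c,\pi,h}_t,M^h_t)\in\cE^h_{k^*}$ for \emph{every} permissible $(c,\pi,h)$, not only for the candidate. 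When $\gamma>1$ and $\zeta\in(0,1)$, membership in $\cP_2$ only gives \eqref{Permissible} with the partial weight $\eta\in(1-\tfrac1\gamma,1)$, while the transversality condition in \eqref{TransversalityCondition} carries the full weight and the extra factor $u^*(M^h_t)^{-\theta/\psi}$, which grows polynomially in $M^h_t$ (hence exponentially in $t$). Converting the $\eta$-weighted condition into the full-weighted one is the paper's Case (i)-2: one splits the weight into $\alpha,\alpha'$ as in \eqref{alpha's} and proves that $F_t=u_\beta(M^h_t)^{-\theta/\psi}e^{-\alpha'\int_0^t M^h_s ds}$ is uniformly bounded, using the ODE \eqref{ANHp_ODE} satisfied by $u_\beta$ and an argument about where local maxima of $t\mapsto F_t$ can occur. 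Nothing in your sketch produces this step, and it is precisely why $\cP_2$ is defined with a strictly smaller weight $\eta<1$.

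Second, for the candidate strategy with $\gamma>1$, $\zeta\in(0,1)$, your claim that the clock $\int_0^t M^{h^*}_s ds\gtrsim\frac{m}{\underline\beta}(e^{\underline\beta t}-1)$ ``forces the exponential weight to decay and kills the $e^{-\Lambda^* t}$ factor'' misidentifies the competition. The factor $e^{-\Lambda^* t}$ is harmless; the danger is $(X^*_t)^{1-\gamma}=\exp\bigl((\gamma-1)\int_0^t(u^*(M^*_s)+h^*_s+\dots)ds\bigr)$, which itself grows at a double-exponential rate because $u^*(m)\sim\tilde c_0(m)+\underline\beta$ is linear in $m$ and $M^*_t$ grows exponentially. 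The transversality weight $e^{-\alpha\int_0^t M^*_s ds}$ decays at a comparable double-exponential rate, and which one wins is a marginal race between $\alpha m$ and $(\gamma-1)\bigl(\tfrac{\psi}{\psi-1}u_\beta(m)-\tfrac{1}{\psi-1}\tilde c_0(m)\bigr)$ as $m\to\infty$. The paper settles it via \eqref{B4}, the incomplete-gamma asymptotics of $u_\beta$ in \eqref{lim>1}, and the inequality $\eta>1-\tfrac1\gamma$, which is exactly what makes the limit in \eqref{lim>1} exceed $1$ and yields \eqref{>for s>}. Your sketch never compares these linear coefficients, so as written it would not establish \eqref{Permissible} (nor \eqref{TransversalityCondition}) for $W^*$ in the case $\gamma>1$, $\zeta\in(0,1)$; the remaining cases ($\gamma\in(\tfrac1\psi,1)$, or $\zeta=1$ where $u^*\equiv k^*$) are indeed as easy as you suggest.
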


\begin{proof}
See Section~\ref{subsec:proof of Thm:AH}. 
\end{proof}

Theorem~\ref{Thm:AH} identifies the marginal efficacy of optimal healthcare spending, $g'(h^*_t)$, to be inversely proportional to $\frac{m(u^*)'(m)}{u^*(m)}$, the elasticity of consumption with respect to mortality, where the constant of proportionality depends on EIS $\psi$. Note that a larger EIS implies less healthcare spending, as $(g')^{-1}$ is strictly decreasing. In a sense, healthcare spending is like saving: it crowds out current consumption, but potentially enlarges future consumption by extending one's lifetime. Since a larger EIS means a stronger substitution effect 
(as discussed below Proposition~\ref{prop:NoAging}), one substitutes more consumption for saving-like healthcare spending with a larger $\psi$. 


\begin{remark}\label{rem:why g<beta}
As the same argument in \cite[Lemma A.2]{Huang19} implies $\frac{u^*(m)}{ m(u^{*}(m))'}\ge1$ for $m>0$, 
\begin{equation}\label{g(h^*)<beta}
g(h^*_t) = g\left(   I\left((\psi-1)\frac{u^{*}(M_{t})}{M_{t}(u^*)'(M_{t})}\right)  \right)\le g(I(\psi-1)) <\beta, 
\end{equation}
where the last inequality is due to \eqref{g<beta}. In other words, \eqref{g<beta} stipulates that optimizing healthcare spending can only reduce, but not reverse, the growth of mortality.  
\end{remark}

\begin{remark}\label{rem:confirms}
Since the transferred wealth at death is $\zeta X^{c^*,\pi^*,h^*}_{\tau-}$, \eqref{v's form} indicates that 
\[
\delta^{\theta}\frac{(\zeta X^{c^*,\pi^*,h^*}_{\tau-})^{1-\gamma}}{1-\gamma}u^*(M^{h^*}_{\tau-})^{-\frac{\theta}{\psi}} = \zeta^{1-\gamma} v(X^{c^*,\pi^*,h^*}_{\tau-}, M^{h^*}_{\tau-}),
\]
i.e. the loss of wealth at death reduces utility by a factor of $\zeta^{1-\gamma}$, confirming the setup in \eqref{Vtilde_Utility}. 
\end{remark}

\begin{remark}
For the case $\psi=1/\gamma>1$, Propositions~\ref{prop:NoAging}, \ref{prop:Aging} and Theorem~\ref{Thm:AH} reduce to results in \cite{Huang19} under time-separable utilities; see Propositions 3.1, 3.2, and Theorems 3.4, 4.1 therein. 
\end{remark}


\section{Calibration: A Preliminary View}\label{sec:calibration}
 In this section, we calibrate the model in Section~\ref{beta>0, g>0} to actual mortality data. We take as given $r=1\%$, $\delta=3\%$, $\psi=1.5$, $\gamma=2$, $\zeta=50\%$, $\mu = 5.2\%$, and $\sigma = 15.4 \%$. 
A safe rate $r=1\%$ approximates the long-term average real rate on Treasury bills in \cite{beeler2012long}, and the time preference $\delta=3\%$ is also consistent with estimates therein; $\psi=1.5$ is estimated in \cite{Bansal04}; $\gamma=2$ follows the specification in \cite{Kraft17} and \cite{Xing}; $\mu = 5.2\%$ and $\sigma = 15.4 \%$ are taken from the long-term study \cite{Imperial18}; $\zeta=50\%$ is a rough estimate of inheritance and estate taxes in developed countries. These values ensure $k^*>0$ in \eqref{k^*}. 
In addition, we take the efficacy function $g:\R_+\to\R_+$ to be 
\begin{equation}\label{g'}
g(z) = a\cdot ({z^{q}}/{q}),\quad \hbox{with}\ a>0\ \hbox{and}\ q\in(0,1).
\end{equation}
The equation \eqref{ODE} then becomes
 \begin{equation}\label{ODE_Example}
 u^{2}(m) - \tilde{c}_{0}(m)u(m) - \beta mu'(m) + ({(1-q)}/{q})a^{\frac{1}{1-q}}\left((\psi-1)u(m)\right)^{\frac{-q}{1-q}}(mu'(m))^{\frac{1}{1-q}}=0, 
 \end{equation}
and the optimal healthcare spending process is now 
$
 h_{t}^* = \big(a^{-1}(\psi-1)\frac{u^*(M_{t})}{M_{t}(u^*)'(M_{t})}\big)^{\frac{-1}{1-q}},
 $
 where $u^*$ is the unique solution to \eqref{ODE_Example}. The endogenous mortality is then
 \begin{equation}\label{Mortality_Example}
 dM_{t} = M_{t}\left(\beta - \frac{1}{q}a^{\frac{1}{1-q}}\left((\psi-1)\frac{u^*(M_{t})}{M_{t}(u^*)'(M_{t})}\right)^{\frac{-q}{1-q}}\right)dt,\quad M_{0} = m_{0}>0.
 \end{equation}
 
We calibrate $\beta>0$, $a>0$, $q\in(0,1)$, and $m_0>0$ to mortality data in the US and UK. 
For each country, the natural growth rate of mortality $\beta>0$ is estimated from mortality data for the cohort born in 1900, assuming no healthcare available. Given this estimated $\beta>0$, healthcare parameters $a>0$ and $q\in (0,1)$  in \eqref{g'}, as well as initial mortality $m_0>0$, are calibrated by matching the endogenous mortality curve \eqref{Mortality_Example} with mortality data for the cohort born in 1940, through minimizing the mean squared error (MSE). 
Essentially, we work under the assumption that the 1900 cohort had no access to healthcare (whence its mortality grew exponentially with the Gompertz law) and the 1940 cohort had full access to healthcare. This is a crude simplification, but conforms to several realistic constraints; see \cite[Section 5.2]{Huang19}. 

It is worth noting that solving \eqref{ODE_Example} directly for $u^*$ is challenging. To the best of our knowledge, the mainstream solvers (e.g. in Mathematica and Matlab) crucially require that the first derivative $u'(m)$ in a first-order ODE be expressed as a function of $u(m)$ and $m$. Such an expression is not available to \eqref{ODE_Example} because of the nonlinearity induced by $g(z)=a z^{q}/q$. In an attempt to circumvent this, we follow \cite[Algorithm 8.1]{Kraft17} to approximate $u^*$ in a recursive manner. The algorithm, however, converges for some specifications of $(a,q)$ and diverges otherwise.\footnote{\cite[Algorithm 8.1]{Kraft17} converges desirably for a typical Epstein-Zin utility maximization problem without the consideration of healthcare. When healthcare is considered, the convergence breaks down due to the added efficacy function $g(z)=a z^{q}/q$.}  This makes it inappropriate for the purpose of calibration, where we need to solve \eqref{ODE_Example}  for a wide range of $(a,q)$ and select the best specification that brings the model-implied mortality closest to data.

In view of this, we settle ourselves with a fairly simple approximate of $u^*$, i.e.
\begin{equation}\label{u^* calibration}
	\overline{u}(m):=\frac{1}{2}\left(u_{\underline{\beta}}(m) + \min\{u_{\beta}(m),\tilde{c}_{0}(m)+\underline{\beta}\}\right),
\end{equation}
which is the average of the upper and lower bounds of $u^*$ in \eqref{u^* bounds}. By Lemmas~\ref{ANH_ODE} 
and \ref{ODE_Solution}, 
\[
\sup_{m>0}|u^*(m) -\overline{u}(m)|\le \underline{\beta}/2\quad \hbox{and}\quad \lim_{m\downarrow0}|u^*(m)-\overline{u}(m)|=\lim_{m\uparrow\infty}|u^*(m)-\overline{u}(m)| = 0. 
\]
As $\overline{u}(m)$ has an explicit formula for any specification of $(a,q)$, thanks to the formulas \eqref{c_0} and \eqref{u_q}, it facilitates the calibration significantly. 
The results are listed in Table \ref{table1}.


\begin{table}[H]
	\begin{center}
		\caption{Calibration Results}
		\begin{threeparttable}
			\begin{tabular}{l|c|c|c|c|c|c} 
				\toprule\midrule
				Country & $\beta$ (\%) & $m_{0}\times10^{4}$ & $a$ & $q$ & Model MSE $\times10^{6}$ & MSE $\times10^{6}$ \\
				\hline
				United States (US) & 7.24069 & 1.34995 & 0.19 & 0.61 & 0.0436896 & 0.128984\\
				United Kingdom (UK) & 7.79605 & 0.843827 & 0.19 & 0.60 & 0.0249924 & 0.12755\\
				\bottomrule\addlinespace[-2ex]
			\end{tabular}
		\end{threeparttable}
		\label{table1}
	\end{center}
\end{table}
We stress that the calibration performed in this section, based on $\overline u$ in \eqref{u^* calibration}, is only {\it preliminary}. A more sophisticated approximation of $u^*$ is certainly needed for an in-depth, full-fledged calibration. The purpose of our preliminary study is to  demonstrate the potential of our model and possibly draw more attention to this problem for further developments.

\subsection{Results}
In Figure \ref{fig:cal_US}, 
the blue line is obtained by linearly regressing mortality data of the 1900 cohort (blue dots), while 
the red line is the model-implied mortality curve calibrated to mortality data of the 1940 cohort (red dots). 
Clearly, our model reproduces declines in mortality that are very close to ones observed historically. When compared with  \cite[Figure 5.2]{Huang19}, Figure \ref{fig:cal_US} provides a much better fit. This improvement can be attributed to the use of Epstein-Zin utilities (so that $\gamma$ and $\psi$ can {\it both} take empirically relevant values), the inclusion of risky assets, and modifications of calibration methods. 
Figure~\ref{fig:calibration} shows that our model also performs well for the UK data. 

We also compare our model performance with linear regression. Indeed, without any idea of healthcare, one can model mortality data of the 1940 cohort by linear regression (as we did for the 1900 cohort). Our model outperforms linear regression: 
the sixth column of Table \ref{table1} reports MSEs under our model, significantly smaller than those under linear regression in the seventh column. 

\begin{figure}[h]
\centering
	\centering
\includegraphics[width=.6\linewidth]{{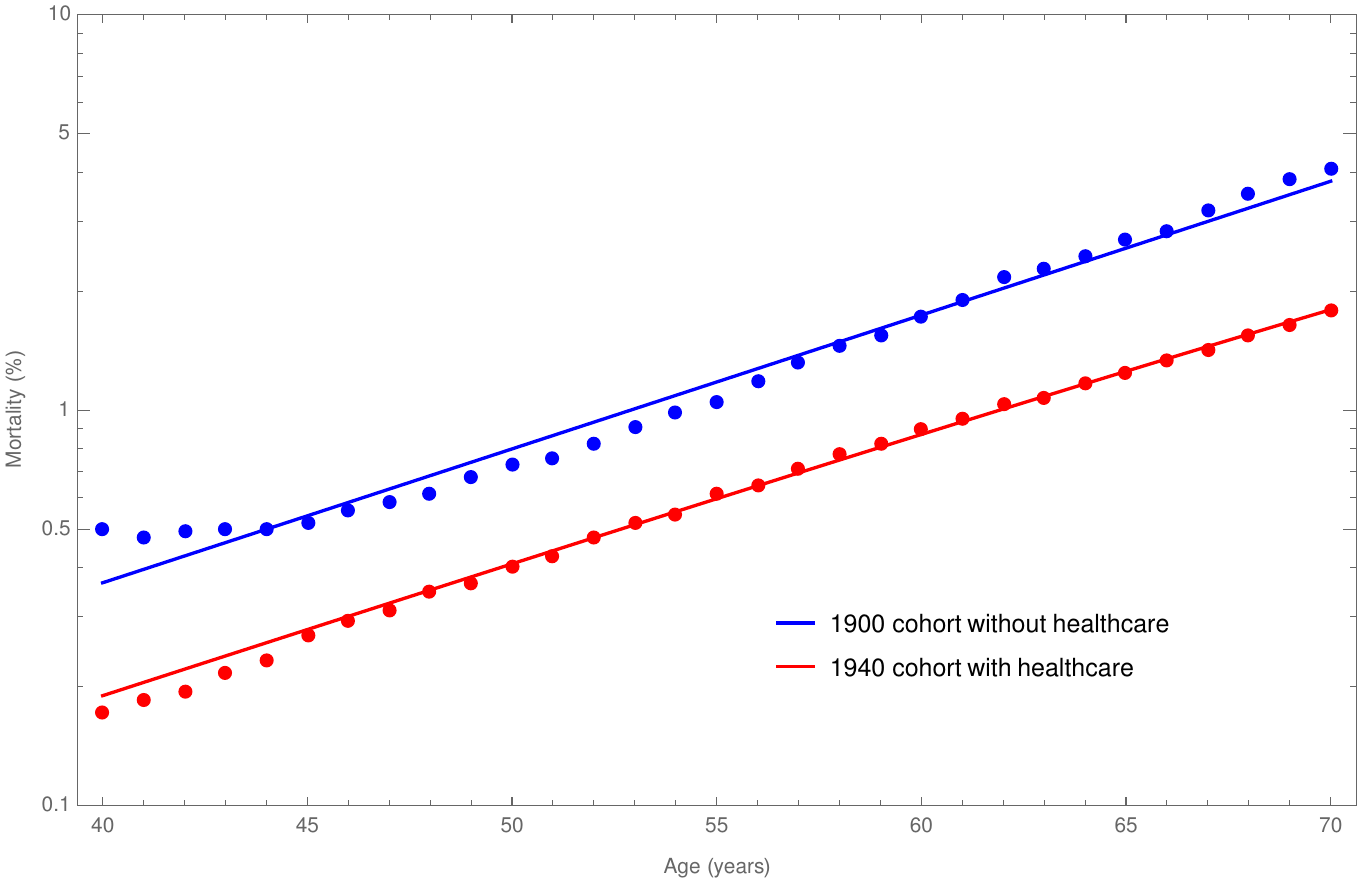}}
\caption{\small Mortality rates (log scale) at adults' ages for the cohorts born in 1900 and 1940 in the UK. The dots are actual data (Berkeley Human Mortality Database) and the lines are model-implied mortality curves.}
\label{fig:calibration}
\end{figure}

Figure~\ref{fig: healthcarespending} displays the model-implied optimal healthcare spending.  In both countries, the proportion of wealth spent on healthcare is negligible at age 40, but increases to 0.5-1\% at age 80. Figure \ref{fig:CalibratedEfficiency} presents the calibrated efficacy function $g(h) = a \frac{h^q}{q}$ for the two countries.  It particularly indicates that 
healthcare is more effective (in reducing mortality growth) in the UK than in the US. 
Along with Figure~\ref{fig: healthcarespending}, we find that lower efficacy of healthcare is compensated by larger healthcare spending relative to wealth. That is, with enhanced efficacy, our model stipulates {\it less} healthcare spending, instead of {\it more} to exploit the reduced marginal cost to curtail mortality growth.


\begin{figure}[h]
	\centering
	\begin{subfigure}{.5\linewidth}
		\centering
		\includegraphics[width=.85\linewidth]{{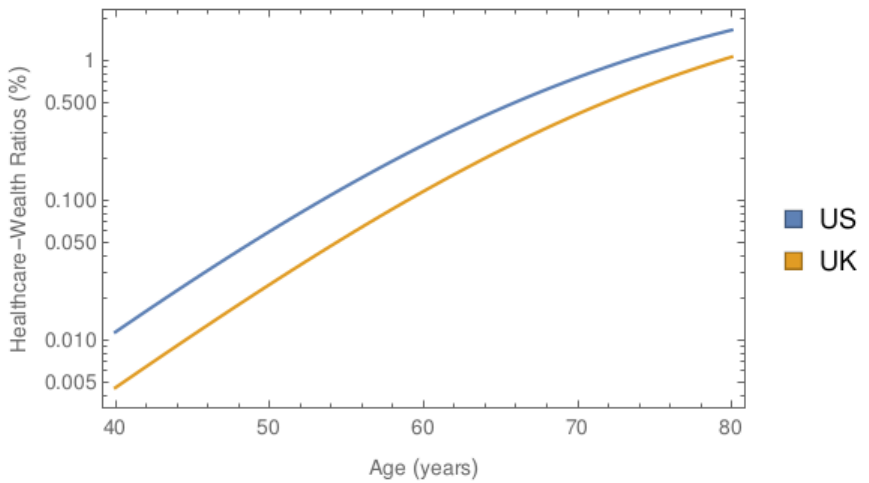}}
		\caption{Healthcare-wealth ratio (log scale) at adult ages.}
		\label{fig: healthcarespending}
	\end{subfigure}
	\begin{subfigure}{.48\linewidth}
		\centering
		\includegraphics[width=.85\linewidth]{{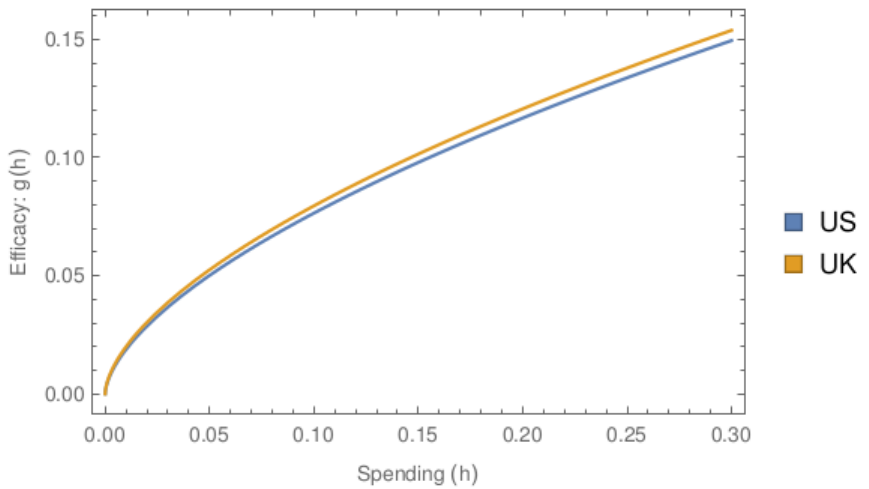}}
		\caption{Efficacy $g(h)$ given healthcare-wealth ratio $h$.}
		\label{fig:CalibratedEfficiency}
	\end{subfigure}
	\caption{\small Calibrated healthcare-wealth ratio and efficacy of healthcare in the US and UK. 
	}
\end{figure} 

While Figure \ref{fig:CalibratedEfficiency} hints at the potential of our model as a new analytic tool for healthcare efficacy, we stress that a more in-depth statistical and economic analysis is required here. First, one needs to find the confidence intervals for the estimated $(a,q)$, so as to test the hypothesis that the parameter differences across countries are statistically significant. Second, the economic interpretation of $q$ demands further investigation. While a higher $a$ unambiguously raises efficacy, the effect of $q$ is subtle: The efficacy increases faster with a lower $q$ when $h$ is small, but with a larger $q$ when $h$ is large. 
A careful analysis of these issues is well-warranted but beyond the scope of this paper, and we will leave it for future research.

\appendix
\section{Proofs}\label{sec:proofs} 
\subsection{Proof of Proposition~\ref{prop:decompose wtV}}\label{subsec:proof of prop:decompose wtV}
First, we assume that $\widetilde V$ is a $\mathbb G$-adapted semimartingale, with $\bar \E[\sup_{s\in[0,t]}|\widetilde V_{s}|]<\infty$ for all $t\ge 0$, that satisfies \eqref{Vtilde_Utility}. Our goal is to show that $\widetilde V$ must be of the form \eqref{wtV decompose}.
In view of \eqref{tau} and \eqref{exp law}, for any $0\le t \le s$, it holds for $\bar\P$-a.e. $\bar\omega=(\omega,\omega')\in\bar\Omega$ that
\begin{equation}\label{tau>s}
\bar \P(\tau>\ell\mid \F_s\vee\cH_t)(\bar\omega) = e^{-\int_{t}^{\ell}M_{u}^{h}(\omega)du} \ind_{\{\tau>t\}}(\bar\omega),\quad \forall t\le\ell\le s. 
\end{equation}
Also, since $\widetilde V$ is a $\mathbb G$-adapted semimartingale, it follows from \eqref{cG} that there exists an $\mathbb F$-adapted semimartingale $V$ such that
\begin{equation}\label{wtV=V}
\widetilde V_t = V_t\quad \hbox{$\bar\P$-a.s. on $\{t<\tau\}$},\qquad \forall t\ge 0. 
\end{equation}
Indeed, for any fixed $\omega\in\Omega$, consider $A_t(\omega):=\{\omega'\in\Omega': t<\tau(\omega,\omega')\}$ for all $t\ge 0$. As $\widetilde V$ is $\mathbb G$-adapted, \eqref{cG} implies $\widetilde V_t(\omega,\omega')$ is constant $\P'$-a.s. on $A_t(\omega)$. By defining $V_t(\omega) = \widetilde V_t(\omega,\omega')$, with $\omega'\in A_t(\omega)$, for all $t\ge 0$, $V$ is an $\mathbb F$-adapted semimartingale satisfying \eqref{wtV=V}. Also note that $\E[\sup_{s\in[0,t]}|V_{s}|]<\infty$, as $\bar \E[\sup_{s\in[0,t]}|\widetilde V_{s}|]<\infty$, for all $t\ge 0$. 
Now, observe that
\begin{align}
\bar{\E}\left[\int_{t\wedge\tau}^{T\wedge\tau}f(c_s,\widetilde{V}_{s}^{c,h})ds\ \middle|\ \cG_t\right]&=\bar \E\left[\int_{t}^{T}\ind_{\{s<\tau\}}f(c_s,\widetilde{V}_{s}^{c,h})ds\ \middle|\ \F_t\vee\cH_t\right]\nonumber\\
&=  \int_{t}^{T}\bar \E\left[\ind_{\{s<\tau\}}f(c_s,\widetilde{V}_{s}^{c,h})\ \middle|\ \F_t\vee\cH_t\right]ds\nonumber\\
&=  \int_{t}^{T}\bar \E\left[\bar \E\big[\ind_{\{s<\tau\}}f(c_s,{V}_{s}^{c,h})\mid \F_s\vee\cH_t\big]\ \middle|\ \F_t\vee\cH_t\right]ds\nonumber\\
&=  \int_{t}^{T}\bar \E\left[f(c_s,{V}_{s}^{c,h})\ \bar \E\big[\ind_{\{s<\tau\}}\mid \F_s\vee\cH_t\big]\ \middle|\ \F_t\vee\cH_t\right]ds\nonumber\\
&= \int_{t}^{T}\bar \E\left[f(c_s,{V}_{s}^{c,h}) \ind_{\{t<\tau\}}e^{-\int_{t}^{s}M_{u}^{h}du}\ \middle|\ \F_t\vee\cH_t\right]ds\nonumber\\
&= \bar \E\left[\int_{t}^{T}\ind_{\{t<\tau\}}e^{-\int_{t}^{s}M_{u}^{h}du}f(c_s,{V}_{s}^{c,h})ds\ \middle|\ \cG_t\right],\label{11}
\end{align}
where the second and last equalities follow from Fubini's theorem for conditional expectations (see \cite[Theorem 27.17]{Schilling-book-2017}), the third equality is due to the tower property of conditional expectations and \eqref{wtV=V}, the fourth equality results from $c_s\in \F_s$ and $V^{c,h}_s\in \F_s$, and the fifth equality holds thanks to \eqref{tau>s}. Next, for $\bar\P$-a.e. fixed $\bar\omega=(\omega,\omega')\in\bar\Omega$, consider the cumulative distribution function of $\tau$ given the information $\F_T\vee\cH_t$, i.e.
\[
F(s):= \bar\P(\tau\le s\mid \F_T\vee\cH_t)(\bar\omega),\quad s\ge0. 
\]
Thanks to \eqref{tau>s}, $F(s) = 1- e^{-\int_{t}^{s}M_{u}^{h}(\omega)du} \ind_{\{\tau>t\}}(\bar\omega)$ for $t\le s\le T$. This implies
\begin{equation}\label{tau density}
\eta(s) := F'(s) = M_{s}^{h}(\omega)e^{-\int_{t}^{s}M_{u}^{h}(\omega)du} \ind_{\{\tau>t\}}(\bar\omega),\quad \hbox{for}\ t\le s\le T, 
\end{equation}
which is the density function of $\tau$ given the information $\F_T\vee\cH_t$. It follows that
\begin{align}
\bar\E\left[\widetilde{V}_{\tau-}^{c,h}\ind_{\{\tau\le T\}}\ \middle|\ \cG_t\right] &= \bar\E\left[{V}_{\tau-}^{c,h}\ind_{\{\tau\le T\}}\ \middle|\ \cG_t\right] \ind_{\{\tau\le t\}} +\bar\E\left[{V}_{\tau-}^{c,h}\ind_{\{\tau\le T\}}\ \middle|\ \cG_t\right] \ind_{\{\tau>t\}}\nonumber\\
&={V}_{\tau-}^{c,h}\ind_{\{\tau\le t\}}+ \bar\E\left[ \bar\E\big[{V}_{\tau-}^{c,h}\ind_{\{t<\tau\le T\}}\mid \F_T\vee\cH_t\big]\ \middle|\ \F_t\vee\cH_t\right]\nonumber\\
&={V}_{\tau-}^{c,h}\ind_{\{\tau\le t\}}+ \bar\E\left[ \int_{t}^{T}\ind_{\{t<\tau\}}M_{s}^{h}e^{-\int_{t}^{s}M_{u}^{h}du}{V}_{s}^{c,h}ds\ \middle|\ \cG_t \right],\label{22}
\end{align}
where the first line results from $\widetilde V_{\tau-}=V_{\tau-}$ (by \eqref{wtV=V}), the second line follows from the tower property of conditional expectations, and the third line is due to the density formula \eqref{tau density}. Since $V$ is right-continuous, it has at most countably many jumps on $[t,T]$, so that we may use $V_s$ (instead of $V_{s-}$) in the last term of \eqref{22}. Finally, 
\begin{align}
&\bar\E\left[\widetilde{V}_{T}^{c,h}\ind_{\{\tau> T\}}\ \middle|\ \cG_t \right] = \bar\E\left[ \bar\E\big[{V}_{T}^{c,h}\ind_{\{\tau>T\}}\mid \F_T\vee\cH_t\big]\ \middle|\ \F_t\vee\cH_t\right]\nonumber\\
 &\hspace{0.7in}=\bar\E\left[ {V}_{T}^{c,h} \bar\E\big[\ind_{\{\tau>T\}}\mid \F_T\vee\cH_t\big]\ \middle|\ \F_t\vee\cH_t\right]
 = \bar \E\left[\ind_{\{t<\tau\}}e^{-\int_{t}^{T}M_{u}^{h}du}{V}_{T}^{c,h}\ \middle|\ \cG_t \right],\label{33}
\end{align}
where the first equality follows from the tower property of conditional expectations and \eqref{wtV=V}, the second equality is due to ${V}_{T}\in \F_T$, and the third equality is a consequence of \eqref{tau>s}. Now, combining \eqref{11}, \eqref{22}, and \eqref{33}, we obtain from \eqref{Vtilde_Utility} and $\widetilde V_{\tau-}=V_{\tau-}$ that
\begin{align}
\widetilde{V}_{t}^{c,h}
& =\E_{t}\bigg[\int_{t}^{T}e^{-\int_{t}^{s}M_{r}^{h}dr}\left(f(c_{s},{V}_{s}^{c,h}) + \zeta^{1-\gamma} M_{s}^{h}{V}_{s}^{c,h}\right)ds + e^{-\int_{t}^{T}M_{s}^{h}ds}{V}_{T}^{c,h} \bigg]  \ind_{\{t<\tau\}}\nonumber\\
&\hspace{2.8in}+ \zeta^{1-\gamma}{V}_{\tau-}^{c,h}\ind_{\{t\ge\tau\}},\ \ \text{ for all }0\le t\le T<\infty,\label{wtV'}
\end{align}
where we use the notation $\E_{t}\left[\cdot\right]=\E\left[\cdot|\F_{t}\right]$. 
This, together with \eqref{wtV=V}, particularly implies
\begin{equation}\label{with ind}
V_t(\omega) \ind_{\{t<\tau\}(\omega,\omega')} = \widetilde V_t(\omega,\omega') \ind_{\{t<\tau\}(\omega,\omega')} = E_{t,T}(\omega)\ind_{\{t<\tau\}(\omega,\omega')},
\end{equation}
where
\[
E_{t,T}(\omega) := \E_{t}\bigg[\int_{t}^{T}e^{-\int_{t}^{s}M_{r}^{h}dr}\left(f(c_{s},{V}_{s}^{c,h}) + \zeta^{1-\gamma}M_{s}^{h}{V}_{s}^{c,h}\right)ds + e^{-\int_{t}^{T}M_{s}^{h}ds}{V}_{T}^{c,h}\bigg](\omega). 
\]
For any $\omega\in\Omega$, since there exists $\omega'\in\Omega'$ such that $\ind_{\{t<\tau\}(\omega,\omega')}=1$ (in view of \eqref{tau} and \eqref{exp law}), we conclude from \eqref{with ind} that $V_t(\omega) = E_{t,T}(\omega)$. We can then simplify \eqref{wtV'} as
\begin{align}\label{wtV'''}
\widetilde{V}_{t}
& =V_t \ind_{\{t<\tau\}} + \zeta^{1-\gamma}{V}_{\tau-}\ind_{\{t\ge\tau\}},
\end{align}
where $V$ satisfies
\begin{equation}\label{V eqn}
V_t = \E_{t}\bigg[\int_{t}^{T}e^{-\int_{t}^{s}M_{r}^{h}dr}\left(f(c_{s},{V}_{s}) + \zeta^{1-\gamma} M_{s}^{h}{V}_{s}\right)ds + e^{-\int_{t}^{T}M_{s}^{h}ds}{V}_{T}\bigg],\ \ \forall 0\le t\le T<\infty
\end{equation}
Now, note that the above equation directly implies
\[
V'_t := e^{-\int_{0}^{t}M_{r}^{h}dr}V_t = \mathscr M'_t - \int_0^t e^{-\int_{0}^{s}M_{r}^{h}dr} \left(f(c_{s},{V}_{s}) + \zeta^{1-\gamma} M_{s}^{h}{V}_{s}\right)ds,
\]
where 
\[
\mathscr M'_t := \E_{t}\bigg[\int_{0}^{T}e^{-\int_{0}^{s}M_{r}^{h}dr}\left(f(c_{s},{V}_{s}^{}) + \zeta^{1-\gamma} M_{s}^{h}{V}_{s}^{}\right)ds + e^{-\int_{0}^{T}M_{s}^{h}ds}{V}_{T}^{}\bigg]
\]
is an $\mathbb F$-martingale on $[0,T]$, thanks to 
\eqref{V eqn}. 
Applying generalized It\^{o}'s formula for semimartingales (see \cite[Theorem I.4.57]{Jacod03}) to $V_t = e^{\int_{0}^{t}M_{r}^{h}dr} V'_t$ gives
$
dV_t = -F(c_t,M^h_t,V_t) + e^{\int_{0}^{t}M_{r}^{h}dr} d\mathscr M'_t. 
$
Since $0\le M^h_t \le me^{\beta t}$ by definition (by \eqref{Mortality}), $\mathscr M_t:= \int_0^t e^{\int_{0}^{s}M_{r}^{h}dr} d\mathscr M'_s$ is again an $\mathbb F$-martingale. Hence, $V$ is a solution to BSDE \eqref{BSDE_V}. This, together with \eqref{wtV'''}, yields the desired result. 

Next, we prove the converse, i.e. a process $\widetilde V$ given by \eqref{wtV decompose} has the three properties: (i) it is a $\mathbb G$-adapted semimartingale; (ii) $\bar \E[\sup_{s\in[0,t]}|\widetilde V_{s}|]<\infty$ for all $t\ge 0$; (iii) it satisfies \eqref{Vtilde_Utility}. By the construction in \eqref{wtV decompose}, properties (i) and (ii) follow directly from $V$ being an $\mathbb F$-adapted semimartingale with $\E[\sup_{s\in[0,t]}|V_{s}|]<\infty$ for all $t\ge 0$. Now, by applying generalized It\^{o}'s formula for semimartingales (see \cite[Theorem I.4.57]{Jacod03}) to $e^{-\int_{0}^{t}M_{s}^{h}ds}V_{t}$, we see that $V$ satisfies \eqref{V eqn}. This, together with the same arguments in \eqref{11}, \eqref{22}, and \eqref{33}, shows that $\widetilde{V}$ in \eqref{wtV decompose} satisfies \eqref{Vtilde_Utility}.


\subsection{Derivation of Proposition~\ref{prop:comparison}}\label{subsec:proof of comparison}


\begin{lemma}\label{Lemma1}
	Let $c, h, V$ and $W$ be $\mathbb F$-progressively measurable processes with $W_{s}\le V_{s}$ for all $s\ge 0$. 
	If there exists $k\in\R$ such that $V$ satisfies \eqref{comparison_condition}, then
	\begin{equation}\label{monotonicity_V}
	F(c_{s},M^h_{s},V_{s}) - F(c_{s},M^h_{s},W_{s})\le -\Gamma(\Lambda, M^h_{s})(V_{s}-W_{s}),
	\end{equation}
	where $F$ is given in \eqref{F func}, $\Lambda:=\delta\theta +(1-\theta)k$ (as in Definition~\ref{cE}), and $\Gamma$ is defined by
	\begin{equation}\label{Gamma}
	\Gamma(\lambda, m):= \lambda+\frac{\gamma(\psi-1)}{1-\gamma}(1-\zeta^{1-\gamma})m.
	\end{equation}  
\end{lemma}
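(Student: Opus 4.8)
The plan is to split $F$ into the one piece that depends nonlinearly on its last argument and a piece that is affine in it, and then run a convexity estimate on the nonlinear piece. For fixed $c>0$ write, using \eqref{EZ_aggregator} and \eqref{F func},
\[
F(c,m,v)=\phi(v)-\big(\delta\theta+(1-\zeta^{1-\gamma})m\big)v,\qquad \phi(v):=\delta\,\frac{c^{1-1/\psi}}{1-1/\psi}\,\big((1-\gamma)v\big)^{1-1/\theta}.
\]
The affine part contributes exactly $-\big(\delta\theta+(1-\zeta^{1-\gamma})M^h_s\big)(V_s-W_s)$ to the left side of \eqref{monotonicity_V}, so it suffices to bound $\phi(V_s)-\phi(W_s)$ above by $\frac{1-\gamma\psi}{\psi-1}\,\Xi_s\,(V_s-W_s)$, where $\Xi_s:=k+(\psi-1)\frac{1-\zeta^{1-\gamma}}{1-\gamma}M^h_s$ is the bracket in \eqref{comparison_condition}. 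We work on the natural domain $\{v:(1-\gamma)v>0\}$, on which $F$ (hence $\phi$) is real-valued and on which $V_s,W_s$ must lie for the inequality to make sense; note \eqref{comparison_condition} also forces $\Xi_s>0$, since its right side is $\delta^{\theta}\Xi_s^{-\theta}c_s^{1-\gamma}/(1-\gamma)$, which has the sign of $1-\gamma$ only when $\Xi_s>0$ (this in turn rules out $c_s=0$).

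Next I would record the parameter identities that close the algebra. With $p:=1-1/\theta$ one checks $p-1=-(1-1/\psi)/(1-\gamma)$, hence $\theta(p-1)=-1$ and $(1-\gamma)(p-1)=-(1-1/\psi)$; also $1-\theta=(\gamma\psi-1)/(\psi-1)$, so $\frac{1-\gamma\psi}{\psi-1}=-(1-\theta)$, and $\frac{p(1-\gamma)}{1-1/\psi}=\frac{1-\gamma\psi}{\psi-1}$. Differentiating, $\phi'(v)=\delta\,\frac{1-\gamma\psi}{\psi-1}\,c^{1-1/\psi}\big((1-\gamma)v\big)^{p-1}$, and on the domain above $\phi''(v)$ has the sign of $\frac{1-\gamma\psi}{\psi-1}(p-1)(1-\gamma)=\frac{1-\gamma\psi}{\psi-1}\big(-(1-1/\psi)\big)$, which is positive because $\psi>1$ and $\gamma\psi>1$ by \eqref{specification}. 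Thus $\phi$ is convex and $\phi'$ is increasing on $\{v:(1-\gamma)v>0\}$.

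Now the estimate. Convexity gives $\phi(V_s)-\phi(W_s)\le\phi'(V_s)(V_s-W_s)$, and since $W_s\le V_s$ it suffices to bound $\phi'(V_s)$ above. Condition \eqref{comparison_condition} is exactly $V_s\le\bar V_s:=\delta^{\theta}\Xi_s^{-\theta}\,\frac{c_s^{1-\gamma}}{1-\gamma}$, with $\bar V_s$ in the same (connected) half-line as $V_s$, so monotonicity of $\phi'$ yields $\phi'(V_s)\le\phi'(\bar V_s)$. Substituting $(1-\gamma)\bar V_s=\delta^{\theta}\Xi_s^{-\theta}c_s^{1-\gamma}$ into the formula for $\phi'$ and using $\theta(p-1)=-1$ and $(1-\gamma)(p-1)=-(1-1/\psi)$ collapses the powers of $\delta$ and $c_s$, leaving $\phi'(\bar V_s)=\frac{1-\gamma\psi}{\psi-1}\,\Xi_s$ exactly.

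Finally I would reassemble and match constants:
\[
F(c_s,M^h_s,V_s)-F(c_s,M^h_s,W_s)\le\Big(\tfrac{1-\gamma\psi}{\psi-1}\Xi_s-\delta\theta-(1-\zeta^{1-\gamma})M^h_s\Big)(V_s-W_s),
\]
and a short computation using $\frac{1-\gamma\psi}{\psi-1}=-(1-\theta)$, $\Lambda=\delta\theta+(1-\theta)k$, and $\frac{1-\gamma\psi}{1-\gamma}-1=-\frac{\gamma(\psi-1)}{1-\gamma}$ rewrites the bracket as $-\Gamma(\Lambda,M^h_s)$ with $\Gamma$ as in \eqref{Gamma}, which is \eqref{monotonicity_V}. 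I expect the only genuinely delicate step to be the sign and domain bookkeeping that establishes convexity of $\phi$ and monotonicity of $\phi'$ on the correct half-line; once that is in place everything downstream is forced algebra, so that is where I would be most careful, in particular making explicit that $V_s$, $W_s$, and $\bar V_s$ all lie where $\phi$ is real-valued.
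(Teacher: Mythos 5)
Your argument is correct and is essentially the paper's own proof in lightly different clothing: the paper applies the mean value theorem to $F$ and uses $F_{vv}>0$ (convexity in $v$) together with \eqref{comparison_condition} to bound $F_v$ by its value at $\hat u=\delta^{\theta}\big(k+(\psi-1)\tfrac{1-\zeta^{1-\gamma}}{1-\gamma}M^h_s\big)^{-\theta}\tfrac{c_s^{1-\gamma}}{1-\gamma}$, which equals $-\Gamma(\Lambda,M^h_s)$ — exactly your computation $\phi'(\bar V_s)=\tfrac{1-\gamma\psi}{\psi-1}\Xi_s$ after the affine part is peeled off. Splitting off the affine term and using the convexity gradient inequality instead of the MVT is only a cosmetic variation (your explicit domain bookkeeping for $(1-\gamma)v>0$ and $\Xi_s>0$ is a welcome touch of extra care), so the two proofs coincide in substance.
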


\begin{proof}
	As in the proof of \cite[Lemma B.1]{Melnyk17}, \eqref{monotonicity_V} holds by the mean value theorem provided that $F_{v}(c_{s},M^h_{s},u)\le -\Gamma(\Lambda, M^h_{s})$ for all $u\in[W_{s},V_{s}]$. To this end, note that
	\[
	F_{v}(c_{s},M^h_{s},u) = -\bigg(\delta\theta + (1-\zeta^{1-\gamma})M^h_{s} + \delta(1-\theta)\bigg(\frac{c_{s}^{1-\gamma}}{(1-\gamma)u}\bigg)^{{1}/{\theta}}\bigg).
	\]
Thanks to Assumption~\ref{specification}, a direct calculation shows $F_{vv}(c_{s},M^h_{s},u)>0$, i.e. $F_{v}(c_{s},M^h_{s},u)$ is increasing in $u$. This, together with $V$ satisfying \eqref{comparison_condition}, implies that for all $u\in[W_{s},V_{s}]$, $F_{v}(c_{s},M^h_{s},u)\le F_{v}(c_{s},M^h_{s},\hat u)$, where $\hat u := \delta^{\theta}\big(k-\frac{\psi-1}{1-\gamma}(\zeta^{1-\gamma}-1)M^h_{s}\big)^{-\theta}\frac{c_{s}^{1-\gamma}}{1-\gamma}$. By direct calculation,
		\begin{align*}
F_{v}(c_{s},M^h_{s},\hat u) &= -\bigg(\delta\theta + (1-\zeta^{1-\gamma})M^h_{s} + (1-\theta)\left(k-\frac{\psi-1}{1-\gamma}(\zeta^{1-\gamma}-1)M^h_{s}\right)\bigg)\\
&= -\left(\Lambda+\frac{\gamma(\psi-1)}{1-\gamma}(1-\zeta^{1-\gamma})M^h_{s}\right)=-\Gamma(\Lambda, M^h_{s}),
	\end{align*}
	where the second equality follows from the definition of $\Lambda$ and $\theta=\frac{1-\gamma}{1-1/\psi}$. 
\end{proof}

To prove Proposition~\ref{prop:comparison}, we intend to follow the idea in the proof of \cite[Theorem 2.2]{Melnyk17}. The involvement of the controlled mortality $M^h$ in \eqref{TransversalityCondition}, as well as the possibility that $\Lambda$ therein can be negative (Remark~\ref{rem:negative Lambda}), result in additional technicalities. The proof below combines arguments in \cite[Theorem 2.2]{Melnyk17} and \cite[Theorem 2.1]{Fan15}, adapted to weaker regularity of processes.

\begin{proof}[Proof of Proposition~\ref{prop:comparison}] 
Recall the function $\Gamma$ in \eqref{Gamma}. Fix $0\le t_0<T$, define 
\begin{equation}\label{Delta}
\Delta_{t}:=e^{-\int_{t_0}^{t}\Gamma(0,M^h_{s})ds}\left(V_{t}^{1}-V_{t}^{2}\right),\quad t\in[t_{0},T], 
\end{equation}
and consider the stopping time $\theta := \inf\left\{s\ge t_0: V_{s}^{1}\le V_{s}^{2}\right\}$. Applying generalized It\^{o}'s formula (see \cite[Theorem I.4.57]{Jacod03}) to $e^{-\int_{0}^{t}\Gamma(0,M^h_{s})ds}V_{t}^{i}$, $i=1,2$, yields
	\begin{align*}
	d\left(e^{-\int_{0}^{t}\Gamma(0,M^h_{s})ds}V_{t}^{1}\right) &= -e^{-\int_{0}^{t}\Gamma(0,M^h_{s})ds}\left[\Gamma(0,M^h_{s})V_{t}^{1} +F(c_{t},M^h_{t},V_{t}^{1})\right]dt + e^{-\int_{0}^{t}\Gamma(0,M^h_{s})ds}d\mathscr{M}_{t}^{1},\\
	d\left(e^{-\int_{0}^{t}\Gamma(0,M^h_{s})ds}V_{t}^{2}\right) &= -e^{-\int_{0}^{t}\Gamma(0,M^h_{s})ds}\left[\Gamma(0,M^h_{s})V_{t}^{2} +G(t,V_{t}^{2})\right]dt + e^{-\int_{0}^{t}\Gamma(0,M^h_{s})ds}d\mathscr{M}_{t}^{2},
	\end{align*}
	where $\mathscr M^1$, $\mathscr M^2$ are some $\mathbb F$-martingales on $[0,T]$. As $0\le \Gamma(0,M^h_{t})\le \frac{\gamma(\psi-1)}{1-\gamma}(1-\zeta^{1-\gamma})me^{\beta t}$ by the definition of $M^h$ in \eqref{Mortality},  $r\mapsto \int_{t_{0}}^{r}e^{-\int_{0}^{t}\Gamma(0,M^h_{s})ds}d\mathscr{M}_{t}^{i}$ is a true martingale for $i=1,2$. Hence,
	\begin{align*}
	\Delta_{t}= \E_{t}\left[\int_{t}^{T}\ind_{\{s<\theta\}}\left[\left(F(c_{s},M^h_{s},V_{s}^{1})-G(s,V_{s}^{2})\right) + \Gamma(0,M^h_{s})\left(V_{s}^{1}-V_{s}^{2}\right)\right]e^{-\int_{t_0}^{s}\Gamma(0,M^h_{r})dr}ds + \Delta_{T\wedge\theta}\right].
	\end{align*}
	Observe that
	\begin{align*}
	\ind_{\{s<\theta\}}\left(F(c_{s},M^h_{s},V_{s}^{1})-G(s,V_{s}^{2})\right)&= \ind_{\{s<\theta\}}\left(F(c_{s},M^h_{s},V_{s}^{1})-F(c_{s},M^h_{s},V_{s}^{2})\right)\\ 
	& \hspace{0.5in}+\ind_{\{s<\theta\}}\left(F(c_{s},M^h_{s},V_{s}^{2}) - G(s,V_{s}^{2})\right)\\
	&\le \ind_{\{s<\theta\}}\left(F(c_{s},M^h_{s},V_{s}^{1})-F(c_{s},M^h_{s},V_{s}^{2})\right)\\
	&\le \ind_{\{s<\theta\}}\left( -\Gamma(\Lambda,M^h_{s})\left(V_{s}^{1}-V_{s}^{2}\right)\right),
	\end{align*}
	where the first inequality follows from $F(c_{s},M^h_{s},V_{s}^2)\le G(s,V_{s}^2)$, and the second is due to Lemma \ref{Lemma1}, which is applicable here as $V_{s}^{1}>V_{s}^{2}$ for $s\in[t,\theta)$. Thanks to the above inequality, 
	\begin{align}\label{B1}
	\Delta_{t}&\le \E_{t}\left[\int_{t}^{T}\ind_{\{s<\theta\}}\left[-\Gamma(\Lambda,M^h_{s}) + \Gamma(0,M^h_{s})\right]\left(V_{s}^{1}-V_{s}^{2}\right)e^{-\int_{t_{0}}^{s}\Gamma(0,M^h_{r})dr}ds + \Delta_{T\wedge\theta}\right]\nonumber\\
	& = \E_{t}\left[-\int_{t}^{T} \ind_{\{s<\theta\}}\Lambda\Delta_{s}ds + \Delta_{T\wedge\theta}\right],
	\end{align}
	where the second line follows from $\Gamma(\Lambda,M^h_{s}) = \Lambda + \Gamma(0,M^h_{s})$ and \eqref{Delta}. Multiplying both sides by $\ind_{\{t<\theta\}}$ yields
	\[
	\Delta_{t}\ind_{\{t<\theta\}}\le \E_{t}\left[-\int_{t}^{T}\Lambda\Delta_{s}\ind_{\{s<\theta\}}ds + \Delta_{T\wedge \theta}\ind_{\{t<\theta\}}\right]\le \E_{t}\left[-\int_{t}^{T}\Lambda\Delta_{s}\ind_{\{s<\theta\}}ds + \Delta_{T}\ind_{\{T<\theta\}}\right],
	\]
where the second inequality follows from the right continuity of $V^1$ and $V^2$. Indeed, the right continuity implies $V_{\theta}^{1}\le V_{\theta}^{2}$, so that
	$\Delta_{T\wedge\theta} = \Delta_{\theta}\ind_{\{\theta\le T\}} + \Delta_{T} \ind_{\{T<\theta\}} \le \Delta_{T} \ind_{\{T<\theta\}}$.
	Set $\Delta_{t}^{+} := \Delta_{t}\ind_{\{t<\theta\}}$, and write the previous inequality as
	$\Delta_{t}^{+}\le \E_{t}\big[-\int_{t}^{T}\Lambda\Delta_{s}^{+}ds + \Delta_{T}^{+}\big].$
	Taking expectations on both sides and using Fubini's theorem give 
	\begin{equation}\label{GW}
	\Theta_{t} \le -\int_{t}^{T}\Lambda\Theta_{s}ds + \Theta_{T},
	\end{equation} 
	where
	$\Theta_{t}:= \E\left[\Delta^{+}_{t}\right]\ge 0$ is well-defined as $\Gamma(0,M_{s})\ge0$ and $\E\big[\sup_{t\in[0,T]}|V_{t}^{i}|\big]<\infty$, thanks to $V^i\in\cE^h_k$ (Definition~\ref{cE}), for $i=1,2$. 
	Now, if $\Lambda>0$, by writing $\Theta_{T}\ge \Theta_{t} + \int_{t}^{T}\Lambda\Theta_{s}ds$, we apply standard Gronwall's inequality to get $\Theta_{T}\ge \Theta_{t}e^{\int_{t}^{T}\Lambda ds}$, or equivalently 
  \begin{equation}\label{Theta Bound}
	\Theta_{t}\le \Theta_{T}e^{-\int_{t}^{T}\Lambda ds},\quad t\in[t_0,T].
	\end{equation}
If $\Lambda<0$, applying backward Gronwall's inequality (see \cite[Proposition 2]{Fan18}) to \eqref{GW} also gives \eqref{Theta Bound}. By \eqref{Theta Bound}, \eqref{Delta}, and \eqref{Gamma}, we obtain
	\begin{equation}
	\Theta_{t_{0}}\le \Theta_{T}e^{-\int_{t_{0}}^{T}\Lambda ds}\le \E\left[e^{-\int_{t_{0}}^{T}\Gamma(\Lambda,M_{s})ds}\left(|V_{T}^{1}|+|V_{T}^{2}|\right)\right].
	\end{equation}
	Since $T>0$ is arbitrary, the transversality condition in $\eqref{TransversalityCondition}$ for $V_{t}^{1}$ and $V_{t}^{2}$ immediately implies
	\begin{equation}
	0\le \Theta_{t_{0}} \le \lim\limits_{T\rightarrow\infty}\E\left[e^{-\int_{t_{0}}^{T}\Gamma(\Lambda,M_{s})ds}\left(|V_{T}^{1}|+|V_{T}^{2}|\right)\right]= 0.
	\end{equation}
	That is, $\Theta_{t_0}=\E\left[\left(V_{t_0}^{1}-V_{t_0}^{2}\right)1_{\{t_0<\theta\}}\right] = 0$. This entails $\theta=t_0$, and thus $V_{t_{0}}^{1}\le V_{t_{0}}^{2}$. Since $t_0\ge 0$ is arbitrary, we conclude that $V_{t}^{1}\le V_{t}^{2}$ for all $t\ge 0$. 
	\end{proof}





\subsection{Proof of Proposition \ref{prop:NoAging}}\label{subsec:proof of prop:NoAging}
For any fixed $m>0$ such that $\tilde c_0(m)>0$, define $w(x):=\delta^{\theta}\frac{x^{1-\gamma}}{1-\gamma}\tilde{c}_{0}(m)^{-\frac{\theta}{\psi}}$ for $x>0$. In order to apply Theorem~\ref{Verification}, we need to verify all its conditions. It can be checked directly that $w$, as a one-variable function, solves \eqref{HJB} in a trivial way, with all derivatives in $m$ being zero.  
For any $(c,\pi,h)\in\cP=\cP_1$, since $(X^{c,\pi,h})^{1-\gamma}$ satisfies \eqref{TransversalityCondition} (with $\Lambda^*$ in place of $\Lambda$), so does
$
w(X^{c,\pi,h}_t)
$, i.e. $w(X^{c,\pi,h}_t)\in\cE^h_{k^*}$. By the definitions of $\cP$ and $w$,  $\cP=\cP_1\subseteq \cH_{k^*}$ and \eqref{veri condition} is satisfied.  
As $\tilde c_0(m)>0$, $w_x>0$ and $w_{xx}<0$ by definition. It follows that $\bar c(x,m):=x\tilde{c}_{0}(m)$ and $\bar \pi(x,m) :=\frac{\mu}{\gamma\sigma^2}$ are unique maximizers of the supremums in \eqref{sup1, 2}, respectively. The supremum in \eqref{sup3} is zero, as $g\equiv 0$ and $w_x>0$. Hence, $\bar h(x,m) := 0$ trivially maximizes \eqref{sup3}. The only condition that remains to be checked is ``$(c^*,\pi^*,h^*)$ in \eqref{optimal strategies} belongs to $\cP$ and $W^*_t:= w(X^{c^*,\pi^*,h^*}_t)$ satisfies \eqref{comparison_condition}''.  

Observe that a unique solution $X^*=X^{c^*,\pi^*,h^*}$ to \eqref{Wealth} exists as a geometric Brownian motion 
\begin{equation}\label{X^* GBM}
dX_{t}^{*} = X_{t}^{*}\left(r+\frac{1}{\gamma}\left(\frac{\mu}{\sigma}\right)^{2} - \tilde{c}_{0}(m)\right)dt + X_{t}^{*}\frac{\mu}{\gamma\sigma}dB_{t},
\end{equation}
This implies that
	\begin{align}\label{X^*}
	(X^{*}_t)^{1-\gamma}= x^{1-\gamma}\text{exp}\left((1-\gamma)\left(r+\frac{1}{2\gamma}\left(\frac{\mu}{\sigma}\right)^{2} - \tilde{c}_{0}(m) - \frac{(1-\gamma)}{2\gamma^{2}}\left(\frac{\mu}{\sigma}\right)^{2}\right)t + \frac{(1-\gamma)\mu}{\gamma\sigma}B_{t}\right),
	\end{align}	
which is again a geometric Brownian motion that satisfies the dynamics 
	\[
	\frac{dY_t}{Y_t} = (1-\gamma)\left(r+\frac{1}{2\gamma}\left(\frac{\mu}{\sigma}\right)^{2} - \tilde{c}_{0}(m)\right)dt + \frac{(1-\gamma)\mu}{\gamma\sigma} dB_t,\quad Y_0=x^{1-\gamma}.
	\] 
Consequently, 
\begin{equation}\label{first part}
e^{-\Lambda^* t}	\E\left[e^{-\gamma(\psi-1)\frac{1-\zeta^{1-\gamma}}{1-\gamma}mt}(X^*_t)^{1-\gamma}\right]=x^{1-\gamma}e^{(C-\Lambda^*)t},
\end{equation}
where 
\[
C:= (1-\gamma)\left(r+\frac{1}{2\gamma}\left(\frac{\mu}{\sigma}\right)^{2} - \tilde{c}_{0}(m) \right)-\gamma(\psi-1)\frac{1-\zeta^{1-\gamma}}{1-\gamma}m. 
\]
Remarkably, by the definitions of $\tilde c_0(m)$ and $\Lambda^*$ in \eqref{c_0} and \eqref{Lambda^*}, a direct calculation shows that
$C-\Lambda^* = -\tilde c_0(m) <0$,  
where the inequality follows from $\tilde c_0(m)>0$. It follows from \eqref{first part} that 
\begin{equation}\label{de}
\lim_{t\to\infty} e^{-\Lambda^* t}	\E\left[e^{-\gamma(\psi-1)\frac{1-\zeta^{1-\gamma}}{1-\gamma}mt}(X^*_t)^{1-\gamma}\right]=0. 
\end{equation}
On the other hand, we can rewrite \eqref{X^*} as
\begin{align}\label{X^* with Z}
	(X^{*}_t)^{1-\gamma}= x^{1-\gamma}\text{exp}\left((1-\gamma)\left(r+\frac{1}{2\gamma}\left(\frac{\mu}{\sigma}\right)^{2} - \tilde{c}_{0}(m) \right)t \right)\cdot Z_t,
	\end{align}
where $Z$ is a geometric Brownian motion with the dynamics ${dZ_t} =Z_t \frac{(1-\gamma)\mu}{\gamma\sigma}dB_{t}$,  $Z_0=1$. As $Z$ is a martingale, we can apply the Burkh\"{o}lder-Davis-Gundy inequality to get
	\begin{equation}\label{BDG}
	\E\bigg[\sup\limits_{s\in[0,t]}(X_{s}^{*})^{1-\gamma}\bigg]\le K  x^{1-\gamma} e^{\left(|1-\gamma|\left|r+\frac{1}{2\gamma}\left(\frac{\mu}{\sigma}\right)^{2} - \tilde{c}_{0}(m) \right|\right) t} \frac{|1-\gamma|\mu}{\gamma\sigma} \E\left[\bigg(\int_0^t Z_s^2 ds\bigg)^{1/2}\right],
	\end{equation}
for some constant $K>0$. By Jensen's inequality and Fubini's theorem, 
\[
 \E\bigg[\bigg(\int_0^t Z_s^2 ds\bigg)^{1/2}\bigg] \le \bigg(\int_0^t \E[Z_s^2] ds\bigg)^{1/2} = \bigg(\int_0^t e^{\frac{(1-\gamma)^2\mu^2}{\gamma^2\sigma^2}s} ds\bigg)^{1/2} = \frac{\gamma\sigma}{|1-\gamma|\mu} \bigg(e^{\frac{(1-\gamma)^2\mu^2}{\gamma^2\sigma^2}t}-1\bigg)^{1/2}.
 \]
We then conclude from the above two inequalities that 
\begin{equation}\label{do}
\E\bigg[\sup\limits_{s\in[0,t]}(X_{s}^{*})^{1-\gamma}\bigg]<\infty,\quad \forall t\ge0.
\end{equation}
By \eqref{de} and \eqref{do}, $(X^{*})^{1-\gamma}$ satisfies \eqref{TransversalityCondition} (with $\Lambda^*$ in place of $\Lambda$), and so does the process $W^*_t:= w(X^*_t)=\delta^{\theta}\tilde{c}_{0}(m)^{-\frac{\theta}{\psi}}\frac{(X^*_t)^{1-\gamma}}{1-\gamma}$, i.e. $W^*\in \cE^{h^*}_{k^*}$. By applying It\^{o}'s formula to $W^*_t$ and noting
\begin{equation}\label{dodo}
\E\bigg[\sup_{s\in[0,t]}\pi^*_s(X_{s}^{*})^{1-\gamma}\bigg]<\infty\quad \hbox{for all}\ t\ge0,
\end{equation}
a consequence of \eqref{do} and $\pi^*_t\equiv \frac{\mu}{\gamma\sigma^2}$, we argue as in the proof of Theorem~\ref{Verification} that $W^*_t$ is a solution to \eqref{BSDE_V}. Moreover, 
	\[
	W^*_t = \delta^\theta \tilde c_0(m)^{-\theta+(1-\gamma)}\frac{(X^*_t)^{1-\gamma}}{1-\gamma} = \delta^\theta \tilde c_0(m)^{-\theta}\frac{(c^*_t)^{1-\gamma}}{1-\gamma}	
	\]
By \eqref{c_0=k^*+...}, this shows that $W^*$ satisfies \eqref{comparison_condition} with $k=k^*$. Hence, $(c^*,h^*)$ is $k^*$-admissible, so that we can conclude $(c^*,\pi^*,h^*)\in\cP$. Theorem~\ref{Verification} is then applicable, asserting that $w(x,m)=v(x,m)$ and $(c^*,\pi^*,h^*)$ optimizes \eqref{problem'}. 	
	
	
\subsection{Proof of Theorem \ref{Thm:AH}}\label{subsec:proof of Thm:AH}
Define $w(x,m):=\delta^{\theta}\frac{x^{1-\gamma}}{1-\gamma}u^*(m)^{-\frac{\theta}{\psi}}$ for $(x,m)\in\R^2_+$. To apply Theorem~\ref{Verification}, we need to verify all its conditions. It can be checked, as in \eqref{candidates}-\eqref{ODE}, that $w\in C^{2,1}(\R_+\times\R_+)$ solves \eqref{HJB}. By the definitions of $\cP$ and $w$, $\cP\subseteq \cH_{k^*}$ and \eqref{veri condition} is satisfied for any $(c,\pi,h)\in\cP$. As $w_x>0$, $w_{xx}<0$, and $g$ satisfies Assumption~\ref{assump:AH}, $\bar c$, $\bar \pi$, and $\bar h$ in \eqref{bar's} are unique maximizers of the supermums in \eqref{sup1, 2} and \eqref{sup3}. It remain to show (i) for any $(c,\pi,h)\in\cP$,  $w(X^{c,\pi,h}_t, M^h_t)\in\cE^h_{k^*}$; (ii) $(c^*, \pi^*, h^*)$, defined using $\bar c$, $\bar\pi$, and $\bar h$ as in \eqref{optimal strategies}, belongs to $\cP$ and $W^*_t:= w(X^{c^*,\pi^*,h^*}_t,M^{h^*}_t)$ satisfies \eqref{comparison_condition}.  

{\bf (i)} Take any $\mathfrak p = (c,\pi,h)\in\cP$, and set $W_{t} := w(X_{t}^{\fp},M_{t}^{h})$ for $t\ge 0$. We will prove $W\in\cE^h_{k^*}$. 
\begin{itemize}[leftmargin=0.3in]
\item {\bf Case (i)-1:} $\gamma\in(\frac{1}{\psi},1)$. In view of \eqref{u^* bounds}, \eqref{c_0}, and \eqref{k^*}, we have $u^*(m)\ge \tilde c_0(m)\ge \tilde c_0(0)=k^*>0$. As $\theta>0$ when $\gamma\in(\frac{1}{\psi},1)$, this implies 
\[
0<W_t=\delta^{\theta}\frac{(X^{\fp}_t)^{1-\gamma}}{1-\gamma}u^*(M^h_t)^{-\frac{\theta}{\psi}}\le \delta^{\theta}\frac{(X^{\fp}_t)^{1-\gamma}}{1-\gamma}(k^*)^{-\frac{\theta}{\psi}}\quad \forall t\ge 0,
\] 
Since $(X^{\fp})^{1-\gamma}$ satisfies \eqref{TransversalityCondition} (as $\fp\in\cP=\cP_1$), the above implies that $W$ also satisfies \eqref{TransversalityCondition}. 

\item{\bf Case (i)-2:} $\gamma>1$ and $\zeta<1$. As $\fp\in\cP=\cP_2$, there exists $\eta\in(1-\frac{1}{\gamma},1)$ such that \eqref{Permissible} holds. Consider 
\begin{equation}\label{alpha's}
	\alpha := -\eta\frac{\gamma(\psi-1)}{1-\gamma}(\zeta^{1-\gamma}-1)>0,\qquad \alpha' := -(1-\eta)\frac{\gamma(\psi-1)}{1-\gamma}(\zeta^{1-\gamma}-1)>0,
\end{equation}	
\begin{equation}\label{F}
F_t := \left(u_{\beta}(M_{t}^{h})\right)^{-\frac{\theta}{\psi}}\text{exp}\left(-\alpha'\int_{0}^{t}M_{s}^{h}ds\right)\quad \hbox{for}\ t\ge 0.
\end{equation}

First, we claim that the process $F$ is bounded from above; more specifically, 
\begin{equation}\label{F bounded}
\sup_{t\ge 0}F_t \le u_{\beta}\left(-\frac{\theta}{\alpha'\psi}\beta\right)^{-{\theta}/{\psi}}<\infty. 
\end{equation}
Observe that
	\begin{align}
	\frac{dF_t}{dt} &= -\left(\alpha' M_{t}^{h} + \frac{\theta}{\psi}u_{\beta}(M_{t}^{h})^{-1}u_{\beta}'(M_{t}^{h})\frac{dM_{t}^{h}}{dt}\right)F_t\nonumber\\
	&= -\left(\alpha' M_{t}^{h} + \frac{\theta}{\psi\beta}(\beta-g(h_{t}))\big[u_{\beta}(M_{t}^{h}) - \tilde{c}_{0}(M_{t}^{h})\big]\right)F_t,\label{dF}
	\end{align}
	where the second equality follows as $u_\beta$ solves \eqref{ANHp_ODE} with $\ell=\beta$. For each $\omega\in\Omega$, consider 
	\[
	S(\omega) := \left\{t\ge 0: M_{t}^{h}(\omega)= \frac{-\theta}{\alpha'\psi\beta}(\beta-g(h_{t}))\big(u_{\beta}(M_{t}^{h}) - \tilde{c}_{0}(M_{t}^{h})\big)(\omega)\right\}. 
	\]
	We deduce from \eqref{dF} that local maximizers of $t\mapsto F_t(\omega)$ must belong to $S(\omega)$, i.e.
	\begin{equation}\label{local max}
	\hbox{if $t\ge 0$ satisfies}\ F_t(\omega)= \max_{s\in[(t-\eps)^+,t+\eps]} F_s(\omega)\ \hbox{for some $\eps>0$, then $t\in S(\omega)$}.  
	\end{equation}
Also, by $g\ge 0$ and \eqref{ANHbound}, 
\begin{equation}\label{L<}
L_t (\omega):= \frac{-\theta}{\alpha'\psi\beta}(\beta-g(h_{t}))\big(u_{\beta}(M_{t}^{h}) - \tilde{c}_{0}(M_{t}^{h})\big)(\omega)\le -\frac{\theta}{\alpha'\psi}\beta,\quad \forall t\ge0.
\end{equation}
This particularly implies that 
\begin{equation}\label{M<}
M^h_t(\omega) = L_t(\omega)\le -\frac{\theta}{\alpha'\psi}\beta,\quad \hbox{for each}\ t\in S(\omega).
\end{equation}
Now, there are three distinct possibilities: 1) There exists $t^*\ge 0$ such that $M^h_t(\omega)< L_t(\omega)$ for all $t>t^*$. Then, $S(\omega)\subseteq [0,t^*]$ and \eqref{L<} implies $M^h_t(\omega)< -\frac{\theta}{\alpha'\psi}\beta$ for all $t> t^*$. It then follows from \eqref{local max} and \eqref{F} that
\begin{equation}\label{<t^*}
\sup_{t\le t^*} F_t(\omega)=  \sup_{t\in S(\omega)} F_t(\omega) \le  \sup_{t\in S(\omega)} u_{\beta}\big(M^h_t(\omega)\big)^{-\frac{\theta}{\psi}} \le u_{\beta}\left(-\frac{\theta}{\alpha'\psi}\beta\right)^{-{\theta}/{\psi}}, 
\end{equation}
where the last inequality follows from \eqref{M<}. Moreover, 
\[
\sup_{t> t^*} F_t(\omega)\le  \sup_{t> t^*} u_{\beta}\big(M^h_t(\omega)\big)^{-\frac{\theta}{\psi}}\le  u_{\beta}\left(-\frac{\theta}{\alpha'\psi}\beta\right)^{-{\theta}/{\psi}},
\]
i.e. \eqref{F bounded} holds. 2) There exists $t^*\ge 0$ such that $M^h_t(\omega)> L_t(\omega)$ for all $t>t^*$. By \eqref{dF}, $F_t(\omega)$ is strictly decreasing for $t>t^*$. Thus, $\sup_{t\ge 0} F_t(\omega) = \sup_{t\le t^*} F_t(\omega) = \sup_{t\in S(\omega)} F_t(\omega)$. By the estimate in \eqref{<t^*}, \eqref{F bounded} holds. 3) Neither 1) nor 2) above holds. This entails $\sup\{t\ge 0 : t\in S(\omega)\}=\infty$. Hence, $\sup_{t\ge 0} F_t(\omega) = \sup_{t\in S(\omega)} F_t(\omega)$, so that \eqref{F bounded} holds by the estimate in \eqref{<t^*}. 
Now, since $u^*\le u_{\beta}$ (by \eqref{u^* bounds}), $-\theta/\psi>0$, and $1-\gamma<0$, 
	\begin{align*}
	0&\ge e^{\frac{\gamma(\psi-1)}{1-\gamma}(\zeta^{1-\gamma}-1)\int_{0}^{t}M^h_{s}ds}W_{t}\ge \delta^{\theta}\left(u_{\beta}(M_{t}^{h})\right)^{-{\theta}/{\psi}}e^{\frac{\gamma(\psi-1)}{1-\gamma}(\zeta^{1-\gamma}-1)\int_{0}^{t}M^h_{s}ds}\frac{(X_{t}^{\fp})^{1-\gamma}}{1-\gamma}\\
	&=\delta^{\theta}F_t\ e^{-\alpha\int_{0}^{t}M^h_{s}ds}\frac{(X_{t}^{\fp})^{1-\gamma}}{1-\gamma}\ge \delta^{\theta}u_{\beta}\left(\frac{-\theta}{\alpha'\psi}\beta\right)^{-{\theta}/{\psi}}e^{-\alpha\int_{0}^{t}M^h_{s}ds}\frac{(X_{t}^{\fp})^{1-\gamma}}{1-\gamma},
	\end{align*}
	where the equality follows from \eqref{F} and \eqref{alpha's}, and the last inequality is due to \eqref{F bounded}. Recalling that $\fp\in\cP=\cP_{2}$, we conclude from \eqref{Permissible} and the above inequality that  
	\[
	\lim\limits_{t\rightarrow\infty}e^{-\Lambda^* t}\E\bigg[e^{\frac{\gamma(\psi-1)}{1-\gamma}(\zeta^{1-\gamma}-1)\int_{0}^{t}M_{s}^{h}ds}W_{t}\bigg] = 0.
	\]
	On the other hand, since $M_{t}^{h}\le m e^{\beta t}$,	
	\[
	\E\bigg[\sup\limits_{s\in[0,t]}|W_{t}|\bigg]\le \frac{\delta^{\theta}}{|1-\gamma|}u_\beta(me^{\beta t})^{-\theta/\psi} \E\bigg[\sup_{s\in[0,t]}(X_{s}^{\fp})^{1-\gamma}\bigg]<\infty,\quad \forall t\ge 0.
	\]
	where the finiteness is a direct consequence of $\fp\in \cP$. 
	
\item {\bf Case (i)-3:} $\gamma>1$ and $\zeta = 1$. In view of \eqref{u_q}, $u_\ell \equiv k^*>0$ for any $\ell>0$. It then follows from \eqref{u^* bounds} that $u^*\equiv k^*>0$. The required properties then follow directly from $\fp\in\cP=\cP_1$. 
\end{itemize}
	
{\bf (ii)} Now, we show that $(c^*,\pi^*,h^*)\in\cP$ and $W^*_t:= w(X^{c^*,\pi^*,h^*}_t,M^{h^*}_t)$ satisfies \eqref{comparison_condition}. Observe that a unique solution $M^*=M^{h^*}$ to \eqref{Mortality} exists. As $h^*$ by definition only depends on $u^*$, $g$, and the current mortality rate, $M^*$ is a deterministic process. Thanks to \eqref{g(h^*)<beta}, $t\mapsto M^*_t$ is strictly increasing. Also, a unique solution $X^*=X^{c^*,\pi^*,h^*}$ to \eqref{Wealth} exists, which admits the formula
	\begin{align}\label{X^*''}
	(X^{*}_t)^{1-\gamma}&=x^{1-\gamma}\exp\bigg(\int_{0}^{t}(1-\gamma)\left(r+\frac{1}{2\gamma}\left(\frac{\mu}{\sigma}\right)^{2} -u^*(M^{*}_s)-h_{s}^*- \frac{1-\gamma}{2\gamma^{2}}\left(\frac{\mu}{\sigma}\right)^{2}\right)ds
	+ \frac{(1-\gamma)\mu}{\gamma\sigma}B_{t}\bigg).
	\end{align}
\begin{itemize}[leftmargin=0.3in]
\item {\bf Case (ii)-1:} $\gamma\in(\frac{1}{\psi},1)$.  As $M^*_t$ is strictly increasing, $u^*(M^*_t)\ge u^*(m)\ge \tilde c_0(m)$, where the second inequality follows from \eqref{u^* bounds} and \eqref{ANHbound}. With this and $h^*_{t}\ge0$, we deduce from \eqref{X^*''} that \eqref{X^*} holds with ``$=$'' therein replaced by ``$\le$''. 
	As $k^*>0$ entails $\tilde c_0(m)>0$ (see \eqref{c_0=k^*+...}), the same arguments in Proposition \ref{prop:NoAging} can be applied to  to show that $(X^*)^{1-\gamma}$ satisfies \eqref{TransversalityCondition}. With this, we can argue as in Case (i)-1 to show that $W^*_t:= w(X^*_t,M^*_t)$ belongs to $\cE^{h^*}_{k^*}.$

\item {\bf Case (ii)-2:} $\gamma>1$ and $\zeta< 1$. As $u^*$ solves \eqref{ODE} and $h^*$ maximizes the supremum in \eqref{ODE},
	\begin{align*}
	u^*(M_{t}^{*})-\tilde{c}_{0}(M_{t}^{*})-(\psi-1)h^*_t=\frac{M_{t}^{*}(u^*)'(M_{t}^{*})}{u^*(M_{t}^{*})}(\beta-g(h^*_t)) >0\quad \forall t>0, 
	\end{align*}
	where the inequality follows from \eqref{g(h^*)<beta}. This gives $h^*_t< \frac{1}{\psi-1}(u^*(M_{t}^{*})-\tilde{c}_{0}(M_{t}^{*}))$, so that
	\begin{equation}\label{B4}
	u^*(M_{t}^{*})+h^*_t <\frac{\psi}{\psi-1}u^*(M_{t}^{*}) - \frac{1}{\psi-1}\tilde{c}_{0}(M_{t}^{*})\le \frac{\psi}{\psi-1}u_\beta(M_{t}^{*}) - \frac{1}{\psi-1}\tilde{c}_{0}(M_{t}^{*}),
	\end{equation}
where the last inequality follows from $u^*(m)\le {u}_{\beta}(m)$ (see \eqref{u^* bounds}).  
	For any $\eta\in(1-\frac{1}{\gamma},1)$, consider $\alpha,\alpha'>0$ defined as in \eqref{alpha's}. Observe that ${u}_\beta(m)$ can be written as
	\[
	{u}_\beta(m) = \beta\frac{e^{-m\frac{\psi}{\theta\beta}(1-\zeta^{1-\gamma})}\left(m\frac{\psi}{\theta\beta}(1-\zeta^{1-\gamma})\right)^{-{k^*}/{\beta}}}{\overline{\Gamma}\left(-\frac{k^*}{\beta},m\frac{\psi}{\theta\beta}(1-\zeta^{1-\gamma})\right)}
	\]
where $\overline{\Gamma}$ is the upper incomplete gamma function $\overline{\Gamma}(s,z):=\int_{z}^{\infty}t^{s-1}e^{-t}dt$. Similarly to the argument in \cite[(A.6)-(A.7)]{Huang19}, by using the fact $\lim_{z\rightarrow\infty}\frac{\overline{\Gamma}(s,z)}{e^{-z}z^{s-1}} =1$, 
	\begin{equation}\label{lim>1}
	\lim\limits_{m\rightarrow\infty}\frac{\psi-1}{\psi}\frac{\left(\alpha+(\zeta^{1-\gamma}-1)\right) m}{(\gamma-1){u}_\beta(m)} = \frac{\psi-1}{\psi}\frac{\alpha+(\zeta^{1-\gamma}-1)}{(\psi-1)(\zeta^{1-\gamma}-1)} = \frac{\alpha+(\zeta^{1-\gamma}-1)}{\psi(\zeta^{1-\gamma}-1)}>1,
	\end{equation}
	where the inequality follows from the definition of $\alpha$ and $\eta>1-\frac{1}{\gamma}$. This, together with $M^*$ being a strictly increasing deterministic process, implies the existence of $s^{*}>0$ such that 
	\begin{equation}\label{>for s>}
	(\alpha +(\zeta^{1-\gamma}-1))M_{s}^* > \frac{\psi(\gamma-1)}{\psi-1}{u}_\beta(M_{s}^*)\quad \hbox{for}\ s> s^{*}.
	\end{equation}
		Consider the constant
	$
	0\le K:= \max_{t\in[0,s^*]}\big\{ \frac{\psi}{\psi-1}{u}_\beta(M^*_t) - \frac{\alpha +(\zeta^{1-\gamma}-1)}{\gamma-1}M^*_t\big\}<\infty.
	$
In view of \eqref{X^*''}, \eqref{B4}, and $\tilde c_0(m)=k^*+(1-\psi)\frac{\zeta^{1-\gamma}-1}{1-\gamma}m$ (see 
\eqref{c_0=k^*+...}), 
	\begin{align*}
	&e^{-\alpha\int_{0}^{t}M^*_{s}ds}(X^{*}_t)^{1-\gamma}\\
	&\le x^{1-\gamma}\text{exp}\left(\int_{0}^{t}(1-\gamma)\left(r+\frac{1}{2\gamma}\left(\frac{\mu}{\sigma}\right)^{2}+\frac{k^*}{\psi-1} - \frac{\psi}{\psi-1}{u}_\beta(M_{s}^*)-\frac{\alpha +(\zeta^{1-\gamma}-1)}{1-\gamma} M_{s}^*\right)ds\right)\cdot Z_t\\
	&\le x^{1-\gamma} e^{(1-\gamma)\left(r+\frac{1}{2\gamma}\left(\frac{\mu}{\sigma}\right)^{2} + \frac{k^*}{\psi-1} - K\right) s^*} e^{(1-\gamma)\left(r+\frac{1}{2\gamma}\left(\frac{\mu}{\sigma}\right)^{2} + \frac{k^*}{\psi-1} \right) (t-s^*)} Z_t,
	\end{align*}
	where $Z$ is the driftless geometric Brownian motion defined below \eqref{X^* with Z}, and the second inequality follows from \eqref{>for s>}. It follows that
	\begin{align*}
	&e^{-\Lambda^* t}\E\left[e^{-\alpha\int_{0}^{t}M^*_{s}ds}(X_{t}^{*})^{1-\gamma}\right]\\
	 &\le  x^{1-\gamma} e^{\left((1-\gamma)\left(r+\frac{1}{2\gamma}\left(\frac{\mu}{\sigma}\right)^{2} +\frac{k^*}{\psi-1} - K\right)-\Lambda^*\right) s^*}  e^{\left((1-\gamma)\left(r+\frac{1}{2\gamma}\left(\frac{\mu}{\sigma}\right)^{2}+ \frac{k^*}{\psi-1} \right)-\Lambda^*\right)(t-s^*)}\\
	&= x^{1-\gamma} e^{\left((1-\gamma)\left(r+\frac{1}{2\gamma}\left(\frac{\mu}{\sigma}\right)^{2} + \frac{k^*}{\psi-1} - K\right)-\Lambda^*\right) s^*}  e^{-(\gamma+\frac{\gamma-1}{\psi-1}) k^*(t-s^*)}\to\ 0\quad \hbox{as}\ t\to\infty,
	\end{align*}
	where the equality follows from a direct calculation using the definition of $\Lambda^*$ in \eqref{Lambda^*}, and the convergence is due to $k^*>0$. Namely, $X^*$ satisfies \eqref{Permissible}. On the other hand, by \eqref{B4} and $M^*_t\le me^{\beta t}$, we obtain from \eqref{X^*''} that 
\begin{align*}
	(X^{*}_t)^{1-\gamma}\le x^{1-\gamma}\text{exp}\left(\int_0^t (1-\gamma)\left(r+\frac{1}{2\gamma}\left(\frac{\mu}{\sigma}\right)^{2} - \frac{\psi}{\psi-1}{u}_\beta(me^{\beta s}) \right) ds\right)\cdot Z_t,
	\end{align*}
where $Z$ is again the driftless geometric Brownian motion defined below \eqref{X^* with Z}. By the Burkh\"{o}lder-Davis-Gundy inequality, we obtain the estimate in \eqref{BDG} with $-\tilde c_0(m)$ therein replaced by $\frac{\psi}{\psi-1}{u}_\beta(me^{\beta t})$. This then implies $\E\big[\sup_{s\in[0,t]}(X_{s}^{*})^{1-\gamma}\big]<\infty$, by the inequality preceding \eqref{do}. 
Finally, under $\E\big[\sup_{s\in[0,t]}(X_{s}^{*})^{1-\gamma}\big]<\infty$ and \eqref{Permissible}, the same argument as in Case (i)-2 shows that $W^*_t:= w(X^*_t,M^*_t)$ belongs to $\cE^{h^*}_{k^*}.$ 
		
\item {\bf Case (ii)-3:} $\gamma>1$ and $\zeta = 1$. By \eqref{u_q}, ${u}_\beta(m) \equiv k^*>0$. As $M_{t}^*$ is strictly increasing, $\tilde c_0(M^*_t)\ge \tilde c_0(0)=k^*$.  The estimate \eqref{B4} then becomes 
	$u^*(M_{t}^{*})+h^*\le \frac{\psi}{\psi-1}k^* - \frac{1}{\psi-1}k^* = k^*$,
	so that we can deduce from \eqref{X^*''} that
	\begin{align*}
	(X^{*}_t)^{1-\gamma}\le x^{1-\gamma}\text{exp}\left(\int_{0}^{t}(1-\gamma)\left(r+\frac{1}{2\gamma}\left(\frac{\mu}{\sigma}\right)^{2} -k^*- \frac{(1-\gamma)}{2\gamma^{2}}\left(\frac{\mu}{\sigma}\right)^{2}\right)ds + \frac{(1-\gamma)\mu}{\gamma\sigma}B_{t}\right).
	\end{align*}
	The arguments in Proposition \ref{prop:NoAging} can then be applied to show that $(X^*)^{1-\gamma}$ satisfies \eqref{TransversalityCondition}. Then, we may argue as in Case (i)-3 to show that $W^*_t:= w(X^*_t,M^*_t)$ belongs to $\cE^{h^*}_{k^*}.$ 
\end{itemize}
	Finally, by applying It\^{o}'s formula to $W^*_t$ and using \eqref{dodo}, a consequence of \eqref{do} and $\pi^*_t\equiv \frac{\mu}{\gamma\sigma^2}$, we argue as in the proof of Theorem~\ref{Verification} that $W^*_t$ is a solution to \eqref{BSDE_V}. Also,
	\begin{equation}\label{W*<}
	W^*_t = \delta^\theta u^*(M^*_t)^{-\theta+(1-\gamma)} \frac{(X^*_t)^{1-\gamma}}{1-\gamma} = \delta^\theta u^*(M^*_t)^{-\theta}\frac{(c^*_t)^{1-\gamma}}{1-\gamma}\le\delta^\theta \tilde c_0(M^*_t)^{-\theta}\frac{(c^*_t)^{1-\gamma}}{1-\gamma}, 
	\end{equation}
	where the inequality follows from $u^*\ge \tilde c_0$ (by \eqref{u^* bounds} and \eqref{ANHbound}) and the fact that $\theta>0$ if $\gamma\in(\frac{1}{\psi},1)$ and $\theta<0$ if $\gamma>1$. By \eqref{c_0=k^*+...}, this shows that $W^*$ satisfies \eqref{comparison_condition} with $k=k^*$. Hence, $(c^*,h^*)$ is $k^*$-admissible, and we can now conclude $(c^*,\pi^*,h^*)\in\cP$. By Theorem \ref{Verification}, $v(x,m)=w(x,m)$ and $(c^*,\pi^*,h^*)$ optimizes \eqref{problem'}.

\subsection{Proof of Proposition \ref{prop:Aging}}\label{subsec:proof of prop:Aging}
Define $w(x,m):=\delta^{\theta}\frac{x^{1-\gamma}}{1-\gamma}u_\beta(m)^{-\frac{\theta}{\psi}}$ for $(x,m)\in\R^2_+$. To apply Theorem~\ref{Verification}, we need to verify all its conditions. It can be checked directly that $w\in C^{2,1}(\R_+\times\R_+)$  solves \eqref{HJB}, as $u_\beta$ is a solution to \eqref{ODE'} (Lemma~\ref{ANH_ODE}). By the definitions of $\cP$ and $w$, $\cP\subseteq \cH_{k^*}$ and \eqref{veri condition} is satisfied for any $(c,\pi,h)\in\cP$. 
Following part (i) of the proof of Theorem~\ref{Thm:AH}, we get $w(X^{c,\pi,h}_t, M^h_t)\in\cE^h_{k^*}$ for any $(c,\pi,h)\in\cP$; the proof is much simpler here, as $M^h_t=me^{\beta t}$ in the current setting. As $w_x>0$, $w_{xx}<0$, $\bar c(x,m):=x u_\beta(m)$ and $\bar \pi(x,m) :=\frac{\mu}{\gamma\sigma^2}$ are unique maximizers of the supremums in \eqref{sup1, 2}, respectively. The supremum in \eqref{sup3} is zero, as $g\equiv 0$ and $w_x>0$. Hence, $\bar h(x,m) := 0$ trivially maximizes  \eqref{sup3}. It remains to show that $(c^*, \pi^*, h^*)$, defined using $\bar c$, $\bar\pi$, and $\bar h$ as in \eqref{optimal strategies}, belongs to $\cP$ and $W^*_t:= w(X^{c^*,\pi^*,h^*}_t,M^{h^*}_t)$ satisfies \eqref{comparison_condition}. 
	
Observe that $M^{h^*}_t = me^{\beta t}$ as $h^*\equiv 0$, and a unique solution $X^*=X^{c^*,\pi^*,h^*}$ to \eqref{Wealth} exists, which satisfies the dynamics \eqref{X^* GBM} with $\tilde c_0(m)$ replaced by ${u}_\beta(me^{\beta t})$. 
	This implies
	\begin{equation}\label{X^*'}
		(X^{*}_t)^{1-\gamma}= x^{1-\gamma}\text{exp}\left(\int_{0}^{t}(1-\gamma)\left(r+\frac{1}{2\gamma}\left(\frac{\mu}{\sigma}\right)^{2} - {u}_\beta(me^{\beta s}) - \frac{(1-\gamma)}{2\gamma^{2}}\left(\frac{\mu}{\sigma}\right)^{2}\right)ds + \frac{(1-\gamma)\mu}{\gamma\sigma}B_{t}\right).
	\end{equation}
\begin{itemize}[leftmargin=0.3in]
\item	{\bf Case 1:} $\gamma\in(\frac{1}{\psi},1)$. As $1-\gamma>0$ and $u_\beta(m) \ge \tilde c_0(m)$ (see \eqref{ANHbound}),  we deduce from \eqref{X^*'} that \eqref{X^*} holds with ``$=$'' therein replaced by ``$\le$''. As $k^*>0$ entails $\tilde c_0(m)>0$, the same arguments in Proposition \ref{prop:NoAging} can be applied to show that $(X^*)^{1-\gamma}$ satisfies \eqref{TransversalityCondition}. With this, we can argue as in Case (i)-1 of the proof of Theorem~\ref{Thm:AH} to obtain $W^*_t:= w(X^*_t,M^*_t)\in \cE^{h^*}_{k^*}.$ 
	
\item	{\bf Case 2:} $\gamma>1$ and $\zeta\neq 1$.
 For any $\eta\in(1-\frac{1}{\gamma},1)$, consider the constant $\alpha>0$ defined in \eqref{alpha's}. 
Similarly to \eqref{lim>1}, using the fact that $\lim_{z\rightarrow\infty}\frac{\overline{\Gamma}(s,z)}{e^{-z}z^{s-1}} =1$ yields 
	\begin{equation}\label{lim>1'}
	\lim\limits_{m\rightarrow\infty}\frac{\alpha m}{(\gamma-1)\tilde{u}(m)} = \frac{\alpha}{(\psi-1)(\zeta^{1-\gamma}-1)} >1.
	\end{equation}
	where the inequality follows from the definition of $\alpha$ and $\eta>1-\frac{1}{\gamma}$.  
	This implies that there exists some $s^{*}>0$ such that 
	\begin{equation}\label{M dominates}
	\alpha me^{\beta s} \ge (\gamma-1)\tilde{u}(me^{\beta s})\quad \hbox{for all $s\ge s^{*}$}. 
	\end{equation}
	Consider
	$
	0\le K:= \max_{t\in[0,s^*]}\left\{\tilde{u}(me^{\beta t}) - \frac{\alpha m e^{\beta t}}{\gamma-1}\right\}<\infty.
	$
	Now, by $M_t=me^{\beta t}$ and \eqref{X^*'} ,
		\begin{align*}
	e^{-\alpha\int_{0}^{t}M_{s}ds}(X^{*})^{1-\gamma}&=x^{1-\gamma}\exp\left(\int_{0}^{t}(1-\gamma)\left(r+\frac{1}{2\gamma}\left(\frac{\mu}{\sigma}\right)^{2} - {u}_\beta(me^{\beta s}) -\frac{\alpha me^{\beta s}}{(1-\gamma)} \right)ds \right)\cdot Z_t\\
&\le x^{1-\gamma} e^{(1-\gamma)\left(r+\frac{1}{2\gamma}\left(\frac{\mu}{\sigma}\right)^{2} - K\right)s^*}  e^{(1-\gamma)\left(r+\frac{1}{2\gamma}\left(\frac{\mu}{\sigma}\right)^{2}\right)(t-s^*)}Z_t,
	\end{align*}
	where $Z_t$ is the driftless geometric Brownian motion defined below \eqref{X^* with Z}, and the inequality follows from \eqref{M dominates}. It follows that 
	\begin{align*}
	e^{-\Lambda^* t}\E\left[e^{-\alpha\int_{0}^{t}M_{s}ds}(X_{t}^{*})^{1-\gamma}\right] &\le  x^{1-\gamma} e^{\left((1-\gamma)\left(r+\frac{1}{2\gamma}\left(\frac{\mu}{\sigma}\right)^{2} - K\right)-\Lambda^*\right) s^*}  e^{\left((1-\gamma)\left(r+\frac{1}{2\gamma}\left(\frac{\mu}{\sigma}\right)^{2}\right)-\Lambda^*\right)(t-s^*)}\\
	&= x^{1-\gamma} e^{\left((1-\gamma)\left(r+\frac{1}{2\gamma}\left(\frac{\mu}{\sigma}\right)^{2} - K\right)-\Lambda^*\right) s^*}  e^{-\gamma k^*(t-s^*)}\to\ 0,\quad \hbox{as}\ t\to\infty, 
	\end{align*}
	where the second line follows from a direct calculation using the definition of $\Lambda^*$ in \eqref{Lambda^*}, and the convergence is due to $k^*>0$.  
	On the other hand, similarly to \eqref{X^* with Z}, we rewrite \eqref{X^*'} as
\begin{align*}
	(X^{*}_t)^{1-\gamma}= x^{1-\gamma}\text{exp}\left(\int_0^t (1-\gamma)\left(r+\frac{1}{2\gamma}\left(\frac{\mu}{\sigma}\right)^{2} - {u}_\beta(me^{\beta s}) \right) ds\right)\cdot Z_t,
	\end{align*}
where $Z$ is again the driftless geometric Brownian motion defined below \eqref{X^* with Z}. By Burkh\"{o}lder-Davis-Gundy's inequality, we obtain the estimate in \eqref{BDG} with $-\tilde c_0(m)$ therein replaced by ${u}_\beta(me^{\beta t})$. This implies $\E\big[\sup_{s\in[0,t]}(X_{s}^{*})^{1-\gamma}\big]<\infty$, by the inequality preceding \eqref{do}. 
Under $\E\big[\sup_{s\in[0,t]}(X_{s}^{*})^{1-\gamma}\big]<\infty$ and \eqref{Permissible}, the same argument as in Case (i)-2 of the proof of Theorem~\ref{Thm:AH} shows that $W^*_t:= w(X^*_t,M^*_t)$ belongs to $\cE^{h^*}_{k^*}.$ 

\item {\bf Case 3:} $\gamma>1$ and $\zeta=1$. By \eqref{u_q}, ${u}_\beta(m) \equiv k^*>0$. Then, in view of \eqref{X^*'}, we can apply the same arguments as in Proposition \ref{prop:NoAging} to show that $(X^*)^{1-\gamma}$ satisfies \eqref{TransversalityCondition}. With this, we may argue as in Case (i)-3 in the proof of Theorem~\ref{Thm:AH} to obtain $W^*_t:= w(X^*_t,M^*_t)\in\cE^{h^*}_{k^*}.$ 
\end{itemize}
Finally, by applying It\^{o}'s formula to $W^*_t$ and using \eqref{dodo}, a consequence of \eqref{do} and $\pi^*_t\equiv \frac{\mu}{\gamma\sigma^2}$, we argue as in the proof of Theorem~\ref{Verification} that $W^*_t$ is a solution to \eqref{BSDE_V}. Also, the same calculation as in \eqref{W*<}, with $u^*$ therein replaced by $u_\beta$, can be carried out, 
	thanks to $u_\beta \ge \tilde c_0$ by \eqref{ANHbound}.  
	This shows that $W^*$ satisfies \eqref{comparison_condition} with $k=k^*$. Hence, $(c^*,h^*)$ is $k^*$-admissible, and we can conclude $(c^*,\pi^*,h^*)\in\cP$. By Theorem \ref{Verification}, $v(x,m)=w(x,m)$ and $(c^*,\pi^*,h^*)$ optimizes \eqref{problem'}. 			
	

\small		
\bibliographystyle{siam}
\bibliography{refsMathSci}
\end{document}